\documentclass[pra,twocolumn,showpacs,showkeys]{revtex4-2}

\usepackage{bbm}
\usepackage{bm}
\usepackage{dcolumn}
\usepackage{enumitem}
\usepackage{graphicx}
\usepackage[caption=false]{subfig}
\usepackage[bookmarksnumbered=true]{hyperref} 
\hypersetup{
  colorlinks = true,
  linkcolor = blue,
  anchorcolor = blue,
  citecolor = blue,
  filecolor = blue,
  urlcolor = blue
}
\usepackage{mathtools}
\mathtoolsset{showonlyrefs}
\usepackage{microtype}
\usepackage{siunitx}
\usepackage{tikz}
\usetikzlibrary{arrows,automata,chains,matrix}
\usepackage{todonotes}
\usepackage{booktabs}
\usepackage{amssymb, amsthm}

\newtheorem{theorem}{Theorem}
\newtheorem{lemma}{Lemma}
\newtheorem{heuristic}{Heuristic}

\DeclarePairedDelimiter{\floor}{\lfloor}{\rfloor}
\DeclarePairedDelimiter{\ecnd}{\mathbb{E}_{t-1}[\,}{]\,}

\DeclareSIUnit{\debye}{Debye}

\usepackage{dsfont} 


\renewcommand{\d}[1]{\ensuremath{\operatorname{d}\!{#1}}}
\newcommand{\e}[1]{ {\mathrm{e}}^{ #1 } }
\newcommand{\im}{\mathrm{i}}

\newcommand{\expectation}[1]{ \mathbb{E} [ #1 ] }

\newcommand{\expectationBig}[1]{ \mathbb{E} \Bigl[ #1 \Bigr] }

\newcommand{\vect}[1]{ \boldsymbol{#1} }

\newcommand{\indicator}[1]{ \mathds{1} [ #1 ] }

\newcommand{\process}[2]{ \{ #1 \}_{ #2 } }

\newcommand{\pnorm}[2]{ \| #1 \|{}_{#2} }

\newcommand{\probability}[1]{ \mathbb{P} [ #1 ] }
\newcommand{\probabilityBig}[1]{ \mathbb{P} \Bigl[ #1 \Bigr] }

\newcommand{\naturalNumbersPlus}{ \mathbb{N}_{+} }
\newcommand{\naturalNumbersZero}{ \mathbb{N}_{0} }
\newcommand{\realNumbers}{ \mathbb{R} }

\newcommand{\iterand}[2]{ #1^{[#2]} }

\newcommand{\refFigure}[1]{{\textrm{Figure~\ref{#1}}}}
\newcommand{\refTable}[1]{{\textrm{Table~\ref{#1}}}}

\newcommand{\refTheorem}[1]{{\textrm{Theorem~\ref{#1}}}}

\newcommand{\refLemma}[1]{{\textrm{Lemma~\ref{#1}}}}

\newcommand{\refSection}[1]{{\textrm{Section~\ref{#1}}}}
\newcommand{\refAppendixSection}[1]{\textrm{Appendix~\ref{#1}}}

\def\eqcom#1{\overset{\textnormal{(#1)}}}

\newcommand{\itr}[2]{ \iterand{#1}{#2} }

\newcommand{\ket}[1]{ | #1 \rangle }


\def\({{\Bigl(}}
\def\){{\Bigr)}}


\newcommand{\ba}{\begin{array}}
\newcommand{\ea}{\end{array}}

\newcommand{\eps}{\varepsilon}

\newcommand{\xdeleted}[1]{\deleted{}} 

\allowdisplaybreaks
\def\jaron#1{\textcolor{red!75!black}{#1}}

\usepackage[acronym]{glossaries}
\makeglossaries

\newacronym{DRIG}{DRIG}{Decomposed Random Intersection Graph}
\newacronym{ERRG}{ERRG}{Erd\"{o}s--R\'{e}nyi Random Graph}
\newacronym{ODE}{ODE}{Ordinary Differential Equation}
\newacronym{RSA}{RSA}{Random Sequential Adsorption}
\newacronym{RIG}{RIG}{Random Intersection Graph}
\newacronym{RGG}{RGG}{Random Geometric Graph}
\newacronym[longplural={Quantum Mechanical Models of a Rydberg Gas}]{QMMRG}{QMMRG}{Quantum Mechanical Model of a Rydberg Gas}

\makeatletter
\newcommand*{\glsplainhyperlink}[2]{%
  \colorlet{currenttext}{.}
  \colorlet{currentlink}{\@linkcolor}
  \hypersetup{linkcolor=currenttext}
  \hyperlink{#1}{#2}%
  \hypersetup{linkcolor=currentlink}
}
\let\@glslink\glsplainhyperlink  
\makeatother

\begin{document}

\title{Modeling Rydberg Gases using Random Sequential Adsorption on Random Graphs}

\author{Daan Rutten}
\affiliation{%
H. Milton Stewart School of Industrial and Systems Engineering \\
Georgia Institute of Technology, Atlanta, USA
}
\author{Jaron Sanders}
\affiliation{%
Department of Mathematics \& Computer Science \\
Eindhoven University of Technology, PO Box 513, 5600 MB Eindhoven, The Netherlands
}

\date{\today}

\begin{abstract}
The statistics of strongly interacting, ultracold Rydberg gases are governed by the interplay of two factors: geometrical restrictions induced by blockade effects, and quantum mechanical effects. To shed light on their relative roles in the statistics of Rydberg gases, we compare three models in this paper: a quantum mechanical model describing the excitation dynamics within a Rydberg gas, a \gls{RSA} process on a \gls{RGG}, and a \gls{RSA} process on a \gls{DRIG}. The latter model is new, and refers to choosing a particular subgraph of a mixture of two other random graphs. Contrary to the former two models, it lends itself for a rigorous mathematical analysis; and it is built specifically to have particular structural properties of a \gls{RGG}. We establish for it a fluid limit describing the time-evolution of number of Rydberg atoms, and show numerically that the expression remains accurate across a wider range of particle densities than an earlier approach based on an \gls{RSA} process on an \gls{ERRG}. Finally, we also come up with a new heuristic using random graphs that gives a recursion to describe a normalized pair-correlation function of a Rydberg gas. Our results suggest that even without dissipation, on long time scales the statistics are affected most by the geometrical restrictions induced by blockade effects, while on short time scales the statistics are affected most by quantum mechanical effects. 
\end{abstract}

\pacs{02.50.Ga, 32.80.Rm}

\keywords{Rydberg gases, random graphs, random sequential adsorption, exploration processes, jamming limits}

\maketitle

\section{Introduction}
\label{sec:Introduction}

Ultra cold gases with atoms in highly excited states---Rydberg atoms---have attracted substantial interest over recent years for their applications in quantum computing and the study of nonequilibrium phase transitions \cite{lukin_dipole_2001,jaksch_fast_2000,weimer_quantum_2008,comparat_dipole_2010,saffman_quantum_2010}. Rydberg atoms are significant because a powerful Van der Waals interaction causes the atoms to feel mutual interactions over mesoscopic distances (tens of micrometers); a single Rydberg atom in a gas causes energy level shifts that make it impossible for neighboring atoms to excite to the same state. This dipole blockade leads to intriguing networked and complex spatial behavior, resulting in effects linked to fundamental problems in condensed matter physics \cite{lukin_dipole_2001,liebisch_atom_2005,weimer_quantum_2008,viteau_cooperative_2012,hofmann_sub-poissonian_2013,malossi_full_2014,schauss_crystallization_2015, letscher2016bistability}. For an overview of applications, we refer the interested reader to \cite{journal_of_physics_B_rydberg_2016}. 

The computational effort to accurately simulate the quantum mechanical behavior of Rydberg gas grows exponentially in the number of atoms. Precise numerical calculations are therefore commonly limited to the order of ten atoms, although approximations can be used to raise this number \cite{robicheaux_many-body_2005,honing_steady-state_2013,petrosyan_dynamics_2013}. Theoretically, there has therefore been focus on models that are more tractable: for example, rate equations were proposed to describe the Rydberg gas \cite{ates_many-body_2007,ates_electromagnetically_2011,petrosyan2013spatial}; the Mandel Q parameter was studied under reversible dynamics \cite{ates_entropic_2012} and using \gls{RSA} processes on \glspl{ERRG} \cite{sanders2015sub}; and the physics of Rydberg atoms has e.g.\ been related to a nonlinear optical polarizability model \cite{wu2017quantum} as well as glassy soft-matter models \cite{perez2018glassy,gutierrez2019physical}. Also, ties to classical statistics can be found in related Ising models, dissipative approaches, and open quantum systems \cite{overbeck2017multicritical,foss2017solvable,macieszczak2020theory,pistorius2020quantum}.

The focus of this paper is on a new, generalized variant of the \gls{RSA} process described in \cite{sanders2015sub}, which is setup to mimic the Rydberg gas even more closely. Recall that the canonical, continuum two-dimensional \gls{RSA} process is that of randomly throwing disks of radius $r > 0$ one by one in a two-dimensional box, and in such a way that the disks do not overlap \cite{evans_random_1993,senger2000irreversible}. The spatial correlations in \gls{RSA} processes are notoriously challenging to analyze on such continuum, there being only an exception for the one-dimensional case (a line) \cite{renyi1958one}. To bypass this mathematical difficulty, one may opt to conduct \gls{RSA} on a random graph instead; and in fact, such approach was used to provide a closed-form expression for the Mandel Q parameter of a Rydberg gas as well as a description of the average number of Rydberg atoms over time \cite{sanders2015sub}. The comparison to experiments there shows that such model is capable of describing the statistics of a Rydberg gas accurately. 

This accuracy is surprising in light of the facts that \gls{RSA} and random graphs both belong to the realm of classical probability theory, and the vertices of e.g.\ the \gls{ERRG} \cite{erdos_random_1959} have no property corresponding to a physical position of a particle. This raises two fundamental questions. First, if we purposely construct a random graph that shares structural properties of the Rydberg gas, how effective can a \gls{RSA} process on it be at describing the excitation dynamics of the Rydberg gas? If there is a natural graph structure underlying the Rydberg atoms then the random graph should be designed to capture it. This is however challenging to achieve, because most random graphs do not have a geometrical structure (usually in order to remain mathematically tractable), and as we add structure we typically complicate the analysis. Second, which factor dominates the statistics of a Rydberg gas? When a Rydberg gas is carefully created in a laboratory so as to fully maintain its quantum mechanical behavior, the statistics of the Rydberg atoms will be influenced by the interplay of two factors: geometrical restrictions induced by blockade effects, and quantum mechanical effects. If geometrical restrictions primarily dictate the excitation dynamics of Rydberg gases, then even in a regime in which the Rydberg gas behaves completely quantum mechanically and without dissipation, a classical model capturing primarily the geometrical restrictions may still be effective in accurately describing the Rydberg gas' statistics.

We aim to shed light on these questions by comparing the statistics predicted by different models. Concretely, we implement numerically:
\begin{itemize}[topsep=2pt,itemsep=2pt,partopsep=2pt,parsep=2pt]
\item[A.] a \gls{QMMRG}; and
\item[B.] a \gls{RSA} process on a \gls{RGG}, 
\end{itemize}
in order to study the influence of quantum mechanical effects relative to the geometrical restrictions induced by blockade effects. We compare then the statistical predictions of models A and B to the statistical prediction of 
\begin{itemize}[topsep=2pt,itemsep=2pt,partopsep=2pt,parsep=2pt]
\item[C.] a \gls{RSA} process on a \gls{DRIG},
\end{itemize}
a new type of random graph purposely constructed to more accurately describe a \gls{RGG} than an \gls{ERRG} while simultaneously maintaining mathematical tractability. The idea here is that model C is constructed to mimic model B as closely as possible and to have mathematically analyzable features, while in turn model B is constructed to mimic model A.

Models A and B are challenging to mathematically analyze---it is hard, if not impossible, to derive an analytical expression for e.g.\ the ultimate number of Rydberg atoms. Model C on the other hand lends itself better for a rigorous analysis. For example, \refTheorem{thm:Jamming_limit_converges_in_probability} in this paper gives an implicit expression for the ultimate number of excited particles on a \gls{DRIG} when conducting \gls{RSA} on it. \emph{A fortiori,} \refTheorem{thm:Fluid_limit_of_Dt} proves that the fluid limit of \gls{RSA} on a \gls{DRIG} is given by a system of integral equations. Because model C mimics model B more accurately than an \gls{ERRG}, this automatically also provides an improved approximation (see \refFigure{fig:jamminglimit_error}) to the ultimate number of Rydberg atoms in model A when compared to the closed-form expression derived in \cite{sanders2015sub}.

Theorems~\ref{thm:Jamming_limit_converges_in_probability}--\ref{thm:Fluid_limit_of_Dt} also contribute to the mathematical perspective. The study of exploration processes of the \gls{RSA} type on random graphs has seen interest in recent years \cite{sanders2015sub,dhara2016generalized,bermolen2017jamming,bermolen2017scaling,dhara2018corrected,jonckheere2018asymptotic,meyfroyt2018degree,bermolen2020sequential}. The tools necessary from probability theory to analyze said processes include e.g.\ random graph couplings, fluid and diffusion limits, stochastic differential equations, and martingale analyses, and the current work extends this framework in two ways. First, we prove a fluid limit of an exploration process on \emph{a subgraph of a mixture} of two (random) graphs. This new approach requires us to determine a plethora of fluid limits, one for each of the relevant parts of the mixture of (random) graphs, see \refTheorem{thm:Fluid_limit_of_a_RIG_plus_isolated_vertices_exploration} and \refLemma{lem:Uniform_convergence_in_probability_of_the_scaled_processes} in its proof in \refAppendixSection{sec:Appendix__Uniform_convergence_in_probability}; and then combine these fluid limits carefully using a nonlinear time transformation that depends on an inverse exploration process, see the explanation above \refTheorem{thm:Fluid_limit_of_Dt} and its proof in \refAppendixSection{sec:Proof_of_time_rescaling_for_DRIG}. One further technical caveat of mixing exploration processes is that we lose a global Lipschitz' continuity property of the fluid limit. The canonical approach to prove a fluid limit, that is to utilize Lipschitz' continuity prior to an application of Gr\"{o}nwell's inequality, therefore breaks down. Our second contribution is the circumvention of this issue by conditioning only on sample paths that satisfy a local Lipschitz' continuity, and showing that these sample paths occur with probability one in a large graph limit: see \refAppendixSection{sec:Appendix__Uniform_convergence_in_probability}.

This paper is structured as follows. You have just read \refSection{sec:Introduction}, that is, our introduction. Next, \refSection{sec:Models} introduces models A, B, and C. Thereafter \refSection{sec:Results} describes a graph exploration algorithm that constructs a mixture of a so-called \gls{RIG} \emph{plus} a graph of isolated vertices (constituting a \gls{DRIG}), while simultaneously conducting \gls{RSA} on it. \refSection{sec:Results} then gives our theoretical results for model C, that is, Theorems~\ref{thm:Jamming_limit_converges_in_probability}--\ref{thm:Fluid_limit_of_Dt}. \refSection{sec:Heuristics} discusses a new heuristic using random graphs to describe a normalized pair correlation function of a Rydberg gas, and \refSection{sec:Numerics} numerically compares the statistics of models A, B, and C to each other as well as to the heuristic. Finally, \refSection{sec:Conclusion} concludes with a possibility for future research. The Appendices give rigorous mathematical proofs of Theorems~\ref{thm:Jamming_limit_converges_in_probability}--\ref{thm:Fluid_limit_of_Dt}.

\section{Models}
\label{sec:Models}

\glsreset{QMMRG}
\subsection{\texorpdfstring{\gls{QMMRG}}{QMMRG}}
\label{sec:quantum}

Consider an ensemble of $n \in \naturalNumbersPlus$ atoms that are distributed uniformly at random over a two-dimensional box $V = [a,b] \times [c,d] \subset \realNumbers^2$ with periodic boundary conditions (essentially, a torus). Each atom $i \in \{ 1, \ldots, n \}$ can be in either its ground level, $\sigma_i = 0$ say; or an excited level, $\sigma_i = 1$ say. The energy of level $l \in \{ 0, 1 \}$ will be denoted by $\epsilon_l$. A particular pure quantum mechanical state of the system is e.g.\ $\ket{ \sigma } = \ket{ \sigma_1 \cdots \sigma_n }$.

The atoms will undergo a dynamical process driven by a Hamiltonian $\mathcal{H}(t)$. We assume that the Hamiltonian is made up of three contributing terms, 
\begin{equation}
\mathcal{H}(t) 
= \mathcal{H}_a + \mathcal{H}_l(t) + \mathcal{H}_{\mathrm{vdW}},
\label{eqn:Hamiltonian}
\end{equation}
say. The first term $\mathcal{H}_a$ describes the total energy contribution of the particles being in the ground or excited level. For $\sigma \in \{0,1\}^n$, it satisfies
\begin{equation}
\langle \sigma \rvert \mathcal{H}_a \lvert \sigma \rangle 
= \sum_i \epsilon_{\sigma_i}.
\end{equation}
The second term $\mathcal{H}_l(t)$ describes the exposure of the atoms to a laser field. The laser facilitates the excitation process from the ground to the excited level, as well as the stimulated emission process from the excited to the ground level. Concretely, for any two states $\sigma, \zeta \in \{ 0, 1 \}^n$,
\begin{align}
&
\langle \sigma \rvert \mathcal{H}_l(t) \lvert \zeta \rangle 
\nonumber \\ &
= 
\begin{cases}
\frac{\hbar\Omega}{2} \exp{ \bigl( \im \omega t \sum_j ( \zeta_j - \sigma_j ) \bigr) } & \textnormal{if } \sum_j | \sigma_j - \zeta_j | = 1, \\
0 & \textnormal{otherwise}. \\
\end{cases}
\end{align}
Here, $\Omega$ denotes the Rabi frequency induced by the laser field, $\hbar$ is the reduced Planck constant, and $\omega$ represents the atom's natural frequency. Finally, the third term $\mathcal{H}_{\mathrm{vdW}}$ describes a strong Van der Waals interaction between any two atoms in the excited level. For any $\sigma \in \{ 0,1 \}^n$, we have 
\begin{equation}
\langle \sigma \rvert \mathcal{H}_{\mathrm{vdW}} \lvert \sigma \rangle = \sum_{i}\sum_{j \neq i} \indicator{ \sigma_i=\sigma_j=1 } \frac{C_6}{2\lVert r_i - r_j \rVert_2^6}.
\end{equation}
Here, $C_6$ denotes the Van der Waals coefficient. Whenever two excited atoms are close, the inverse relationship on the distance in the Van der Waals potential creates states with high energy resulting in a blockade effect. This geometric effect disallows atoms close to an excited atom to populate the excited level \cite{petrosyan2013spatial,petrosyan_dynamics_2013,ates_many-body_2007,robicheaux_many-body_2005}.

Together with Schr\"{o}dinger's equation, the Hamiltonian in \eqref{eqn:Hamiltonian} yields a system of differential equations 
\begin{equation}
\im \hbar \frac{ \partial }{ \partial t } \ket{ \Psi(t) } = \mathcal{H}(t) \ket{ \Psi(t) },
\quad
\ket{ \Psi(0) } = \ket{ 0 \cdots 0 }
\label{eqn:Schrodingers_equation}
\end{equation}
where 
$
\ket{\Psi(t)}
= \sum_{ \sigma \in \{0,1\}^n } c_{\sigma}(t) \ket{ \sigma },
$
which we will solve numerically for the complex coefficients $c_\sigma \in \mathbb{C}$, $\sigma \in \{0,1\}^n$. To tackle the exponential increase in computational effort, the state space will be truncated to a set of reasonably reachable states depending on the Van der Waals potential of the state. \emph{Viz.}, only states with a Van der Waals potential below a certain threshold have been included in the analysis. Our numerical verification in \refAppendixSection{app:simulation} indicates that applying such state truncation has no visible effect on the measured statistics.

\glsreset{RGG}
\glsreset{RSA}
\subsection{\texorpdfstring{\glspl{RGG}}{RGG}, and \texorpdfstring{\gls{RSA}}{RSA}}
\label{sec:randomgraph}

Consider an ensemble of $n \in \naturalNumbersPlus$ vertices that are distributed similarly uniformly at random over a two-dimensional box $V$ with periodic boundary conditions. Denote the random position of a vertex $u \in \{ 1, \ldots, n \}$ by $r_u \in V$. A \gls{RGG} is then constructed by drawing an edge between every pair of vertices that are within a distance of each other less than some fixed blockade radius $r_b \geq 0$. In other words, the \gls{RGG} $G^{\mathrm{RGG}} = ( \mathcal{V}, \mathcal{E} )$, where
\begin{equation}
\mathcal{V} = \{ 1, \ldots, n \}, 
\enskip
\textnormal{and}
\enskip
\mathcal{E} = \bigl\{ (u,v) \in \mathcal{V} \big\vert \pnorm{ r_u - r_v }{2} \leq r_b \bigr\}.
\end{equation}

We now consider the process of \gls{RSA} on such \gls{RGG} as a model for the excitation process. In \gls{RSA}, we assign each of the vertices one of the labels `\textsc{unaffected}, \textsc{excited}, \textsc{blocked}'. At iteration step $t = 0$, we label all of the vertices \textsc{unaffected}. At iteration step $t > 0$, we pick a vertex uniformly at random from the set of all vertices labeled \textsc{unaffected}. This vertex is then relabeled \textsc{excited}, and we relabel all of its \textsc{unaffected} neighbors \textsc{blocked}. The relabeling process continues until the random iteration step $T_n \in \{ 1, 2, \ldots, n \}$ at which all vertices are either labeled \textsc{excited} or \textsc{unaffected}. Borrowing terminology from \cite{sanders2015sub,bermolen2017jamming,bermolen2017scaling}, the resulting state is called a \emph{jamming limit}, and $T_n$ is a so-called \emph{hitting time}. Compared to the \gls{QMMRG}, we note that a vertex that is \textsc{unaffected} within the \gls{RSA} process corresponds to an atom that is still in a ground state because it has not \textsc{excited} yet, and moreover, this atom has not yet been \textsc{blocked} by another atom.

To ensure that the dynamics of $\ket{\Psi(t)}$ in the \gls{QMMRG} and the process of \gls{RSA} on a \gls{RGG} give a somewhat comparable heuristic, we opt to match model parameters. Concretely, we will look for some function $r_b = r_b(C_6)$ that gives similar behavior as follows. Consider the scenario in which there is no interaction in the \gls{QMMRG}, i.e., when $C_6 = 0$. Each atom in the \gls{QMMRG} will then undergo a Rabi oscillation, independently of all other atoms. In particular, the long-term mean fraction of excited particles satisfies
\begin{equation}
\lim_{s \to \infty} \frac{1}{s} \int_0^s \sum_{ \sigma \in \{0,1\}^n } | c_{\sigma}(t) |^2 \frac{ \sum_i \sigma_i }{n} \d{t}
= \tfrac{1}{2}.
\end{equation}
Suppose now that we set $r_b = 0$ in the \gls{RGG}. Each vertex would eventually be labeled \textsc{excited} in the \gls{RSA} process. Consequently the long-term mean fraction of excited particles would then equal one, in discrepancy with the \gls{QMMRG}. Supposing instead that $r_b > 0$, we would then indeed have that the long-term mean fraction of excited vertices is strictly less than one. We therefore opt for a model blockade radius of the form $r_b = \alpha + ( C_6 / \Omega )^{1/6}$ with $\alpha > 0$. The physics behind the second term are e.g.\ discussed in \cite{comparat_dipole_2010}. The parameter $\alpha$ should be chosen such that the long-term mean fraction of excited particles in the RGG is $\tfrac{1}{2}$ if $r_b = \alpha$. Recall that for a \gls{RGG} no closed-form expression is available in literature for this number; hence, we have numerically solved for it and found that $\alpha \approx 2.03$.

\glsreset{DRIG}
\subsection{\texorpdfstring{\glspl{DRIG}}{DRIGs}}
\label{sec:DRIG}

We now introduce the \gls{DRIG}. The idea is to decompose a \gls{RIG} and replace its set of isolated vertices by a set of isolated vertices of which we control the size precisely. The construction is depicted schematically in \refFigure{fig:drig}.

\begin{figure*}
\centering 
\subfloat[]{
  \includegraphics[width=0.24\textwidth]{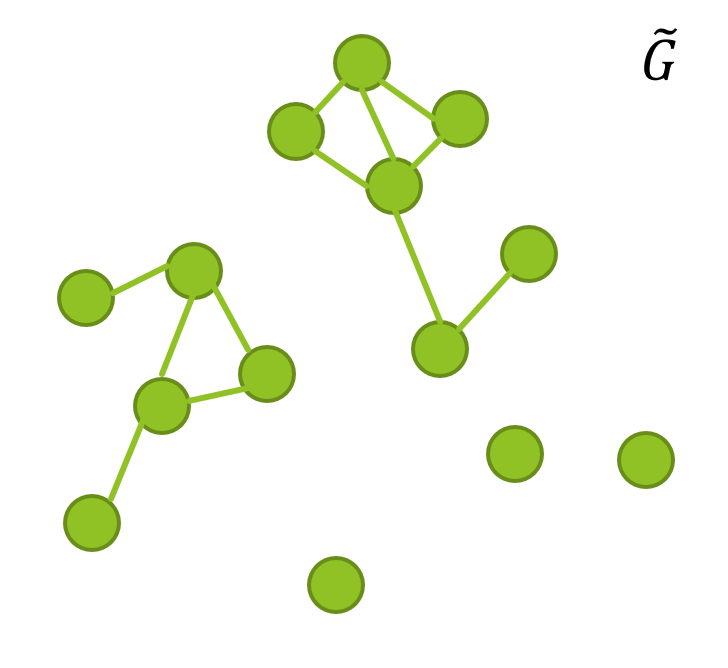}
}\qquad
\subfloat[]{
  \includegraphics[width=0.24\textwidth]{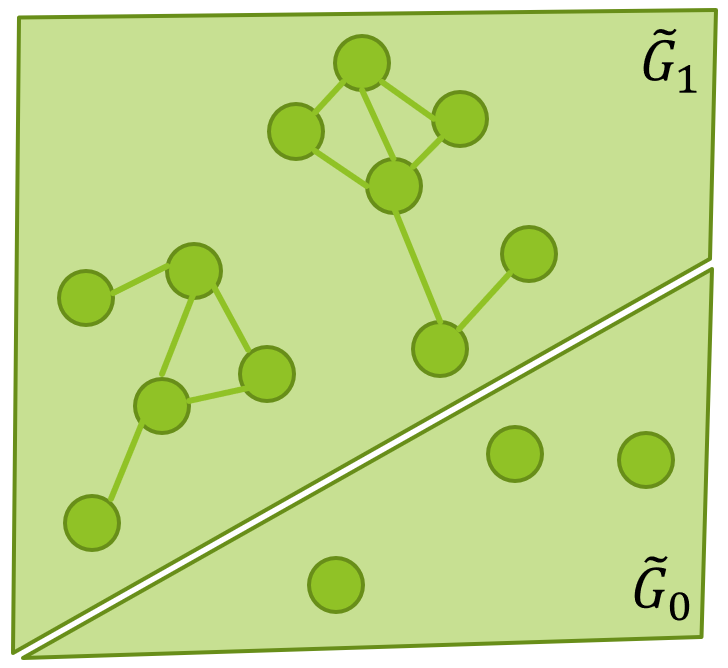}
}\qquad
\subfloat[]{
  \includegraphics[width=0.24\textwidth]{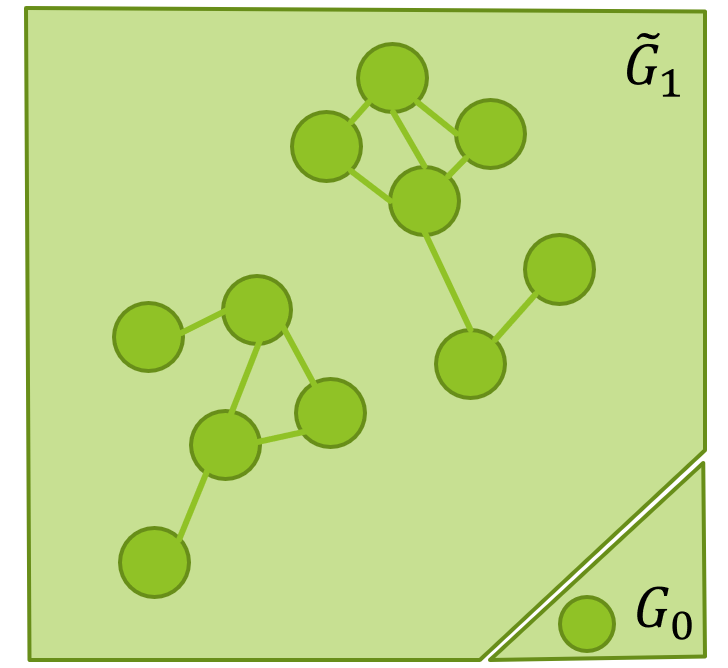}
}
\caption{Schematic depiction of the construction of a \gls{DRIG}. (a) First, a \gls{RIG} is constructed, which will usually be oversized. (b) Next, this \gls{RIG} is decomposed into the subgraph of isolated vertices and the remaining subgraph. (c) Finally, the subgraph of isolated vertices is removed and replaced by a graph of isolated vertices of specified size, which will usually be smaller.}
\label{fig:drig}
\end{figure*}

We will denote a \gls{RIG} by $G^{\mathrm{RIG}}( n^{\mathrm{RIG}}, \beta, \gamma )$; it is of size $n^{\mathrm{RIG}} \in \naturalNumbersPlus$ and has parameters $\beta, \gamma > 0$. It has a set of enumerable vertices, e.g.\ $\mathcal{V}^{\mathrm{RIG}} = \{ 1, \ldots, n^{\mathrm{RIG}} \}$, and we also associate a set of enumerable attributes $\mathcal{A}^{\mathrm{RIG}} = \{ 1, \ldots, \lfloor \beta n^{\mathrm{RIG}} \rfloor \}$ to it. To construct the \gls{RIG}, randomly connect every vertex $u \in \mathcal{V}^{\mathrm{RIG}}$ to each of the individual attributes $a \in \mathcal{A}^{\mathrm{RIG}}$ independently and with probability $\gamma / n^{\mathrm{RIG}}$. For $u \in \mathcal{V}^{\mathrm{RIG}}$, let $\mathcal{A}_v \subseteq \mathcal{A}$ be the set of attributes connected to vertex $u$. We now define the edge set of the \gls{RIG} as $\mathcal{E}^{\mathrm{RIG}} = \{ (u,v) \mid \mathcal{A}_u \cap \mathcal{A}_v \neq \varnothing \}$, i.e., the set of pairs of vertices that have at least one attribute in common.

We will denote a \gls{DRIG} by $G^{\mathrm{DRIG}}( n^{\mathrm{DRIG}}, \beta, \gamma, c)$, to indicate that it is of size $n^{\mathrm{DRIG}} \in \naturalNumbersPlus$ and has parameters $\beta, \gamma, c > 0$. Given values for these parameters, we construct the \gls{DRIG} as follows. First, set $n^{\mathrm{RIG}} = \sigma n^{\mathrm{DRIG}}$ where we introduce
\begin{equation}
\sigma 
= \frac{ 1 - e^{-c} }{ 1 - \xi_0 },
\quad 
\textnormal{and} 
\quad
\xi_0 
= \exp{ \bigl( -\beta \gamma (1 - e^{-\gamma}) \bigr) },
\end{equation}
and then construct $G^{\mathrm{RIG}}( n^{\mathrm{RIG}}, \beta, \gamma )$ as described above. Now decompose this \gls{RIG} into the subgraph $G^{\mathrm{RIG}}_0$ that consists of all of its isolated vertices, and the subgraph $G^{\mathrm{RIG}}_{1+}$ that consists of all of its vertices that have at least one neighbor. We now essentially remove the former subgraph and replace it: set $G^{\mathrm{DRIG}} = G^{\mathrm{DRIG}}_0 \cup G^{\mathrm{RIG}}_{1+}$, where $G^{\mathrm{DRIG}}_0$ is a graph consisting of $n_0^{\mathrm{DRIG}} = \lfloor e^{-c} n^{\mathrm{DRIG}} \rfloor$ isolated vertices.

The parameters of a \gls{DRIG} can be chosen to match three major features of the \gls{RGG}: (i) the mean number of neighbors of a vertex; (ii) the probability that a vertex is isolated; and (iii) the probability that a triangle occurs in the graph. For each of the four types of graphs that we have discussed so far, \refTable{tab:rg_comparison} gives the explicit expressions for each of these quantities. The ability to match feature (iii) is new compared to an earlier approach that uses an \gls{ERRG} \cite{sanders2015sub}. The \gls{DRIG} therefore captures more structural information of a \gls{RGG} than an \gls{ERRG}.

\begin{table}[hbtp]
\centering
\begin{tabular}{cccc}
\toprule
& $\mathbb{E}[N_v]$ & $\probability{ \{u,v\}, \{v,w\}, \{w,u\} \in \mathcal{E} }$ & $\probability{ N_v = 0 }$ \\
\midrule
\gls{RGG} & $c$ & $2 \int\limits_0^1 x I_{1 - \frac{x^2}{4}} \left( \frac{3}{2}, \frac{1}{2} \right) dx$ & $\e{-c}$ \\
\gls{ERRG} & $c$ & $0$ & $\e{-c}$ \\
\gls{RIG} & $\beta \gamma^2$ & $(1 + \beta\gamma)^{-1}$ & $\xi_0$ \\
\gls{DRIG} & $ \frac{1 - e^{-c}}{1 - \xi_0} \beta \gamma^2$ & $(1 + \beta\gamma)^{-1}$ & $\e{-c}$ \\
\bottomrule
\end{tabular}
\caption{Properties (i), (ii), and (iii) of a \gls{DRIG} compared to the other types of random graphs in the limit of $N \rightarrow \infty$. Here, $u \neq v \neq w$, $N_v$ denotes the number of neighbors of a vertex $v$ chosen uniformly at random, and $I_\cdot(\cdot,\cdot)$ the normalized incomplete beta integral.}
\label{tab:rg_comparison}
\end{table}

\section{Results}
\label{sec:Results}

\subsection{Time-dependent evolution of the for a \texorpdfstring{\gls{DRIG}}{DRIG} relevant quantities when conducting \texorpdfstring{\gls{RSA}}{RSA} on a \texorpdfstring{\gls{RIG}}{RIG}}
\label{sec:timedependent}

We now analyze the time-dependent behavior of a graph exploration algorithm that iteratively simultaneously constructs a mixture of a \gls{RIG} and a graph consisting solely of isolated vertices, as well as conduct \gls{RSA} on this graph. The \gls{RSA} process mimics the excitation process that happens within (a classical counterpart to) the \gls{QMMRG}. This idea of using a graph exploration algorithm to describe the excitation process in a Rydberg gas was first described in \cite{sanders2015sub}, and here we first generalize the approach for application to a mixture of a \gls{RIG} and a graph consisting solely of isolated vertices, and second lift the results through a time-transformation for application to a \gls{DRIG}.

\subsubsection{Graph exploration algorithm}
\label{sec:Graph_exploration_algorithm}

Suppose that parameters $n^{\mathrm{DRIG}} \in \naturalNumbersPlus$, $\beta, \gamma, c > 0$ are fixed and given. The algorithm will give each vertex one of the labels `\textsc{unaffected}, \textsc{excited}, \textsc{blocked}' iteratively as follows:

\emph{Initialization.} First, we label all vertices as being \textsc{unaffected}. The initial number of \textsc{unaffected} vertices in $G^{\mathrm{RIG}} = G_0^{\mathrm{RIG}} \cup G_{1+}^{\mathrm{RIG}}$ and $G^{\mathrm{DRIG}}_0$ are therefore 
\begin{equation}
U^{\mathrm{RIG}}(0) 
= n^{\mathrm{RIG}},
\enskip
U_{0}^{\mathrm{DRIG}}(0) 
= \lfloor \e{-c} n^{\mathrm{DRIG}} \rfloor
\end{equation}
say, respectively. The initial number of \textsc{excited} vertices in $G_{1+}^{\mathrm{RIG}}$ and $G_0^{\mathrm{DRIG}}$ are 
\begin{equation}
X_{1+}^{\mathrm{RIG}}(0)
= X_0^{\mathrm{DRIG}}(0) 
= 0
\end{equation}
say, respectively. We will define $n = n^{\mathrm{RIG}} + \lfloor \e{-c} n^{\mathrm{DRIG}} \rfloor$ here for notational convenience, which is the total number of vertices in the mixture of the \gls{RIG} and the graph consisting solely of isolated vertices that we are about to explore. 

\emph{Exploration.} Consider now iteration $t \in \naturalNumbersPlus$. For the $t$-th step of the exploration algorithm, select if possible a vertex $V(t)$ uniformly at random from the set of \textsc{unaffected} vertices, which has size 
\begin{equation}
U^{\mathrm{RIG}}(t-1) + U_0^{\mathrm{DRIG}}(t-1).
\end{equation}
Adjust the label of this vertex to \textsc{excited}. Change any label of an \textsc{unaffected} neighbor of $V(t)$ into \textsc{blocked}. 

\noindent
\emph{Case 1:} Consider now the event that the vertex $V(t)$ was picked from the graph $G^{\mathrm{RIG}}$. If this event occurs, then the number of \textsc{unaffected} vertices will have decreased by one plus its number of \textsc{unaffected} neighbors, i.e., 
\begin{equation}
U^{\mathrm{RIG}}(t) 
= U^{\mathrm{RIG}}(t-1) - 1 - N_{V(t)}(t-1),
\end{equation}
where $N_{V(t)}(t-1)$ denotes the number of \textsc{unaffected} neighbors of $V(t)$ after iteration $t-1$. Furthermore, 
\begin{equation}
U_0^{\mathrm{DRIG}}(t) = U_0^{\mathrm{DRIG}}(t-1).
\end{equation} 
The number of \textsc{excited} vertices \emph{within} $G^{\mathrm{RIG}}_{1+}$ only increases by one if $V(t)$ originated from $G^{\mathrm{RIG}}_{1+}$. In other words, 
\begin{gather}
X_{1+}^{\mathrm{RIG}}(t) = X_{1+}^{\mathrm{RIG}}(t-1) + \indicator{ N_{V(t)}(t-1) > 0 }, \nonumber \\ 
X_0^{\mathrm{DRIG}}(t) = X_0^{\mathrm{DRIG}}(t-1).
\end{gather}
Conditional on the history $\mathcal{F}(t)$, the event that $V(t)$ originates from $G^{\mathrm{RIG}}$ occurs with probability 
\begin{equation}
\frac{U^{\mathrm{RIG}}(t-1)}{U^{\mathrm{RIG}}(t-1) + U_0^{\mathrm{DRIG}}(t-1)}.
\end{equation}

\noindent
\emph{Case 2:} Consider now the complementary event: \emph{viz.}, the event that the vertex $V(t)$ came from $G^{\mathrm{DRIG}}_0$. If this event occurs, then the number of \textsc{unaffected} vertices decreases by one since this particular vertex has no neighbors, i.e., 
\begin{equation}
U^{\mathrm{RIG}}(t) 
= U^{\mathrm{RIG}}(t-1),
\enskip 
U_0^{\mathrm{DRIG}}(t) 
= U_0^{\mathrm{RIG}}(t) - 1.
\end{equation}
Moreover, the number of \textsc{excited} vertices increases by one, so 
\begin{equation}
X_{1+}^{\mathrm{RIG}}(t) 
= X_{1+}^{\mathrm{RIG}}(t-1),
\enskip
X_0^{\mathrm{DRIG}}(t) 
= X_0^{\mathrm{DRIG}}(t-1) + 1.
\end{equation} Conditional on the history $\mathcal{F}(t)$, this event occurs with probability 
\begin{equation}
\frac{U_0^{\mathrm{DRIG}}(t-1)}{U^{\mathrm{RIG}}(t-1) + U_0^{\mathrm{DRIG}}(t-1)}
\end{equation}

For the graph exploration algorithm just described, we are able to prove results on its \emph{jamming limit} and its \emph{fluid limit}.

\subsubsection{Jamming limit}

Define for $t \in \naturalNumbersZero$,
\begin{gather}
U(t) 
= U^{\mathrm{RIG}}(t) + U_0^{\mathrm{DRIG}}(t),
\quad
\textnormal{and}
\\
X(t) 
= X_0^{\mathrm{RIG}}(t) + X_{1+}^{\mathrm{RIG}}(t) + X_0^{\mathrm{DRIG}}(t).
\end{gather}
Consider now
\begin{equation}
T^* 
= \min \{ t \in \{ 0, 1, \ldots, n \} \mid U(t) = 0 \},
\end{equation}
the \emph{hitting time of zero} of the stochastic process $\process{ U(t)}{0 \leq t \leq n}$. Note that $T^*$ is a random variable and in particular that $T^*$ equals the number of \textsc{excited} vertices in the jamming limit. Leveraging this insight, we are able to prove convergence in probability of the fraction of excited vertices in the jamming limit to an implicitly characterized constant $\phi$ by conducting a hitting time analysis in \refAppendixSection{sec:Proof_that_the_jamming_limit_converges_in_probability}:

\begin{theorem}
\label{thm:Jamming_limit_converges_in_probability}
Let $T_n^* = T^* / n$. For any $\gamma > 0$, 
\begin{equation}
\probability{ | T_n^* - \phi | \geq \gamma } 
\to 0
\end{equation}
as $n \to \infty$. Here,
\begin{equation}
0 \leq \phi 
= \frac{e^{-c} + \sigma \arg\inf \bigl\{ 0 \leq s \leq 1 : \tilde{u}(s) = 0 \bigr \}}{e^{-c} + \sigma} \leq 1
\label{eqn:Inverse_expression_for_phi}
\end{equation}
where $\tilde{u}$ satisfies the following system of integral equations: for $0 \leq s \leq 1$,
\begin{align}
\tilde{u}(s)
&
= 1 - s - \int_0^s \gamma^2 \tilde{u}(y) \tilde{w}(y) \d{y},
\nonumber \\
\tilde{w}(s)
&
= \beta - \int_0^s \gamma \tilde{w}
(y) \d{y}.
\label{eqn:Simplified_ODE_for_phi}
\end{align}
\end{theorem}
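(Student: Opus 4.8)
The plan is to decouple the exploration into its \gls{RIG} and isolated-vertex components, reduce the hitting time $T^*$ to a first-passage time of the \gls{RIG} exploration alone, and then read off that passage time from a fluid limit. Since exactly one vertex is \textsc{excited} per step, $T^*$ equals the total number of \textsc{excited} vertices at the jamming limit, which splits as $T^* = R^* + I^*$ with $R^* = X_{1+}^{\mathrm{RIG}}(T^*)$ counting excitations inside $G^{\mathrm{RIG}}$ and $I^* = X_0^{\mathrm{DRIG}}(T^*)$ counting excitations among the isolated vertices. Because isolated vertices have no neighbours, they neither block nor are blocked, so every one of them is eventually \textsc{excited} and $I^* = \lfloor \e{-c} n^{\mathrm{DRIG}} \rfloor$ holds deterministically. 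Moreover, a step that picks an isolated vertex leaves the set of \textsc{unaffected} \gls{RIG} vertices untouched, so the subsequence of \gls{RIG}-steps evolves exactly as standalone \gls{RSA} on $G^{\mathrm{RIG}}(n^{\mathrm{RIG}}, \beta, \gamma)$; hence $R^*$ is equal in distribution to the jamming count of that standalone process. With $n^{\mathrm{RIG}} = \sigma n^{\mathrm{DRIG}}$ and $n = n^{\mathrm{RIG}} + \lfloor \e{-c} n^{\mathrm{DRIG}} \rfloor$, it therefore suffices to prove $R^* / n^{\mathrm{RIG}} \to s^{*} := \arg\inf\{ 0 \le s \le 1 : \tilde u(s) = 0 \}$ in probability and then substitute into $T^*/n = (R^* + I^*)/(n^{\mathrm{RIG}} + I^*)$.

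Next I would establish the fluid limit of the standalone \gls{RIG} exploration, reparametrised by the number $k$ of \gls{RIG}-excitations via $s = k / n^{\mathrm{RIG}}$. The natural state to track jointly with $U^{\mathrm{RIG}}$ is the number $W$ of attributes not yet incident to any \textsc{excited} vertex, with $W(0) = \lfloor \beta n^{\mathrm{RIG}} \rfloor$. Revealing the attribute structure on the fly, a newly excited vertex is incident to a $\mathrm{Bin}(W, \gamma/n^{\mathrm{RIG}})$ number of still-free attributes, giving the conditional drift $\mathbb{E}[\Delta W] \approx -\gamma \tilde w$; and each such attribute is shared with roughly $\gamma \tilde u$ other \textsc{unaffected} vertices, so the number of freshly \textsc{blocked} neighbours is $\approx \gamma^2 \tilde u \tilde w$ (multiple-counting of vertices sharing several attributes is lower order), giving $\mathbb{E}[\Delta U^{\mathrm{RIG}}] \approx -(1 + \gamma^2 \tilde u \tilde w)$. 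With bounded increments, a martingale-concentration / differential-equation-method argument then yields uniform-in-probability convergence of $(U^{\mathrm{RIG}}/n^{\mathrm{RIG}}, W/n^{\mathrm{RIG}})$ to the solution $(\tilde u, \tilde w)$ of \eqref{eqn:Simplified_ODE_for_phi}, with $\tilde w(s) = \beta \e{-\gamma s}$. This is precisely the content one invokes from \refTheorem{thm:Fluid_limit_of_a_RIG_plus_isolated_vertices_exploration}.

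Finally I would convert the fluid limit into convergence of the passage time. Differentiating the first line of \eqref{eqn:Simplified_ODE_for_phi} gives $\tilde u'(s) = -(1 + \gamma^2 \tilde u(s) \tilde w(s))$, so on $\{\tilde u \ge 0\}$ one has $\tilde u' \le -1$; hence $\tilde u$ is strictly decreasing, possesses a unique zero $s^{*} \in (0,1]$ (indeed $\tilde u(s) \le 1-s$ forces $s^{*} \le 1$), and crosses it transversally. Combined with the uniform convergence above, a standard first-passage argument gives $R^*/n^{\mathrm{RIG}} \to s^{*}$ in probability: the lower bound $R^* \gtrsim (s^{*}-\eps)n^{\mathrm{RIG}}$ follows because $\tilde u$ is bounded away from $0$ on $[0, s^{*}-\eps]$, while the upper bound follows because at $s^{*}-\eps$ we have $U^{\mathrm{RIG}} = O(\eps\, n^{\mathrm{RIG}})$ and each subsequent \gls{RIG}-step removes at least one \textsc{unaffected} vertex. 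Substituting $R^* \approx s^{*} \sigma n^{\mathrm{DRIG}}$ and $I^* \approx \e{-c} n^{\mathrm{DRIG}}$ into $T^*/n = (R^*+I^*)/(n^{\mathrm{RIG}}+I^*)$ and letting $n \to \infty$ produces $\phi = (\e{-c} + \sigma s^{*})/(\e{-c} + \sigma)$, as claimed.

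The main obstacle is the fluid limit of the second step: one must rigorously control the martingale fluctuations while the random attribute structure is progressively exposed, and justify the drift approximations---the binomial free-attribute count, the negligibility of vertices sharing several of the excited vertex's attributes, and the simultaneous concentration of $W$---uniformly up to the passage time. By contrast, the decoupling of the first step and the first-passage continuity of the third are comparatively routine, the latter being immediate once the transversality bound $\tilde u' \le -1$ is in hand.
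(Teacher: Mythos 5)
Your proposal is correct and follows essentially the same route as the paper's proof: the same decomposition of $T^*$ into the deterministic isolated-vertex count plus the jamming count of standalone RSA on the RIG, the same fluid limit for that standalone exploration (invoked from the proof of Theorem~\ref{thm:Fluid_limit_of_a_RIG_plus_isolated_vertices_exploration}, with Lipschitz continuity restored), and the same hitting-time argument exploiting the transversality $\tilde u' \le -1$ before substituting $n^{\mathrm{RIG}} = \sigma n^{\mathrm{DRIG}}$ to obtain $\phi$. The only blemish is notational: your $R^*$ should be $X_0^{\mathrm{RIG}}(T^*) + X_{1+}^{\mathrm{RIG}}(T^*)$ rather than $X_{1+}^{\mathrm{RIG}}(T^*)$ alone, since excited isolated vertices of the RIG also count toward its jamming number---which is in fact how you use $R^*$ in the final substitution.
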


Note that \refTheorem{thm:Jamming_limit_converges_in_probability} establishes that in this mixture of a \gls{RIG} and a graph consisting solely of isolated vertices, the final number of \textsc{excited} vertices satisfies
\begin{equation}
X(n) 
\approx \phi n.
\end{equation}
\refTheorem{thm:Jamming_limit_converges_in_probability} implies also that in a \gls{DRIG}, the final number of \textsc{excited} vertices 
\begin{align}
&
X^{\mathrm{DRIG}}(n^\mathrm{DRIG}) 
= X(n) - X_0^{\mathrm{RIG}}(n)
\nonumber \\ &
\approx \phi n - \sigma \xi_0 n^{\mathrm{DRIG}}
= (\phi( \sigma + \e{-c} ) - \sigma \xi_0) n^{\mathrm{DRIG}}
= \eta n^{\mathrm{DRIG}}
\end{align}
say. Moreover, \eqref{eqn:Inverse_expression_for_phi} together with \eqref{eqn:Simplified_ODE_for_phi} gives an implicit closed-form expression for $\phi$ and $\eta$.

\subsubsection{Fluid limits}

In fact, we are able to prove stronger fluid limits for the stochastic processes $\process{U(t)}{0 \leq t \leq n}$, $\process{X(t)}{0 \leq t \leq n}$ in \refTheorem{thm:Fluid_limit_of_a_RIG_plus_isolated_vertices_exploration}. Its proof is relegated to \refAppendixSection{sec:Proof_of_the_RIGplusOne_fluid_limit}. Fluid limits are functional laws of large numbers, and describe the function to which (an appropriately scaled variant of) the process converges to leading order. An example of the convergence of the RSA process is shown in \refFigure{fig:fluidlimit}. We require these fluid limits to ultimately describe the time-dependent behavior of the \gls{RSA} process on a \gls{DRIG} in \refTheorem{thm:Fluid_limit_of_Dt}.

\begin{figure}[hbtp]
\centering
\includegraphics[width=0.99\linewidth]{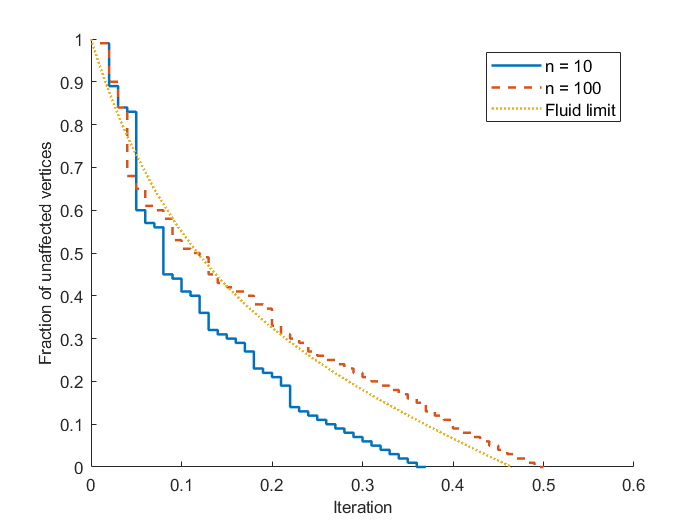}
\caption{The unaffected number of vertices of the \gls{RSA} process on the random graph for $n = 10$ and $n = 100$ vertices and the corresponding fluid limit $u(s) = u^{\mathrm{RIG}}(s) + u^{\mathrm{DRIG}}_0(s)$ in \refTheorem{thm:Fluid_limit_of_a_RIG_plus_isolated_vertices_exploration}.}
\label{fig:fluidlimit}
\end{figure}

\begin{widetext}
\begin{theorem}[Fluid limits of $U(t), X(t)$ when conducting \gls{RSA} on the mixture of a \gls{RIG} and isolated vertices]
\label{thm:Fluid_limit_of_a_RIG_plus_isolated_vertices_exploration}
For any $0 \leq S < \phi \leq 1$, $\eps > 0$,
\begin{equation}
\label{eq:unaffectedvertices}
\probabilityBig{ \sup_{s \in [0,S]} \Bigl| \frac{U( \lfloor s n \rfloor ) }{n} - (u^{\mathrm{RIG}} + u_0^{\mathrm{DRIG}})(s) \Bigr| > \eps } 
\to 0
\end{equation}
as $n \to \infty$. Furthermore, 
\begin{equation}
\label{eq:unaffectedvertices}
\probabilityBig{ \sup_{s \in [0,S]} \Bigl| \frac{X( \lfloor s n \rfloor ) }{n} - (x_0^{\mathrm{RIG}} + x_{1+}^{\mathrm{RIG}} + x_0^{\mathrm{DRIG}})(s) \Bigr| > \eps } 
\to 0
\end{equation}
as $n \to \infty$. Here, the functions $u^{\mathrm{RIG}}$, $u_0^{\mathrm{DRIG}}$, $x_0^{\mathrm{RIG}}$, $x_{1+}^{\mathrm{RIG}}$, and $x_0^{\mathrm{DRIG}}$ satisfy the following system of integral equations: for $0 \leq s \leq S < \phi \leq 1$ and $\tilde{\gamma} = (\sigma + e^{-c}) \gamma / \sigma$,
\begin{align}
u^{\mathrm{RIG}}(s) 
& = \frac{\sigma}{\sigma + e^{-c}} - \int\limits_0^s \frac{u^{\mathrm{RIG}}(y)}{u^{\mathrm{RIG}}(y) + u_0^{\mathrm{DRIG}}(y)} \bigl( 1 + \tilde{\gamma}^2 u^{\mathrm{RIG}}(y) w(y) \bigr) \d{y}, 
\\
u_0^{\mathrm{DRIG}}(s) 
& = \frac{e^{-c}}{\sigma + e^{-c}} - \int\limits_0^s \frac{u_0^{\mathrm{DRIG}}(y)}{u^{\mathrm{RIG}}(y) + u_0^{\mathrm{DRIG}}(y)} \d{y},
\\
w^{\mathrm{RIG}}(s) 
& = \frac{\beta \sigma}{\sigma + e^{-c}} - \int\limits_0^s \frac{u^{\mathrm{RIG}}(y)}{u^{\mathrm{RIG}}(y) + u_0^{\mathrm{DRIG}}(y)} \tilde{\gamma} w^{\mathrm{RIG}}(y) \d{y},
\\
x_0^{\mathrm{RIG}}(s) 
& = \int\limits_0^s \frac{u^{\mathrm{RIG}}(y)}{u^{\mathrm{RIG}}(y) + u_0^{\mathrm{DRIG}}(y)} \exp{ \Bigl( -\tilde{\gamma} w^{\mathrm{RIG}}(y) \bigl( 1 - e^{-\tilde{\gamma} ( \frac{\sigma}{\sigma + e^{-c}} -x_0^{\mathrm{RIG}}(y)-x_{1+}^{\mathrm{RIG}}(y) )} \bigr) \Bigr) } \d{y}, 
\\
x_{1+}^{\mathrm{RIG}}(s) 
& = \int\limits_0^s \frac{u^{\mathrm{RIG}}(y)}{u^{\mathrm{RIG}}(y) + u_0^{\mathrm{DRIG}}(y)} \Bigl( 1 - \exp{ \Bigl( -\tilde{\gamma} w^{\mathrm{RIG}}(y) \bigl( 1 - e^{-\tilde{\gamma} ( \frac{\sigma}{\sigma + e^{-c}}-x_0^{\mathrm{RIG}}(y)-x_{1+}^{\mathrm{RIG}}(y) ) } \bigr) \Bigr) } \Bigr) \d{y}, 
\\
x_0^{\mathrm{DRIG}}(s) 
& = \int\limits_0^s \frac{u_0^{\mathrm{DRIG}}(y)}{u^{\mathrm{RIG}}(y) + u_0^{\mathrm{DRIG}}(y)} \d{y}.
\label{eq:limits_processes}
\end{align}
\end{theorem}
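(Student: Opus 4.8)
The plan is to recast the exploration as a multi-dimensional Markov chain, apply a Doob decomposition coordinate-by-coordinate, match each drift to the integrand of the corresponding integral equation, control the martingale remainders, and finally close the argument with a \emph{localized} Gr\"onwall estimate valid on $[0,S]$ with $S<\phi$. Convergence of $U(\lfloor sn\rfloor)/n$ and $X(\lfloor sn\rfloor)/n$ to the stated sums then follows from componentwise convergence and the triangle inequality.

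First I would augment the state beyond the vertex counts with the number of \emph{active attributes} $W(t)$, namely those attributes still incident to at least one \textsc{unaffected} vertex. This is necessary because adjacency in the RIG is mediated by shared attributes, so the number of \textsc{unaffected} neighbors $N_{V(t)}$ of a freshly \textsc{excited} vertex is not a function of $U^{\mathrm{RIG}}$ and $U_0^{\mathrm{DRIG}}$ alone. The scaling $W(t)/n\to w^{\mathrm{RIG}}(s)$ has initial value $\beta\sigma/(\sigma+e^{-c})$, the attribute density measured per vertex of the full length-$n$ mixture, and the substitution $\tilde\gamma=(\sigma+e^{-c})\gamma/\sigma$ is exactly the bookkeeping factor converting the native RIG rate $\gamma/n^{\mathrm{RIG}}$ into a rate per step of the exploration. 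Writing each coordinate $Z$ as $Z(0)+\sum_{k\le t}\E[Z(k)-Z(k-1)\mid\mathcal F(k-1)]+M_Z(t)$ reduces the theorem to identifying the drift with the correct integrand and showing $M_Z$ is uniformly negligible.

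Second I would compute the one-step conditional drifts. Conditionally on $\mathcal F(t-1)$ the vertex $V(t)$ lies in the RIG with probability $U^{\mathrm{RIG}}/(U^{\mathrm{RIG}}+U_0^{\mathrm{DRIG}})$, as recorded in the exploration description. The combinatorial heart is the conditional law of $N_{V(t)}$: modelling the attributes of $V(t)$ as an approximately independent family, each present with probability $\gamma/n^{\mathrm{RIG}}$, yields an expected number of \textsc{unaffected} neighbors of $\approx\tilde\gamma^2 u^{\mathrm{RIG}} w^{\mathrm{RIG}}$, which supplies the drifts of $U^{\mathrm{RIG}}$ and $W$. The same computation produces the probability that $V(t)$ has \emph{no} \textsc{unaffected} neighbor, the event separating a contribution to $x_0^{\mathrm{RIG}}$ from one to $x_{1+}^{\mathrm{RIG}}$, as a product over the vertex's attributes of the chance that an attribute is shared with no excitable vertex; this is precisely the nested-exponential factor $\exp(-\tilde\gamma w^{\mathrm{RIG}}(1-e^{-\tilde\gamma(\cdots)}))$ in the $x$-equations. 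In the scaled coordinates each drift is then within $o(1)$ of the stated integrand, uniformly wherever $u^{\mathrm{RIG}}+u_0^{\mathrm{DRIG}}$ is bounded away from zero. For the martingale part, each step changes every count by at most $1+N_{V(t)}$, so a tail bound on attribute-induced degrees controls the increments, the predictable quadratic variation of each $n^{-1}M_Z$ is $o(1)$, and Doob's maximal inequality forces $\sup_{s\le S}|M_Z(\lfloor sn\rfloor)|/n\to 0$ in probability; together with convergence of the drift Riemann sums to the integrals this establishes the equations wherever the integrands are well behaved.

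The hard part, and the reason for the restriction $S<\phi$, is that every drift carries the denominator $u^{\mathrm{RIG}}+u_0^{\mathrm{DRIG}}$, which vanishes as $s\uparrow\phi$; the vector field defining the integral equations is therefore only \emph{locally} Lipschitz, and the textbook route of invoking a global Lipschitz constant before Gr\"onwall's inequality breaks down. I would resolve this as flagged in the introduction: fix $S<\phi$ so that the limiting denominator stays above some $\delta>0$ on $[0,S]$, and introduce the stopping time at which the scaled process's denominator first drops below $\delta/2$. Up to that stopping time the drift is Lipschitz with a constant depending only on $\delta$, so Gr\"onwall yields closeness of the process to the unique solution of the integral equations; a bootstrap then shows that the denominator cannot leave $[\delta/2,\infty)$ before step $\lfloor Sn\rfloor$ except on an event of vanishing probability, so that the locally Lipschitz sample paths carry asymptotically all the mass. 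This is the content deferred to \refLemma{lem:Uniform_convergence_in_probability_of_the_scaled_processes} in \refAppendixSection{sec:Appendix__Uniform_convergence_in_probability}, and it upgrades the interior convergence to the uniform-in-probability statement claimed.
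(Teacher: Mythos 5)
Your proposal follows the paper's own proof in essentially all of its main machinery: the same augmented Markov chain that tracks the attribute count alongside the vertex counts, the same coordinate-wise Doob--Meyer decomposition, the same drift identifications (the $\tilde\gamma^2 u^{\mathrm{RIG}} w^{\mathrm{RIG}}$ term and the nested-exponential factor), the same quadratic-variation bounds plus Doob's maximal inequality for the martingale parts, and the same localized Gr\"onwall step on the region where $u^{\mathrm{RIG}}+u_0^{\mathrm{DRIG}}$ is bounded below. The one genuinely different element is how you certify that the empirical denominator stays bounded away from zero up to time $S<\phi$ with probability tending to one. You propose a self-contained stopping-time bootstrap: stop when the scaled denominator first drops below $\delta/2$, apply Gr\"onwall up to that stopping time, and argue that closeness to a limit whose denominator stays $\geq\delta$ on $[0,S]$ prevents the stopping time from occurring before $\lfloor Sn\rfloor$ except on an event of vanishing probability. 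The paper instead closes this step by invoking \refTheorem{thm:Jamming_limit_converges_in_probability}: since each iteration removes at least one \textsc{unaffected} vertex, $T^*>\lfloor Sn\rfloor+n\eps$ forces $U(\lfloor Sn\rfloor)\geq n\eps$, and $\probability{ T_n^* > S+\eps }\to 1$ whenever $S+\eps<\phi$; that hitting-time theorem is itself proved from a separate, globally Lipschitz fluid limit for the RIG-only exploration. Your bootstrap buys a proof of \refLemma{lem:Uniform_convergence_in_probability_of_the_scaled_processes} that does not lean on the jamming-limit result, giving a cleaner dependency structure; the paper's route buys brevity by recycling \refTheorem{thm:Jamming_limit_converges_in_probability}, at the cost of an intertwined logical order (its proof in turn reuses a simplified version of the present theorem's argument). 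Both closures are valid.

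One caveat worth fixing: your definition of $W(t)$ as the attributes ``still incident to at least one \textsc{unaffected} vertex'' is not the quantity your subsequent computations (or your stated initial value $\beta\sigma/(\sigma+e^{-c})$) actually use, and it would not produce clean dynamics --- under it, $W(0)$ would already exclude the $\approx e^{-\gamma}$ fraction of attributes incident to no vertex at all, and the decrement at an excitation step would not be $\mathrm{Binomial}(W(t-1),\gamma/n^{\mathrm{RIG}})$. The correct bookkeeping, which the paper uses and your drift calculations implicitly assume, is that an attribute leaves $W$ exactly when some vertex connected to it becomes \textsc{excited}; with that definition the deferred-decisions argument yields the binomial decrement and the drift $-\tfrac{u^{\mathrm{RIG}}}{u^{\mathrm{RIG}}+u_0^{\mathrm{DRIG}}}\,\tilde\gamma\, w^{\mathrm{RIG}}$ matching the integral equation for $w^{\mathrm{RIG}}$.
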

\end{widetext}

\subsection{Time-dependent evolution of the fraction of excited vertices when conducting \texorpdfstring{\gls{RSA}}{RSA} on a \texorpdfstring{\gls{DRIG}}{DRIG}}
\label{sec:Time_dependent_evolution_of_the_fraction_of_excited_)vertices_when_conducting_RSA_on_a_DRIG}

The time-evolution of $U(t)$ for $t \in \{ 0, 1, \ldots, n \}$ does \underline{not} describe the number of \textsc{unaffected} vertices when conducting \gls{RSA} on a \gls{DRIG}, $D(t)$ for $t \in \{ 0, 1, \ldots, n^{\mathrm{DRIG}} \}$ say. Recall for instance that $n = n^{\mathrm{RIG}} + \lfloor \e{-c} n^{\mathrm{DRIG}} \rfloor$, which not necessarily equals $n^{\mathrm{DRIG}}$; the graph we are exploring is bigger. To obtain $D(t)$ one can utilize the graph exploration algorithm of \refSection{sec:Graph_exploration_algorithm}, but must then (i) filter out the vertices from $G^{\mathrm{RIG}}_0$, which are by definition not part of the \gls{DRIG}, and (ii) afterwards apply an appropriate nonlinear time transformation. The latter must be done carefully and is nontrivial because early in the graph exploration, relatively more vertices may have come from $G^{\mathrm{RIG}}_{1+}$. The issue of overcounting and rescaling of iteration time is also illustrated using an example sample path in \refTable{table:Explanation_of_filtering_out_vertices}. 

\begin{table*}[tbph]
\begin{tabular}{llp{5cm}}  
\toprule
Sequence
& Sample
& Remark
\\
\midrule
$\process{ U(t-1) - U(t) }{ t = 1, \ldots, n }$
& $2,1,2,3,1,5,1,4,2,1,1,1,3,4,1,6,2,1,\ldots, 0, 0, 0, \ldots, 0$ 
& Decrements describing the number of \textsc{unaffected} vertices in the \gls{RIG}. Identically zero after the random hitting time $T_n$.
\\
$\process{ \indicator{ V(t) \in \mathcal{V}(G_{1+}^{\mathrm{RIG}}) } }{ t = 1, \ldots, n }$
& $1,1,1,1,0,1,0,1,1,1,0,1,1,1,0,1,1,1,\ldots, 0, 0, 0, \ldots, 0$ 
& Steps at which a vertex was activated from $G_{1+}^{\mathrm{RIG}}$. These vertices are in the \gls{DRIG}.
\\
$\process{ \indicator{ V(t) \in \mathcal{V}(G_0^{\mathrm{DRIG}}) } }{ t = 1, \ldots, n }$
& $0,0,0,0,1,0,0,0,0,0,1,0,0,0,0,0,0,0,\ldots, 0, 0, 0, \ldots, 0$ 
& Steps at which a vertex was activated from $G_{0}^{\mathrm{DRIG}}$. These vertices are in the \gls{DRIG}. 
\\
$\process{ \indicator{ V(t) \in \mathcal{V}(G_0^{\mathrm{RIG}}) } }{ t = 1, \ldots, n }$
& $0,0,0,0,0,0,1,0,0,0,0,0,0,0,1,0,0,0,\ldots, 0, 0, 0, \ldots, 0$ 
& Steps at which a vertex was activated from $G_{0}^{\mathrm{RIG}}$. These vertices are \underline{not} in the \gls{DRIG}. 
\\
$\process{ D(t-1) - D(t) }{ t = 1, \ldots, n^{\mathrm{DRIG}} }$
& $2,1,2,3,1,5,\phantom{1,}\,4,2,1,1,1,3,4,\phantom{1,}\,6,2,1, \ldots, 0, 0, 0, \ldots, 0$ 
& Decrements describing the number of \textsc{unaffected} vertices in the \gls{DRIG}. For sufficiently large $n$, this is a shorter sequence with high probability.
\\
\bottomrule
\end{tabular}
\caption{A sample sequence of decrements that describes the number of \textsc{unaffected} vertices in the mixture of a \gls{RIG} with isolated vertices. We show how from it, we can find a coupled sequence of decrements describing the number of \textsc{unaffected} vertices in a \gls{DRIG}. This illustrates the issue of overcounting and necessity of transforming iteration time when going from a result for the mixture of the \gls{RIG} with isolated vertices to a result for the \gls{DRIG}.}
\label{table:Explanation_of_filtering_out_vertices}
\end{table*}

Concretely, the filtering out of vertices can be done by setting for $t \in \{ 0, 1, \ldots, n \}$,
\begin{align}
&
U^{\mathrm{DRIG}}(t) 
= U_0^{\mathrm{DRIG}}(t) + U_{1+}^{\mathrm{DRIG}}(t)
\nonumber \\ &
= U_0^{\mathrm{DRIG}}(t) + U^{\mathrm{RIG}}(t) + X_0^{\mathrm{RIG}}(t) - X_0^{\mathrm{RIG}}(n).
\end{align}
Rescaling iteration time can then be done appropriately by levering the inverse process 
\begin{equation}
( X^{\mathrm{DRIG}} )^\gets(u) 
= \inf{ \bigl\{ t \in \naturalNumbersZero \big| X_0^{\mathrm{DRIG}}(t) + X_{1+}^{\mathrm{RIG}}(t) = u \bigr\} }
\end{equation}
for $u \in \{ 0, 1, \ldots, n^{\mathrm{DRIG}} \}$ to only count those steps that involve excitations from the \gls{DRIG}. That is, for $t \in \{ 0, 1, \ldots, n^{\mathrm{DRIG}} \}$, we define
\begin{equation}
D^{\mathrm{DRIG}}(t) 
= U^{\mathrm{DRIG}}( ( X^{\mathrm{DRIG}} )^\gets(t) ).
\label{eqn:Unaffected_particles_in_the_DRIG}
\end{equation}
Fortunately, all fluid limits necessary for obtaining a fluid limit for \eqref{eqn:Unaffected_particles_in_the_DRIG} are provided by \refTheorem{thm:Fluid_limit_of_a_RIG_plus_isolated_vertices_exploration}. We prove the following in \refAppendixSection{sec:Proof_of_time_rescaling_for_DRIG}:

\begin{theorem}[Fluid limit of $D(t)$ when conducting \gls{RSA} on a \gls{DRIG}]
\label{thm:Fluid_limit_of_Dt}
For $0 \leq S < \eta \leq 1$, $\eps > 0$,
\begin{equation}  
\probabilityBig{ \sup_{s \in [0,S]} \Bigl| \frac{D( \lfloor s n^{\mathrm{DRIG}} \rfloor )}{n^{\mathrm{DRIG}}} - d(s) \Bigr| \geq \eps }
\to 0 
\end{equation}
as $n^{\mathrm{DRIG}} \to \infty$. Here, for $0 \leq s \leq S < \eta \leq 1$,
\begin{align}
&
d(s) 
= 
(\sigma + e^{-c}) u^{\mathrm{DRIG}}\bigl((x^{\mathrm{DRIG}})^\gets((\sigma + e^{-c})^{-1} s)\bigr),
\end{align}
and
\begin{align*}
u^{\mathrm{DRIG}}(s) &= (u^{\mathrm{RIG}} + u_0^{\mathrm{DRIG}} + x_0^{\mathrm{RIG}})(s) - \xi_0, \\
x^{\mathrm{DRIG}}(s) &= (x_{1+}^{\mathrm{RIG}} + x_0^{\mathrm{DRIG}})(s).
\end{align*}
\end{theorem}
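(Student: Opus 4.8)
The plan is to treat \refTheorem{thm:Fluid_limit_of_Dt} as a \emph{transfer} result: \refTheorem{thm:Fluid_limit_of_a_RIG_plus_isolated_vertices_exploration} already supplies uniform-in-probability fluid limits on every interval $[0,S']$ with $S' < \phi$ for all the building-block processes $U^{\mathrm{RIG}}$, $U_0^{\mathrm{DRIG}}$, $X_0^{\mathrm{RIG}}$, $X_{1+}^{\mathrm{RIG}}$, $X_0^{\mathrm{DRIG}}$ scaled by $n$. All that remains is to push these limits through the deterministic operations that define $D$: (i) the linear filtering producing $U^{\mathrm{DRIG}}$, (ii) the monotone time change $(X^{\mathrm{DRIG}})^\gets$, (iii) the composition in \refEquation{eqn:Unaffected_particles_in_the_DRIG}, and (iv) the change of spatial scale from $n$ to $n^{\mathrm{DRIG}}$, where $n = n^{\mathrm{RIG}} + \lfloor e^{-c} n^{\mathrm{DRIG}} \rfloor = (\sigma + e^{-c}) n^{\mathrm{DRIG}} + O(1)$, so that $n/n^{\mathrm{DRIG}} \to \sigma + e^{-c}$.

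First I would handle the filtering. Dividing the discrete identity $U^{\mathrm{DRIG}}(t) = U_0^{\mathrm{DRIG}}(t) + U^{\mathrm{RIG}}(t) + X_0^{\mathrm{RIG}}(t) - X_0^{\mathrm{RIG}}(n)$ by $n$ and applying \refTheorem{thm:Fluid_limit_of_a_RIG_plus_isolated_vertices_exploration} term by term gives uniform convergence in probability of $U^{\mathrm{DRIG}}(\lfloor s n \rfloor)/n$ to $(u^{\mathrm{RIG}} + u_0^{\mathrm{DRIG}} + x_0^{\mathrm{RIG}})(s)$ minus the limit of $X_0^{\mathrm{RIG}}(n)/n$, on any $[0,S']$ with $S' < \phi$. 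The one subtlety is the terminal term $X_0^{\mathrm{RIG}}(n)$, which sits outside the range covered by \refTheorem{thm:Fluid_limit_of_a_RIG_plus_isolated_vertices_exploration}: here I would argue separately that every isolated vertex of the \gls{RIG} is eventually excited (it is never blocked), so $X_0^{\mathrm{RIG}}(n)$ equals the total number of isolated \gls{RIG} vertices, a construction-time quantity that concentrates by a law of large numbers, and whose scaled limit is precisely the constant $\xi_0$ subtracted in $u^{\mathrm{DRIG}}$. The same term-by-term argument, applied to $X^{\mathrm{DRIG}}(t) := X_0^{\mathrm{DRIG}}(t) + X_{1+}^{\mathrm{RIG}}(t)$, yields $X^{\mathrm{DRIG}}(\lfloor s n \rfloor)/n \to x^{\mathrm{DRIG}}(s) = (x_{1+}^{\mathrm{RIG}} + x_0^{\mathrm{DRIG}})(s)$ uniformly in probability.

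The heart of the proof, and the step I expect to be the main obstacle, is the time change. I would first show $x^{\mathrm{DRIG}}$ is continuous and \emph{strictly} increasing on $[0,\phi)$: because exactly one vertex is excited per iteration, the three excitation fluid limits satisfy $\tfrac{d}{ds}(x_{1+}^{\mathrm{RIG}} + x_0^{\mathrm{DRIG}} + x_0^{\mathrm{RIG}}) = 1$, whence $\tfrac{d}{ds} x^{\mathrm{DRIG}} = 1 - \tfrac{d}{ds} x_0^{\mathrm{RIG}}$, and the integrand defining $x_0^{\mathrm{RIG}}$ is the product of the factor $u^{\mathrm{RIG}}/(u^{\mathrm{RIG}} + u_0^{\mathrm{DRIG}}) < 1$ with an exponential $\le 1$; since $u_0^{\mathrm{DRIG}}$ stays bounded below on each compact subinterval of $[0,\phi)$, the derivative $\tfrac{d}{ds} x^{\mathrm{DRIG}}$ is bounded below there by a positive constant. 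This lower bound lets me invoke the continuous-mapping theorem for inverses of monotone functions: uniform convergence of $X^{\mathrm{DRIG}}(\lfloor \cdot\, n \rfloor)/n$ to the strictly increasing limit $x^{\mathrm{DRIG}}$ transfers to uniform convergence of the scaled random inverse $(X^{\mathrm{DRIG}})^\gets$ to $(x^{\mathrm{DRIG}})^\gets$ on compacts strictly inside the range. I would then compose, splitting the error $|U^{\mathrm{DRIG}}((X^{\mathrm{DRIG}})^\gets(t))/n - u^{\mathrm{DRIG}}((x^{\mathrm{DRIG}})^\gets(\cdot))|$ by the triangle inequality into (a) the deviation of $U^{\mathrm{DRIG}}/n$ from $u^{\mathrm{DRIG}}$ evaluated at the random argument, controlled by the uniform convergence of the filtered process, and (b) the deviation of $u^{\mathrm{DRIG}}$ at the two nearby arguments, controlled by uniform continuity of $u^{\mathrm{DRIG}}$ together with the convergence of the inverses. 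Keeping every bound uniform while the inner argument is random, and choosing the $\epsilon$'s and $\delta$'s in the right order, is the delicate part.

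Finally I would collect the bookkeeping. The prefactor $n/n^{\mathrm{DRIG}} \to \sigma + e^{-c}$ produces the $(\sigma + e^{-c})$ in front of $d$, while solving $X^{\mathrm{DRIG}}(\lfloor \tau n \rfloor) \approx n\,x^{\mathrm{DRIG}}(\tau) = \lfloor s n^{\mathrm{DRIG}} \rfloor$ forces the argument $(\sigma + e^{-c})^{-1} s$ of the inverse, reproducing $d(s) = (\sigma + e^{-c})\, u^{\mathrm{DRIG}}\bigl((x^{\mathrm{DRIG}})^\gets((\sigma + e^{-c})^{-1} s)\bigr)$. I would check that the hypothesis $S < \eta$ is exactly what is required: it guarantees $(\sigma + e^{-c})^{-1} S$ stays strictly below $x^{\mathrm{DRIG}}(\phi^-)$, so the inverse lands in a compact subinterval of $[0,\phi)$ on which the lower bound for $\tfrac{d}{ds} x^{\mathrm{DRIG}}$ and hence \refTheorem{thm:Fluid_limit_of_a_RIG_plus_isolated_vertices_exploration} hold throughout; the floors contribute only $O(1/n^{\mathrm{DRIG}})$ and are absorbed into the final estimate.
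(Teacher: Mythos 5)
Your proposal is correct, and it follows the same overall architecture as the paper's proof in \refAppendixSection{sec:Proof_of_time_rescaling_for_DRIG}: invoke \refTheorem{thm:Fluid_limit_of_a_RIG_plus_isolated_vertices_exploration} (via \refLemma{lem:Uniform_convergence_in_probability_of_the_scaled_processes}) for the building-block processes, transfer uniform convergence to the scaled inverse process $(X^{\mathrm{DRIG}})^\gets$, compose via the triangle inequality, and use $S < \eta$ to keep the inverse inside a compact subinterval of $[0,\phi)$, with the factor $n/n^{\mathrm{DRIG}} \to \sigma + e^{-c}$ producing the rescalings in $d(s)$. Within that shared skeleton you deviate in three places, each time more carefully than the paper. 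First, your triangle split differs: the paper compares $U^{\mathrm{DRIG}}$ at the random time against $U^{\mathrm{DRIG}}$ at the deterministic time and appeals to ``continuity of $U^{\mathrm{DRIG}}(\cdot)$''---but $U^{\mathrm{DRIG}}$ is a random step function whose single-step decrements are unbounded, so what is really needed (and never proved there) is a modulus-of-continuity estimate for the scaled process; your split, process versus limit at the random argument plus limit versus limit at the two nearby arguments, needs only uniform convergence on the compact and uniform continuity of the deterministic function $u^{\mathrm{DRIG}}$, which is immediate. Second, you prove that $x^{\mathrm{DRIG}}$ is \emph{strictly} increasing via $\tfrac{d}{ds}x^{\mathrm{DRIG}} = 1 - \tfrac{d}{ds}x_0^{\mathrm{RIG}} \geq u_0^{\mathrm{DRIG}} > 0$ on compacts of $[0,\phi)$, whereas the paper only records that the limit is ``continuous and nondecreasing'' so that ``its inverse exists''; strictness is in fact what makes the inverse continuous and the uniform inverse-convergence step \refEquation{eqn:Uniform_convergence_of_time_scaling} legitimate, so this lemma of yours fills a genuine gap. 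Third, you handle the terminal term $X_0^{\mathrm{RIG}}(n)$ explicitly (isolated vertices are never blocked, hence all are eventually excited, hence $X_0^{\mathrm{RIG}}(n)$ equals the construction-time count of isolated \gls{RIG} vertices, which concentrates by a law of large numbers); the paper silently folds this constant into its citation of \refLemma{lem:Uniform_convergence_in_probability_of_the_scaled_processes}, which only covers scaled times in $[0,S]$ with $S < \phi$ and therefore cannot by itself control a quantity evaluated at iteration $n$. In short: same route, but your version supplies precisely the sub-arguments the paper leaves implicit, and the delicate composition step you flagged is handled in a cleaner way than in the paper itself.
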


\subsection{Translating from iteration steps to actual time}

Finally, if one is interested in translating iteration steps (in the arb.\ unit) to actual time (in e.g.\ seconds), then one may proceed as follows. Assume that the time between two steps is exponentially distributed with parameter $\lambda D(m) > 0 \, [\si{\per\second}]$ and independent of the past. \emph{Viz.}, $T_{m+1} - T_m \sim \textnormal{Exp}(\lambda D(m)) \, \si{\second}$ for all $m \in \naturalNumbersPlus$. Define then for $t \geq 0 \, [\si{\second}]$, the process $X'(t) = X( \sup \{ m \in \naturalNumbersZero \mid t \geq T_m \} )$ whose domain is actual time in seconds. This process satisfies for all $t \geq 0 \, [\si{\second}]$,
\begin{equation}
\frac{ \expectation{X'(t)} }{n^{\mathrm{DRIG}}} \, [\mathrm{arb.}]
\to \int_0^t \lambda d(x(s)) \d{s} \, [\mathrm{arb.}]
\end{equation}
as $n^{\mathrm{DRIG}} \to \infty$. This was also discussed in \cite{sanders2015sub}.

\section{Heuristics}
\label{sec:Heuristics}

In this section, we come up with a heuristic using random graphs that gives a recursion to describe the following \emph{normalized pair-correlation function}
\begin{equation}
h_\eps(r)
= \frac{ \mathbb{P}_N[ v, w \in \hat{X}(\infty) \mid d(v, w) \in (r - \epsilon, r + \epsilon) ] }{  \mathbb{P}_N[ v \in \hat{X}(\infty) ]^2 }
\label{eq:paircorrelation}
\end{equation}
say, for a Rydberg gas. Here, $\hat{X}(\infty)$ denotes the set of excited vertices in the jamming limit.  We chose this function for comparison purposes to \cite[Fig.~2]{robicheaux2005many}.

Many random graphs are however not built around an explicit notion of distance $d(v,w)$ between two vertices $v, w \in \mathcal{V}$, so evaluation of \eqref{eq:paircorrelation}'s numerator is impossible. Given a random graph without a notion of distance, such as an \gls{ERRG}, we need to instead use a proxy for the distance between two vertices. Using such a proxy comes at a cost: the following approach is a heuristic, and we have no mathematical results guaranteeing its accuracy. Nonetheless, the approaches will turn out descriptive. As our proxy for a notion of distance, we will specifically use the \emph{number of neighbors that two vertices have in common}. The idea is that if two vertices have more neighbors in common, then their distance should be smaller.

\begin{widetext}
\begin{heuristic}[Recursive system to approximate a normalized pair correlation function of a \gls{RGG} or the \gls{QMMRG}]
    \label{heuristic:Pair_correlation_function}
    For all $v, w \in \mathcal{V}$, $r \in (0, 2 r_b]$, $0 < \epsilon \ll 1$: the numerator in \eqref{eq:paircorrelation} satisfies
    \begin{align}
        \mathbb{P}_N[ v, w \in \hat{X}(\infty) \mid d(v, w) \in (r - \epsilon, r + \epsilon) ]
        &
        =
        \begin{cases}
            0 
            & 
            \textnormal{if } 0 < r < r_b 
            \\
            \displaystyle\sum_{ i,k } A_{i,k}^N(r) B_{i,k}^N(r) C_i^N(r)
            \approx \displaystyle\sum_{ i,k } \hat{A}_{i,k}^N(r) \hat{B}_{i,k}(r) \hat{C}_i^N
            & 
            \textnormal{otherwise}
            \\
        \end{cases}
        \label{eqn:Heuristic_used_to_calculate_a_pair_correlation_function}      
    \end{align}
    Here,
    \begin{align}
        A_{i,k}^N(r)
        &
        = \mathbb{P}_N
        [ 
            v, w \in \hat{X}(\infty) 
            \mid 
            \lvert \hat{N}_v \cap \hat{N}_w \rvert = k, 
            N_v = i, 
            d(v, w) \in (r - \epsilon, r + \epsilon) 
        ],
        \nonumber \\
        B_{i,k}^N(r)
         &
        = \mathbb{P}_N
        [ 
            \lvert \hat{N}_v \cap \hat{N}_w \rvert = k 
            \mid 
            N_v = i, 
            d(v, w) \in (r-\epsilon,r+\epsilon) 
        ],
        \nonumber \\
        C_i^N(r)
         &
        = \mathbb{P}_N
        [ 
            N_v = i 
            \mid 
            d(v,w) \in (r-\eps,r+\eps) 
        ],
        \label{eqn:Terms_in_our_application_of_the_law_of_total_probability}
    \end{align}
    and
        \begin{align}
            A_{i,k}^N(r)
            &
            \approx \hat{A}_{i,k}^N(r)
            \nonumber \\ &
            = 
            \frac{1}{N} 
            \hat{E}^{N-i-1}(r)
            + 
            \frac{1}{N} 
            \sum_{j=0}^{N-i-2} 
            \binom{N-i-2}{j} \bigl( p ( 1 - F_1(r) ) \bigr)^j \bigl( 1 - p ( 1 - F_1(r) ) \bigr)^{N-i-2-j} 
            \hat{D}_{i-k}^{N-k-j-1}
            \label{eqn:Explicit_expressions_for_the_terms_in_the_heuristic}
            \\ &
            \phantom{=}
            + \frac{N-i-2}{N} \bigl( 1 - p ( 1 - F_1(r) ) \bigr) 
            \nonumber \\ &
            \phantom{= +} 
            \times 
            \sum_{j=0}^{N-i-3}
            \sum_{l=0}^k
            \sum_{m=0}^{i-k}
            \binom{N-i-3}{j} ( p F_2(r) )^j \bigl( 1 - p F_2(r) \bigr)^{N-i-3-j}
            \binom{k}{l} p^l (1-p)^{k-l}
            \nonumber \\ &
            \phantom{ 
                = \times
                \sum_{j=0}^{N-i-3}
                \sum_{l=0}^k
                \sum_{m=0}^{i-k}
            }
            \times             
            \binom{i-k}{m} \bigl( p (1-F_1(r)) \bigr)^m \bigl( 1 - p (1-F_1(r)) \bigr)^{i-k-m}
            \hat{A}_{i-l-m,k-l}^{N-j-l-m-1}(r)
        \end{align}
    as well as
    \begin{gather}
        B_{i,k}^N(r)
        \approx \hat{B}_{i,k}(r)
        = \binom{i}{k} ( F_1(r) )^k \bigl( 1 - F_1(r) \bigr)^{i-k},
        \label{eqn:Heuristic_term_for_Bik}
        \\
        C_i^N(r)
        \approx \hat{C}_i^N
        = \binom{N-2}{i} p^i (1-p)^{N-2-i}.
        \label{eqn:Heuristic_term_for_Ci}
    \end{gather}
    Here, $p = c /N$,
        \begin{align}
            \hat{D}_i^N(r)
            &= 
            \frac{1}{N}
            + \frac{N-i-1}{N} 
            \sum_{k=0}^{N-i-2} 
            \sum_{l=0}^i 
            \binom{N-i-2}{k} p^k (1-p)^{N-i-2-k} 
            \nonumber \\
            &\phantom{
                = 
            	\frac{1}{N}
        	    + \frac{N-i-1}{N} 
    	        \sum_{k=0}^{N-i-2} 
	            \sum_{l=0}^i
            }
            \times \binom{i}{l} \bigl( p ( 1 - F_1(r) ) \bigr)^l \bigl( 1 - p ( 1 - F_1(r) ) \bigr)^{i-l}
            \hat{D}_{i-l}^{N-k-l-1}(r)
            \label{eqn:Heuristic_term_for_PnvinXinfty}            
            \\
            \hat{E}^N(r)
            &= 
            \frac{1}{N} 
            + \frac{N-1}{N} \bigl( 1 - p ( 1 - F_1(r) ) \bigr)
            \sum_{k=0}^{N-2} \binom{n-2}{k} p^k (1-p)^{N-2-k} 
            \hat{E}^{N-k-1}(r);
            \label{eqn:Heuristic_term_for_PnvinXinfty2}       
        \end{align}
    and finally
    $
        F_1(r)
        = \frac{1}{\pi} \bigl( 2 \arccos{ \frac{r}{2r_b} } - \frac{r}{2r_b} \sqrt{4 - ( \frac{r}{r_b} )^2 } \bigr)
    $
    and
    $
        F_2(r) = \frac{\textsc{AREA}(C_1 \cap C_2 \cap C_3)}{\textsc{AREA}(C_1)}$, where $C_1$, $C_2$ and $C_3$ are circles with radii $r_b$, the distance between $C_1$ and $C_2$ is $r$, the distance between $C_1$ and $C_3$ is $r_b$ and the distance between $C_2$ and $C_3$ is $r_b$ (see \cite{fewell2006area} for a closed-form expression).
\end{heuristic}
\end{widetext}

To understand how Heuristic~\ref{heuristic:Pair_correlation_function} is established, start by noting that the first equality in \eqref{eqn:Heuristic_used_to_calculate_a_pair_correlation_function} follows from the law of total probability and is still exact; essentially, we have partitioned the sample space according to the conditional events implicitly defined in \eqref{eqn:Terms_in_our_application_of_the_law_of_total_probability}. Next, we approximate each of these terms by relating them to a mathematical expression obtained from a random graph. For each term we use a random graph for which we believe the exploration process behaves similar to a classical counterpart of the excitation process of a \gls{QMMRG}.

\noindent
\emph{1.}\ For the terms $A_{i,k}^N(r)$ and $C_i^N(r)$, we use what would be their explicit recursions in a random graph with \emph{one geometric feature} and \emph{four classes of distances}, that is, the expressions for $\hat{A}_{i,k}(r)$ and $\hat{C}_i^N$ in \eqref{eqn:Explicit_expressions_for_the_terms_in_the_heuristic} and \eqref{eqn:Heuristic_term_for_Ci}. Concretely, in this random graph: 
\begin{itemize}[noitemsep,wide,labelindent=0pt]
    \item[a.] none of the vertices have an explicit position;
    \item[b.] vertices $v$ and $w$ are considered to be at exactly a distance $r$ of each other;
    \item[c.] all other vertices are either
    \begin{itemize}[noitemsep,wide,labelindent=0pt]
        \item[$\quad$ class 1] -- within a distance $r_b$ of $v$ but not $w$; or
        \item[$\quad$ class 2] -- within a distance $r_b$ of $w$ but not $v$; or
        \item[$\quad$ class 3] -- within a distance $r_b$ of both $v,w$; or
        \item[$\quad$ class 4] -- outside of a distance $r_b$ of both $v,w$.
        \end{itemize}
\end{itemize}
Depending on which class a vertex belongs to, the vertex's probability of creating an edge with vertices of its own class and of other classes changes. For example (ignoring the precise conditioning for a moment), once $w$ is explored before $v$, then from that point onward, class 1 vertices will connect to $v$ with reduced probability $p(1-F_1(r))$ instead of the originally higher probability $p$. Here, the function $F_1(r)$ describes the fraction of surface area of one circle of radius $r_b$ that overlaps with the surface area of a second circle also of radius $r_b$ and that is a distance $r$ away from the first circle. As a consequence, $r$ can be at most $2 r_b$ for the heuristic to work, otherwise the surface areas of the two circles would not overlap. A similar argument is used for all other cases, and for each case a suitable edge probability is chosen. It must be noted that these choices do not correspond exactly to the edge probabilities in e.g.\ a \gls{RGG}---this is in fact impossible because there is almost no geometry in the present random graph---and are chosen instead for them being reasonable proxies (usually, they are upper bounds). The recursions are then found by conditioning on the first excitation. 

\noindent
\emph{2.}\ For the term $B_{i,k}^N(r)$ we use what would be its explicit expressions in a \gls{RGG}, \emph{viz}., the expression $\hat{B}_{i,k}(r)$ in \eqref{eqn:Heuristic_term_for_Bik}. 

The resulting formula in Heuristic~\ref{heuristic:Pair_correlation_function} is not particularly attractive. However, the right-hand side of \eqref{eqn:Heuristic_used_to_calculate_a_pair_correlation_function} combined with the expressions in \eqref{eqn:Terms_in_our_application_of_the_law_of_total_probability}--\eqref{eqn:Heuristic_term_for_PnvinXinfty} do constitute a recursive system of expressions that can be evaluated. In particular, have a look at \refFigure{fig:paircorrelation} in \refSection{sec:comparison} as well as the surrounding discussion.

\section{Numerics}
\label{sec:Numerics}

\subsection{Comparison: \texorpdfstring{\gls{RSA}}{RSA} on (a \texorpdfstring{\gls{DRIG}}{DRIG} vs an \texorpdfstring{\gls{ERRG}}{ERRG})}

For a range of atomic densities, \refFigure{fig:jamminglimit_error} shows the relative difference between the ultimate mean number of excited atoms as predicted by: 1.\ \gls{RSA} on an \gls{ERRG}, 2.\ \gls{RSA} on a \gls{DRIG}, and 3.\ \gls{RSA} on a \gls{RGG}. Note that a closed-form expression is available in literature for this number in model 1 \cite{sanders2015sub}; that we have derived an implicit expression for this number in model 2 in \refTheorem{thm:Jamming_limit_converges_in_probability} in this paper; and that no analytical expression is available in literature for this number in model 3. Hence, the results for model 3 displayed here were obtained via numerical simulation only. Observe that the predictive power of model 2 for model 3 is improved across a wider range of atomic densities when compared to the predictive power of model 1 for model 3. Since model 3 mimics the excitation process in the \gls{QMMRG}, we conclude that a \gls{RSA} process on a \gls{DRIG} allows for more accurate prediction of the statistics of the excitation dynamics in a Rydberg gas, than what was achieved by \gls{RSA} on an \gls{ERRG} \cite{sanders2015sub}. From the mathematical perspective, it is also noteworthy that this observation indicates that the graph structure of a \gls{DRIG} matches the graph structure of a \gls{RGG} better. 

\begin{figure}[hbtp]
\centering
\includegraphics[width=0.99\linewidth]{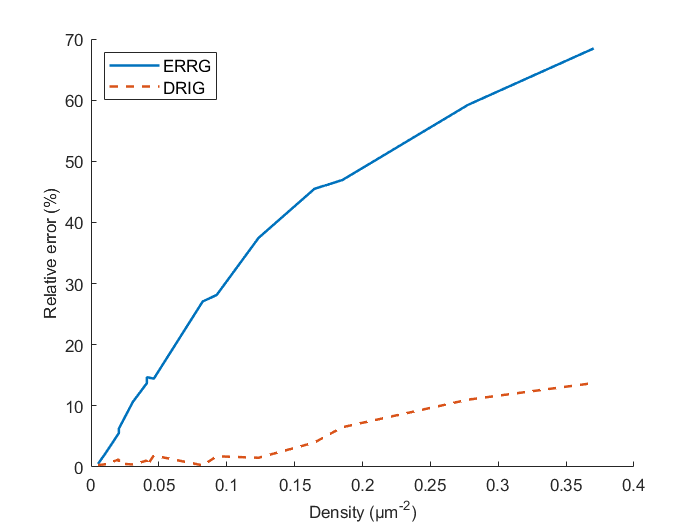}
\caption{The relative difference between the mean ultimate number of excited atoms as predicted by \gls{RSA} on an \gls{ERRG}, a \gls{DRIG}, and a \gls{RGG}; as a function of atomic density.}
\label{fig:jamminglimit_error}
\end{figure}

\subsection{Comparison:\texorpdfstring{\\}{} \texorpdfstring{\gls{QMMRG}}{QMMRG} vs \texorpdfstring{\gls{RSA}}{RSA} on a (\texorpdfstring{\gls{RGG}}{RGG} vs \texorpdfstring{\gls{DRIG}}{DRIG})}
\label{sec:comparison}

For two different atomic densities, \refFigure{fig:Mean_fraction_of_atoms_as_a_function_of_time_for_models_A_B_C} shows the mean number of excited atoms as a function of time as predicted by: A.\ the \gls{QMMRG}, B.\ \gls{RSA} on a \gls{RGG}, and C.\ \gls{RSA} on a \gls{DRIG}. These curves were obtained numerically for models A and B through simulation, while for model C we implemented \refTheorem{thm:Fluid_limit_of_Dt}. Recall that, as discussed in \refSection{sec:Models}, these three models have been matched to each other using \emph{only} a linear model for the blockade radius; $r_b = \alpha + (C_6/\Omega)^{1/6}$. Observe from \refFigure{fig:paircorrelation} that the Rydberg blockade is clearly at play in each model, since ultimately only a fraction of atoms reaches the excited level. The stable period that is entered after approximately \SI{2.5}{\micro\second} indicates the jamming limit. Note that in the jamming limit, the prediction by each model is near identical. Prior to the jamming limit, the mean number of excited atoms as predicted by models B, C also follows the mean number of excited atoms as predicted by model A roughly; the key difference are the observed quantum oscillations. We can imagine from an experimental viewpoint that if there is a large level of uncertainty in the timestamps of measurements compared to the period of these quantum oscillations, then the measurement curve will resemble the prediction of models B, C. The difference between the predictions of models A and B was 10\% on average in all numerical experiments performed. The difference between the predictions of models A and C was 12\% on average.

\begin{figure}[hbtp]
\centering
\includegraphics[width=0.99\linewidth]{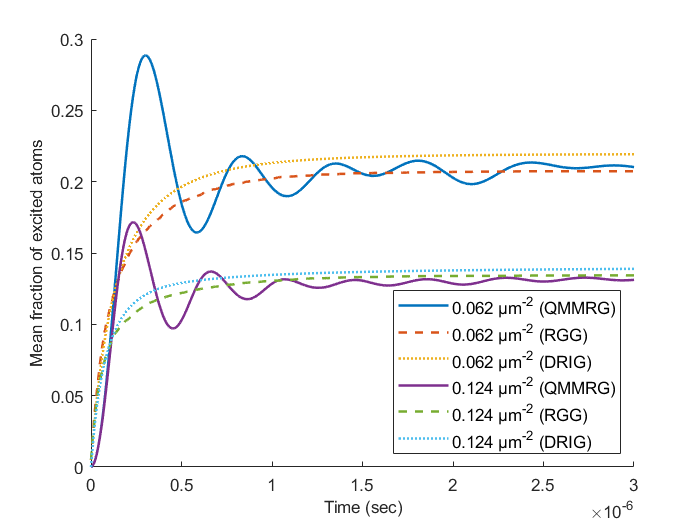}
\caption{The mean fraction of atoms in their excited level for typical values of $\Omega = \SI{5.8}{\mega\hertz} $, $C_6 = \SI{50}{\giga\hertz \micro\meter^6}$ as a function of time, and for two atomic densities, for models A, B and C. These curves were obtained numerically for models A and B through simulation, while for model C we implemented \refTheorem{thm:Fluid_limit_of_Dt}.}
\label{fig:Mean_fraction_of_atoms_as_a_function_of_time_for_models_A_B_C}
\end{figure}

\refFigure{fig:paircorrelation} compares a normalized pair correlation function when measured in the jamming limit of: A. the \gls{QMMRG}, B.\ a \gls{RSA} process on a \gls{RGG}, and D.\ our heuristic of \refSection{sec:Heuristics}. Here, we have temporarily replaced the hard blockade radius in model B specifically with a range-dependent blockade probability that takes the shape $p(r) = 1 / (1 + \exp{(k (r - r_b)))}$. Also, the linear model for the blockade radius has not been used here since we are interested in spatial correlations. Observe that the normalized pair correlation functions are quite similar: the random graph approaches are able to predict the normalized pair-correlation of the quantum mechanical model accurately over the displayed domain. Hence, \refFigure{fig:paircorrelation} also implies that the eventual spatial ordering of excited atoms resulting from either model A or B are similar. The displayed normalized pair-correlation function's shape here is also roughly the same as in e.g.\ \cite[Fig.~2]{robicheaux2005many}; as it should be.

\begin{figure}[hbtp]
\centering
\includegraphics[width=0.99\linewidth]{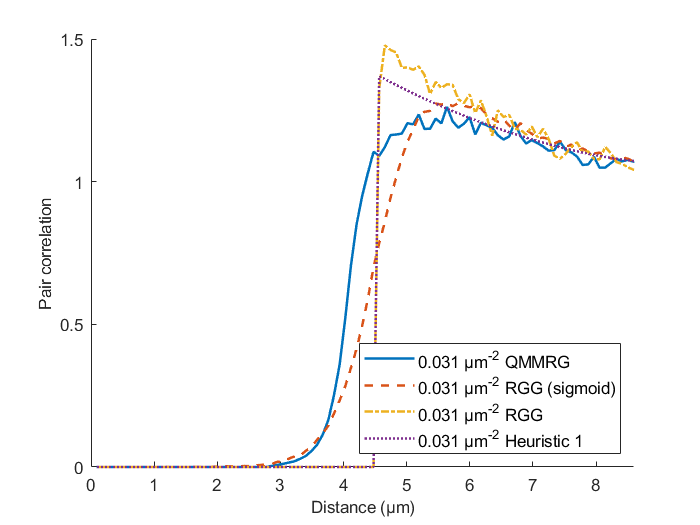}
\caption{A normalized pair correlation function in the jamming limit for the \gls{QMMRG} as compared to the \gls{RGG} and Heuristic 1. Here, the density $\rho = \SI{0.031}{\micro\meter^{-2}}$.}
\label{fig:paircorrelation}
\end{figure}

One remaining, open question raised by \refFigure{fig:paircorrelation} is whether these findings generalize to higher atomic densities. Our goal here was namely to simulate the quantum behavior as accurately as possible, and we forbade ourselves from applying advanced techniques to tackle the numerical complexities inherent in solving the Schr\"{o}dinger equation. This meant that the number of simulated atoms was limited by our computer memory. Another open question is how these results compare to experimental measurements. A quick comparison of our results to e.g.\ the measured pair-correlation function in \cite[Fig.~3]{schauss2012observation} is unfortunately inconclusive. Firstly, the formula for the displayed pair-correlation function there in \cite[(2)]{schauss2012observation} differs from \eqref{eq:paircorrelation} which we display in \refFigure{fig:paircorrelation}. In particular, their denominator is radius dependent, which ties in with this next point: their experimental setup differs from our model setups. The atoms were fixed in position using a square optical lattice instead of being distributed uniformly at random in two-dimensional space, and this affects the statistics. 

\section{Conclusion and Future work}
\label{sec:Conclusion}

We now revisit the question of the relative roles of geometrical restrictions induced by blockade effects and quantum mechanical effects, in the statistics of Rydberg gases. Our numerical comparison of the mean number of excitations over time shows that \gls{RSA} processes on random graphs accurately predict the mean behavior of a quantum mechanical system over time but not the detailed oscillations induced by Schr\"{o}dinger's equation. We conclude that even without dissipation, on long time scales the statistics are affected most by the geometrical restrictions induced by blockade effects, while on short time scales the statistics are affected most by quantum effects. Our heuristic for describing a normalized pair correlation function shows that classical models can even predict the spatial statistics of Rydberg atoms to a considerable degree. We emphasize that no experimental averaging is taking place here: geometrical restrictions are the primary cause. The better mathematical tractability of classical models is an added boon.

If we take a wider viewpoint, we believe that incorporating geometrical restrictions into classical stochastic processes on simpler random graph models may allow for quantitative descriptions of also other statistical phenomena in strongly interacting, ultracold Rydberg gases. For example, \emph{Rydberg aggregates} are networks of Rydberg atoms $\itr{ \vect{X} }{1}, \ldots, \itr{ \vect{X} }{n}$ that define a region $\itr{S}{n} \subseteq \realNumbers^3$ that can nucleate, grow, and eventually reach a long-lived metastable arrangement when the atoms are off-resonantly excited \cite{lesanovsky_out-of-equilibrium_2014}. The off-resonant excitation specifically causes that new Rydberg atoms shift the energy levels of atoms at a specific distance into resonance, and as a consequence after the nucleation of one Rydberg atom, regular arrangements of Rydberg atoms start to grow. We can model the growth mathematically by allowing only particles on the perimeter to become Rydberg atoms: $\itr{ \vect{X} }{n+1} \sim \mathrm{Unif}( \partial \itr{S}{n} )$. The stochastic recursions for this growth process will then be reminiscent of those in \refSection{sec:Results} and \refAppendixSection{sec:Proof_of_the_RIGplusOne_fluid_limit} but more complex due to $\itr{S}{n}$'s shape and the correlations between $\itr{ \vect{X} }{1}, \itr{ \vect{X} }{2}$, etc. We anticipate that the mathematical framework of exploration processes on random graphs may suitably be adapted to quantitatively describe the growth of Rydberg aggregates.

\section*{Acknowledgments}

The authors would like to thank Souvik Dhara, Servaas Kokkelmans, and Debankur Mukherjee for valuable discussion on this topic.

\bibliographystyle{ieeetr}
\bibliography{main}

\onecolumngrid
\appendix

\section{Proof of \refTheorem{thm:Jamming_limit_converges_in_probability}}
\label{sec:Proof_that_the_jamming_limit_converges_in_probability}

The proof of \refTheorem{thm:Jamming_limit_converges_in_probability} follows from two observations. First, the jamming limit $T^*$ equals the total time needed by an exploration process on a \gls{RIG} of the random adsorption type \emph{plus} an exploration process on a graph with isolated vertices. Second, establishing a fluid limit for the former exploration process can be done with a proof almost identical to the proof of \refTheorem{thm:Fluid_limit_of_a_RIG_plus_isolated_vertices_exploration}, which is given in \refAppendixSection{sec:Proof_of_the_RIGplusOne_fluid_limit}. The only difference is that the argumentation of \refAppendixSection{sec:Appendix__Uniform_convergence_in_probability} simplifies: the resulting fluid limit is namely Lipschitz continuous. Having established a fluid limit result for the exploration process on a \gls{RIG} of the random adsorption type---for any $0 \leq S \leq 1, \eps > 0$,
\begin{equation}
\probabilityBig{ \sup_{s \in [0,S]} \Bigl| \frac{ \tilde{U}^\mathrm{RIG}( \lfloor s n^\mathrm{RIG} \rfloor ) }{ n^\mathrm{RIG} } - \tilde{u}(s)  \Bigr| > \eps }
\to 0
\label{eqn:Fluid_limit_of_a_RIG_only}
\end{equation}
as $n \to \infty$ say---one then proceeds as follows. Fix $\eps > 0, S \in [0,1]$ and suppose that at $t = \lfloor S n^{\mathrm{RIG}} \rfloor$, $\tilde{U}( \lfloor Sn^{\mathrm{RIG}} \rfloor ) \geq n^{\mathrm{RIG}} \eps \geq 1$ for otherwise $\tilde{T}^* < \lfloor S n^{\mathrm{RIG}} \rfloor$. Regardless, this implies that at $t = \lfloor Sn^{\mathrm{RIG}} + 1 \rfloor$, the hitting time $\tilde{T}^*$ satisfies the bound
\begin{equation}
( \lfloor Sn^{\mathrm{RIG}} \rfloor + 1 ) \indicator{ n^{\mathrm{RIG}} \eps \geq 1 }
\leq \tilde{T}^* 
\leq \lfloor Sn^{\mathrm{RIG}} \rfloor + \tilde{U}^{\mathrm{RIG}}( \lfloor Sn^{\mathrm{RIG}} \rfloor ) \indicator{ n^{\mathrm{RIG}} \eps\geq 1 }
\quad
\textnormal{w.p.\ one}.
\end{equation}
This is because either the exploration process finishes immediately in the next step (represented by the left-hand side), or finishes in at most $\tilde{U}( \lfloor Sn^{\mathrm{RIG}} \rfloor )$ more iteration steps (represented by the right-hand side). Note now that 
\begin{equation}
\frac{\lfloor Sn^{\mathrm{RIG}} \rfloor}{n^{\mathrm{RIG}}} \indicator{ n^{\mathrm{RIG}} \eps \geq 1 } - \tilde{u}^\gets(0) 
= \frac{\lfloor Sn^{\mathrm{RIG}} \rfloor}{n^{\mathrm{RIG}}} - \tilde{u}^\gets(0) - \frac{\lfloor Sn^{\mathrm{RIG}} \rfloor}{n^{\mathrm{RIG}}} \indicator{ n^{\mathrm{RIG}} \eps < 1 },
\end{equation}
so that the scaled hitting time $\tilde{T}_{n^{\mathrm{RIG}}}^* = \tilde{T}^* / n^{\mathrm{RIG}}$ in particular satisfies the bound
\begin{align}
&
\frac{\lfloor Sn^{\mathrm{RIG}} \rfloor}{n^{\mathrm{RIG}}} - \tilde{u}^\gets(0) - \frac{\lfloor Sn^{\mathrm{RIG}} \rfloor}{n^{\mathrm{RIG}}} \indicator{ n^{\mathrm{RIG}} \eps < 1 } + \frac{1}{n^{\mathrm{RIG}}} \indicator{ n^{\mathrm{RIG}} \eps \geq 1 }
\nonumber \\ &
\leq \tilde{T}_{n^{\mathrm{RIG}}}^* - \tilde{u}^\gets(0) 
\leq \frac{\lfloor Sn^{\mathrm{RIG}} \rfloor}{n^{\mathrm{RIG}}} - \tilde{u}^\gets(0) + \tilde{u}_{n^{\mathrm{RIG}}}(S) \indicator{ n^{\mathrm{RIG}} \eps \geq 1 },
\end{align}
w.p.\ one, and consequently
\begin{equation}
\frac{Sn^{\mathrm{RIG}}-1}{n^{\mathrm{RIG}}} - \tilde{u}^\gets(0) - \indicator{ n^{\mathrm{RIG}} \eps < 1 }
\leq \tilde{T}_{n^{\mathrm{RIG}}}^* - \tilde{u}^\gets(0) 
\leq \frac{Sn^{\mathrm{RIG}}}{n^{\mathrm{RIG}}} - \tilde{u}^\gets(0) + \tilde{u}_{n^{\mathrm{RIG}}}(S)
\quad
\textnormal{w.p.\ one}.
\label{eqn:Intermediate_bound_for_hitting_time_RIG_only}
\end{equation}

We can now prove immediately that for any $\gamma > 0$,
\begin{equation}
\lim_{n^{\mathrm{RIG}} \to \infty} \probability{ | \tilde{T}_{n^{\mathrm{RIG}}}^* - \tilde{u}^\gets(0) | > \gamma }
= 0.
\label{eqn:Almost_final_hitting_time_result_for_a_RIG_only}
\end{equation}
To see this, let $\gamma > 0$. Bound the probability in \eqref{eqn:Almost_final_hitting_time_result_for_a_RIG_only} using \eqref{eqn:Intermediate_bound_for_hitting_time_RIG_only}, specify e.g.\ $\eps < \gamma/2$ and $0 < S = \tilde{u}^\gets(\gamma/2)$, and finally take the limit (while utilizing \eqref{eqn:Fluid_limit_of_a_RIG_only}):
\begin{align}
\probability{ | T_{n^{\mathrm{RIG}}}^* - \tilde{u}^\gets(0) | > \gamma }
&
\eqcom{\ref{eqn:Intermediate_bound_for_hitting_time_RIG_only}}\leq \probabilityBig{ \frac{1}{n^{\mathrm{RIG}}} + \indicator{ n^{\mathrm{RIG}} \eps < 1 } + \tilde{u}_{n^{\mathrm{RIG}}}(S) > \gamma }
\nonumber \\ &
= \probabilityBig{ \frac{1}{n^{\mathrm{RIG}}} + \indicator{ n^{\mathrm{RIG}} \eps < 1 } + \tilde{u}(S) + \tilde{u}_{n^{\mathrm{RIG}}}(S) - \tilde{u}(S) > \gamma }
\nonumber \\ &
\leq \probabilityBig{ \frac{1}{n^{\mathrm{RIG}}} + \indicator{ n^{\mathrm{RIG}} \eps < 1 } + \frac{\gamma}{2} + \eps > \gamma } + \probabilityBig{ \sup_{s \in [0,S]} \Bigl| \frac{ \tilde{U}( \lfloor s n^{\mathrm{RIG}} \rfloor ) }{ n^{\mathrm{RIG}} } - \tilde{u}(s)  \Bigr| > \eps }
\eqcom{\ref{eqn:Fluid_limit_of_a_RIG_only}}\to 0
\end{align}
as $n^{\mathrm{RIG}} \to \infty$. This completes the proof.

\section{Proof of \refTheorem{thm:Fluid_limit_of_a_RIG_plus_isolated_vertices_exploration}}
\label{sec:Proof_of_the_RIGplusOne_fluid_limit}

\refAppendixSection{sec:Proof_of_the_RIGplusOne_fluid_limit} proves a fluid limit for the number of unaffected vertices in a mixture of a \gls{RIG} and a graph consisting solely of isolated vertices. The idea here is to track both the number of unaffected vertices, as well as the number of unaffected attributes in the \gls{RIG} and the number of unaffected vertices in an isolated graph. Their appropriately scaled counterparts can be decomposed in a drift part, which converges to a deterministic function; and a martingale part which vanishes as $n$ tends to infinity when scaled by $n$. 

This method finds its roots in \cite{sanders2015sub,dhara2016generalized,bermolen2017jamming,bermolen2017scaling,dhara2018corrected}, but the approach differs here in that we study an exploration process on \emph{a subgraph of a mixture} of two (random) graphs. This new approach requires us to determine a plethora of fluid limits, one for each of the relevant subgraphs of the mixture of (random) graphs, see \refAppendixSection{sec:Appendix__Uniform_convergence_in_probability}; and then combine these fluid limits using a nonlinear time transformation that depends on an inverse exploration process, see \refAppendixSection{sec:Proof_of_time_rescaling_for_DRIG}. One further caveat is that because we mix two random graphs, a global Lipschitz' continuity property of the fluid limit is lost. This therefore prevents application of the canonical approach of proving a fluid limit by utilizing global Lipschitz' continuity prior to an application of Gr\"{o}nwell's inequality. We circumvent this issue by conditioning only on sample paths that satisfy a local Lipschitz' continuity, and subsequently showing that these sample paths occur with probability one in a large graph limit; see \refAppendixSection{sec:Appendix__Uniform_convergence_in_probability}.

\subsection*{Preliminaries}

Recall that we are given parameters $n^{\mathrm{DRIG}} \in \naturalNumbersPlus$, $\beta, \gamma, c > 0$ according to which we conduct the graph exploration algorithm.

\paragraph{Definitions.}
Let $G^{\mathrm{RIG}} = G_0^{\mathrm{RIG}} \cup G_{1+}^{\mathrm{RIG}}$ be a \gls{RIG} with $n^{\mathrm{RIG}} = \lfloor (1 - e^{-c}) n^{\mathrm{DRIG}} / (1 - \xi_0) \rfloor$ vertices for $\xi_0 = e^{-\beta \gamma (1 - e^{-\gamma})}$ and parameters $m = \lfloor \beta n^{\mathrm{RIG}} \rfloor$ and $p = \gamma / n^{\mathrm{RIG}}$. Let $G_0^{\mathrm{DRIG}}$ be an isolated graph with $\lfloor \e{-c} n^{\mathrm{DRIG}} \rfloor$ vertices. Define $n = n^{\mathrm{RIG}} + \lfloor \e{-c} n^{\mathrm{DRIG}} \rfloor$ for convenience. For any vertex $v \in \mathcal{V}(G^{\mathrm{RIG}}) \cup \mathcal{V}(G_0^{\mathrm{DRIG}})$, let $N_v$ be the \emph{set of neighbors of vertex $v$}, i.e.,
\begin{equation}
\hat{N}_v 
= 
\begin{cases}
\{ w \in \mathcal{V}(G^{\mathrm{RIG}}) \mid \hat{A}_v \cap \hat{A}_w \neq \emptyset, v \neq w \} &\text{if} \quad v \in \mathcal{V}(G^{\mathrm{RIG}}), \\
\emptyset &\text{if} \quad v \in \mathcal{V}(G_0^{\mathrm{DRIG}}).
\end{cases}
\end{equation}
Simultaneously with the \gls{RSA} process, the \gls{DRIG} and following objects will be iteratively constructed:
\begin{itemize}[noitemsep,wide,labelindent=0pt]
\item[--] Let $\hat{X}_0^{\mathrm{RIG}}(t) \subseteq \mathcal{V}(G_0^{\mathrm{RIG}})$ be the \emph{set of isolated, excited vertices at iteration $t$ from $G_0^{\mathrm{RIG}}$}. 
\item[--] Let $\hat{X}_{1+}^{\mathrm{RIG}}(t) \subseteq \mathcal{V}(G_{1+}^{\mathrm{RIG}})$ be the \emph{set of nonisolated, excited vertices at iteration $t$ from $G_{1+}^{\mathrm{RIG}}$}.
\item[--] Let $\hat{X}_0^{\mathrm{DRIG}}(t) \subseteq \mathcal{V}(G_0^{\mathrm{DRIG}})$ be the \emph{set of excited vertices at iteration $t$ from $G_0^{\mathrm{DRIG}}$}. 
\item[--] Let $\hat{U}^{\mathrm{RIG}}(t) \subseteq \mathcal{V}(G^{\mathrm{RIG}})$ be the \emph{set of all unaffected vertices at iteration $t$ from $G^{\mathrm{RIG}}$}. A vertex is \emph{unaffected} if it is neither blocked nor excited.
\item[--] Let $\hat{U}_0^{\mathrm{DRIG}}(t) \subseteq \mathcal{V}(G_0^{\mathrm{DRIG}})$ be the \emph{set of unaffected vertices at iteration $t$ from $G_0^{\mathrm{DRIG}}$}. 
\item[--] Let $\hat{W}^{\mathrm{RIG}}(t) \subseteq \mathcal{A}$ be the \emph{set of all unaffected attributes at iteration $t$ from $G^{\mathrm{RIG}}$}. An attribute is \emph{unaffected} if no vertex connected to it has been excited.
\end{itemize}

\paragraph{Graph exploration algorithm.} Initially, at iteration $t = 0$, set
\begin{equation}
\hat{X}_0^{\mathrm{RIG}}(0) = \emptyset, 
\enskip
\hat{X}_{1+}^{\mathrm{RIG}}(t)(0) = \emptyset,
\enskip
\hat{X}_0^{\mathrm{DRIG}}(0) = \emptyset, 
\enskip
\hat{U}^{\mathrm{RIG}}(0) = \mathcal{V}(G^{\mathrm{RIG}}), 
\enskip
\hat{U}_0^{\mathrm{DRIG}}(0) = \mathcal{V}(G_0^{\mathrm{DRIG}}), 
\enskip
\hat{W}^{\mathrm{RIG}}(0) = \mathcal{A}.
\end{equation}
At each iteration $t \in \{ 1, 2, \ldots, n \}$, a vertex $V(t) \in \hat{U}^{\mathrm{RIG}}(t-1) \cup \hat{U}_0^{\mathrm{DRIG}}(t-1)$ is selected uniformly at random from $\hat{U}^{\mathrm{RIG}}(t-1) \cup \hat{U}_0^{\mathrm{DRIG}}(t-1)$ as long as this set is nonempty. This vertex $V(t)$ is then labeled excited. Thus if $\hat{U}^{\mathrm{RIG}}(t-1) \cup \hat{U}_0^{\mathrm{DRIG}}(t-1)$ is nonempty, then the sets update:
\begin{align}
\hat{X}_0^{\mathrm{RIG}}(t) &= \begin{cases}
    \hat{X}_0^{\mathrm{RIG}}(t-1) \cup \{v\} & \textnormal{if } v \in \mathcal{V}(G^{\mathrm{RIG}}), \hat{N}_v = \emptyset, \\
    \hat{X}_0^{\mathrm{RIG}}(t-1) & \text{otherwise},
\end{cases} \\
\hat{X}_{1+}^{\mathrm{RIG}}(t) &= \begin{cases}
    \hat{X}_{1+}^{\mathrm{RIG}}(t-1) \cup \{v\} & \textnormal{if } v \in \mathcal{V}(G^{\mathrm{RIG}}), \hat{N}_v \neq \emptyset, \\
    \hat{X}_{1+}^{\mathrm{RIG}}(t-1) & \text{otherwise},
\end{cases} \\
\hat{X}_0^{\mathrm{DRIG}}(t) &= \begin{cases}
    \hat{X}_0^{\mathrm{DRIG}}(t-1) \cup \{v\} & \text{if } v \in \mathcal{V}(G_0^{\mathrm{DRIG}}), \\
    \hat{X}_0^{\mathrm{DRIG}}(t-1) & \text{otherwise},
\end{cases} \\
\hat{U}^{\mathrm{RIG}}(t) &= \hat{U}^{\mathrm{RIG}}(t-1) \setminus (\{v\} \cup \hat{N}_v), \\
\hat{U}_0^{\mathrm{DRIG}}(t) &= \hat{U}_0^{\mathrm{DRIG}}(t-1) \setminus \{v\}, \\
\hat{W}^{\mathrm{RIG}}(t) &= \hat{W}^{\mathrm{RIG}}(t-1) \setminus \hat{A}_v.
\label{eqn:Set_evolution}
\end{align}
If otherwise $\hat{U}^{\mathrm{RIG}}(t-1) \cup \hat{U}_0^{\mathrm{DRIG}}(t-1)$ is empty, then the jamming limit has been reached and we set $\hat{X}_0^{\mathrm{RIG}}(t) = \hat{X}_0^{\mathrm{RIG}}(t-1)$, $\hat{X}_{1+}^{\mathrm{RIG}}(t) = \hat{X}_0^{\mathrm{RIG}}(t-1)$, \emph{et cetera}, $\hat{W}^{\mathrm{RIG}}(t+1) = \hat{W}^{\mathrm{RIG}}(t)$.

\subsection{The vector of set sizes constitutes a Markov chain}

In this paper, we only need to investigate the stochastic evolution of the size of each set. We therefore define for $t \in \{ 0, 1, \ldots, n \}$,
\begin{equation}
X_0^{\mathrm{RIG}}(t)
= \lvert \hat{X}_0^{\mathrm{RIG}}(t) \rvert,
\enskip
X_{1+}^{\mathrm{RIG}}(t)
= \lvert \hat{X}_{1+}^{\mathrm{RIG}}(t) \rvert,
\enskip
\textnormal{\emph{et cetera}},
\enskip
W^{\mathrm{RIG}}(t) = \lvert \hat{W}^{\mathrm{RIG}}(t) \rvert,
\end{equation}
and for $t \in \{ 1, \ldots, n \}$,
\begin{equation}
\label{eqn:Definitions_of_dX_dU_dW_etc}
X_0^{\mathrm{RIG}}(t) = X_0^{\mathrm{RIG}}(t-1) + \d{X_0^{\mathrm{RIG}}(t)}, 
\enskip
\textnormal{\emph{et cetera}},
\enskip
W^{\mathrm{RIG}}(t) = W^{\mathrm{RIG}}(t-1) + \d{W^{\mathrm{RIG}}(t)}.
\end{equation}

Now: for $t \in \{ 0, 1, \ldots, n \}$, consider the random vector 
\begin{equation}
E(t) 
= \bigl( X_0^{\mathrm{RIG}}(t), X_{1+}^{\mathrm{RIG}}(t), X_0^{\mathrm{DRIG}}(t), U^{\mathrm{RIG}}(t), U_0^{\mathrm{DRIG}}(t), W^{\mathrm{RIG}}(t) \bigr),
\end{equation}
for which we prove the following.

\begin{lemma}[Markov chain and its drift]
The process $\process{ E(t) }{ t \geq 0 }$ is a Markov chain. Furthermore for $t \in \{ 1, \ldots, n \}$: if $U^{\mathrm{RIG}}(t-1) + U_0^{\mathrm{DRIG}}(t-1) \geq 1$, then
\begin{align}
\expectation{ E(t) | & E(t-1) } - E(t-1)
= 
\nonumber \\ &
\Bigl(
\phantom{-} \frac{U^{\mathrm{RIG}}(t-1)}{U^{\mathrm{RIG}}(t-1) + U_0^{\mathrm{DRIG}}(t-1)} \bigl( 1 - p + p (1 - p)^{n^{\mathrm{RIG}} - X_0^{\mathrm{RIG}}(t-1) - X_{1+}^{\mathrm{RIG}}(t-1)} \bigr)^{W^{\mathrm{RIG}}(t-1)}, 
\label{eqn:Drift_of_dXgi}
\\ &
\phantom{-} \frac{U^{\mathrm{RIG}}(t-1)}{U^{\mathrm{RIG}}(t-1) + U_0^{\mathrm{DRIG}}(t-1)} \bigl( 1 - (1 - p + p (1 - p)^{n^{\mathrm{RIG}} - X_0^{\mathrm{RIG}}(t-1) - X_{1+}^{\mathrm{RIG}}(t-1)} )^{W^{\mathrm{RIG}}(t-1)} \bigr),
\label{eqn:Drift_of_dXgc}
\\ &
\phantom{-} \frac{U_0^{\mathrm{DRIG}}(t-1)}{U^{\mathrm{RIG}}(t-1) + U_0^{\mathrm{DRIG}}(t-1)},
\label{eqn:Drift_of_dXi}
\\ &
- \frac{U^{\mathrm{RIG}}(t-1)}{U^{\mathrm{RIG}}(t-1) + U_0^{\mathrm{DRIG}}(t-1)} (1 + (U^{\mathrm{RIG}}(t-1) - 1)(1 - (1 - p^2)^{W^{\mathrm{RIG}}(t-1)})),
\label{eqn:Drift_of_dUg}
\\ &
- \frac{U_0^{\mathrm{DRIG}}(t-1)}{U^{\mathrm{RIG}}(t-1) + U_0^{\mathrm{DRIG}}(t-1)},
\label{eqn:Drift_of_dUi}
\\ &
- \frac{U^{\mathrm{RIG}}(t-1)}{U_0^{\mathrm{DRIG}}(t-1) + U^{\mathrm{RIG}}(t-1)} p W^{\mathrm{RIG}}(t-1)
\label{eqn:Drift_of_dW}
\Bigr);
\end{align}
otherwise, if $U^{\mathrm{RIG}}(t-1) + U_0^{\mathrm{DRIG}}(t-1) = 0$, then $\expectation{ E(t) | E(t-1) } = E(t-1)$ because \emph{a fortiori} $E(t) = E(t-1)$.
\end{lemma}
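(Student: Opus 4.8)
The plan is to establish both assertions by the principle of deferred decisions: we reveal the attribute--vertex connections of the underlying \gls{RIG} only as the exploration demands them, and maintain throughout the invariant that, conditionally on the history $\mathcal{F}(t-1)$, the bipartite adjacency between the currently unaffected vertices $\hat{U}^{\mathrm{RIG}}(t-1)$ and the currently unaffected attributes $\hat{W}^{\mathrm{RIG}}(t-1)$ consists of mutually independent $\mathrm{Bernoulli}(p)$ entries, and moreover that the connections of the non-excited-but-blocked vertices into that same block of unaffected attributes are also still independent $\mathrm{Bernoulli}(p)$ and as yet unexamined. This invariant is precisely what forces the transition law to depend on the past only through the coordinates of $E(t-1)$.

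For the Markov property I would argue inductively, assuming the invariant at time $t-1$. If $\hat{U}^{\mathrm{RIG}}(t-1)\cup \hat{U}_0^{\mathrm{DRIG}}(t-1)=\emptyset$ the chain is frozen and nothing is to be checked. Otherwise a uniformly chosen vertex $V(t)$ is excited, and the only connections this reveals are the attribute set $\hat{A}_{V(t)}$ of the excited vertex and, for every other unaffected vertex, whether it connects to one of the newly affected attributes in $\hat{A}_{V(t)}$ — exactly the information needed to decide blocking. Attributes in $\hat{A}_{V(t)}$ leave $\hat{W}^{\mathrm{RIG}}$ and vertices touched by them leave $\hat{U}^{\mathrm{RIG}}$; crucially, no edge between a \emph{surviving} unaffected vertex and a \emph{surviving} unaffected attribute is ever inspected, and since an attribute that stays unaffected has had none of its incident edges examined beyond the excited vertices, the edges from the blocked vertices into the surviving unaffected attributes also stay fresh. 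By independence of the \gls{RIG}'s edges the invariant is restored at time $t$. As the law of the increment produced by exciting $V(t)$ is a function only of $U^{\mathrm{RIG}}(t-1)$, $U_0^{\mathrm{DRIG}}(t-1)$, $W^{\mathrm{RIG}}(t-1)$ and the number of non-excited \gls{RIG} vertices $n^{\mathrm{RIG}}-X_0^{\mathrm{RIG}}(t-1)-X_{1+}^{\mathrm{RIG}}(t-1)$ — all coordinates of, or deterministic functions of, $E(t-1)$ — the conditional law of $E(t)$ given $\mathcal{F}(t-1)$ coincides with its conditional law given $E(t-1)$, which is the Markov property.

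For the drift I would condition on which part of the mixture $V(t)$ is drawn from. With probability $U_0^{\mathrm{DRIG}}(t-1)/(U^{\mathrm{RIG}}(t-1)+U_0^{\mathrm{DRIG}}(t-1))$ the vertex is an isolated \gls{DRIG} vertex (Case 2); it has no attributes and no neighbors, so only $X_0^{\mathrm{DRIG}}$ increments and $U_0^{\mathrm{DRIG}}$ decrements by one, yielding \eqref{eqn:Drift_of_dXi} and \eqref{eqn:Drift_of_dUi}. With the complementary probability the vertex is a \gls{RIG} vertex (Case 1), and I would compute three quantities using the fresh Bernoulli structure. First, the probability that the excited vertex is globally isolated factorizes over the $W^{\mathrm{RIG}}(t-1)$ unaffected attributes: for each such attribute either $V(t)$ does not connect to it (probability $1-p$) or it does (probability $p$) but then no other non-excited vertex shares it, and a count of the non-excited vertices that could share the attribute yields the per-attribute factor $1-p+p(1-p)^{n^{\mathrm{RIG}}-X_0^{\mathrm{RIG}}(t-1)-X_{1+}^{\mathrm{RIG}}(t-1)}$, hence the $W^{\mathrm{RIG}}(t-1)$-th power in \eqref{eqn:Drift_of_dXgi}, with its complement giving \eqref{eqn:Drift_of_dXgc}. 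Second, by linearity over the $U^{\mathrm{RIG}}(t-1)-1$ other unaffected vertices, each sharing an unaffected attribute with $V(t)$ with probability $1-(1-p^2)^{W^{\mathrm{RIG}}(t-1)}$, the expected number of unaffected neighbors is $(U^{\mathrm{RIG}}(t-1)-1)(1-(1-p^2)^{W^{\mathrm{RIG}}(t-1)})$; adding the excited vertex itself yields the bracket in \eqref{eqn:Drift_of_dUg}. Third, the number of attributes that become affected equals the number of unaffected attributes to which $V(t)$ connects, of expectation $pW^{\mathrm{RIG}}(t-1)$, giving \eqref{eqn:Drift_of_dW}. Multiplying each conditional expectation by the case probability and summing produces the stated drift vector.

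Finally, the degenerate case $U^{\mathrm{RIG}}(t-1)+U_0^{\mathrm{DRIG}}(t-1)=0$ is immediate, since the algorithm then leaves every set unchanged, so $E(t)=E(t-1)$ almost surely and $\expectation{ E(t) \mid E(t-1) }=E(t-1)$. I expect the genuine difficulty to lie entirely in the deferred-decisions step: one must argue carefully that conditioning on the excited/blocked/unaffected labels of the past constrains only edges incident to excited vertices or to already-affected attributes, so that the ``unaffected $\times$ unaffected'' block — and, importantly, the blocked vertices' edges into that block — remain an untouched independent family. Once that is secured, the drift algebra is routine bookkeeping with independent Bernoulli trials.
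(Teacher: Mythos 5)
Your proposal is correct and takes essentially the same route as the paper: both arguments rest on the observation that, conditional on the exploration history, the unexamined vertex--attribute edges (including those of blocked, non-excited vertices into still-unaffected attributes) remain i.i.d.\ Bernoulli$(p)$; both condition on whether $V(t)$ is drawn from $G^{\mathrm{RIG}}$ or $G_0^{\mathrm{DRIG}}$; and both use the same per-attribute factorization $\bigl(1-p+p(1-p)^{n^{\mathrm{RIG}}-X_0^{\mathrm{RIG}}(t-1)-X_{1+}^{\mathrm{RIG}}(t-1)}\bigr)^{W^{\mathrm{RIG}}(t-1)}$ for the isolation probability and the pairwise sharing probability $1-(1-p^2)^{W^{\mathrm{RIG}}(t-1)}$ for blocking. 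The only differences are presentational: you make the deferred-decisions invariant explicit and conclude the joint transition law in one stroke (the paper instead verifies each coordinate's conditional distribution separately), and you obtain the $U^{\mathrm{RIG}}$-drift by linearity of expectation over the other unaffected vertices, where the paper computes the full distribution of $N_{V(t)}$ by conditioning on $|A_{V(t)}|$ and then sums---both yield the same drift.
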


\begin{proof}
To prove that $\process{E(t)}{t \geq 0}$ is a Markov chain, we need to show that for all trajectories $e(1), \ldots, e(t+1)$,
\begin{equation}
\probability{ E(t+1) = e(t+1) | E(1) = e(1), E(2) = e(2), \ldots, E(t) = e(t) } 
= \probability{ E(t+1) = e(t+1) | E(t) = e(t) }.
\label{eqn:Markov_chain_property}
\end{equation}
We will split the task of showing \eqref{eqn:Markov_chain_property} into examinations of the individual probability density functions describing the components of the vector $E(t)$. Let $\indicator{ \mathcal{E} }$ be the random variable that takes the value one if the event $\mathcal{E}$ is true, and zero otherwise. Substitute \eqref{eqn:Set_evolution} into \eqref{eqn:Definitions_of_dX_dU_dW_etc}, and note that: if $U^{\mathrm{RIG}}(t) + U_0^{\mathrm{DRIG}}(t) > 0$, then
\begin{align}
\d{X_0^{\mathrm{RIG}}(t)} &= \indicator{ V(t) \in \mathcal{V}(G^{\mathrm{RIG}}) } \indicator{ N_{V(t)} = 0 }, \\
\d{X_{1+}^{\mathrm{RIG}}(t)} &= \indicator{ V(t) \in \mathcal{V}(G^{\mathrm{RIG}}) } ( 1 - \indicator{ N_{V(t)} = 0 } ), \\
\d{X_0^{\mathrm{DRIG}}(t)} &= 1 - \indicator{ V(t) \in \mathcal{V}(G^{\mathrm{RIG}}) }, \\
\d{U^{\mathrm{RIG}}(t)} &= - \indicator{ V(t) \in \mathcal{V}(G^{\mathrm{RIG}}) } ( 1 + N_{V(t)} ), \\
\d{U_0^{\mathrm{DRIG}}(t)} &= - ( 1 - \indicator{ V(t) \in \mathcal{V}(G^{\mathrm{RIG}}) } ), \\
\d{W^{\mathrm{RIG}}(t)} &= - A_{V(t)} \indicator{ V(t) \in \mathcal{V}(G^{\mathrm{RIG}}) };
\label{eqn:Intermediate_expressions_for_dX_dU_dW_etc}
\end{align}
otherwise $\d{X_0^{\mathrm{RIG}}(t)} = \cdots = \d{W^{\mathrm{RIG}}(t)} = 0$. We now proceed component-wise.

\noindent
\emph{Proof that $\d{X_0^{\mathrm{RIG}}(t)} | E(1), \ldots, E(t-1) \eqcom{d}= \d{X_0^{\mathrm{RIG}}(t)} | E(t-1)$:}
First note that by construction,
\begin{equation}
\probability{ V(t) \in \mathcal{V}(G^{\mathrm{RIG}}) | E(1), \ldots, E(t) }
= \frac{U^{\mathrm{RIG}}(t-1)}{U^{\mathrm{RIG}}(t-1)+U_0^{\mathrm{DRIG}}(t-1)} \indicator{ U^{\mathrm{RIG}}(t-1) + U_0^{\mathrm{DRIG}}(t-1) \geq 1 }
\label{eqn:Uniform_vertex_selection}
\end{equation}
because $V(t)$ is selected uniformly at random from $\hat{U}_0^{\mathrm{DRIG}}(t-1) \cup \hat{U}^{\mathrm{RIG}}(t-1)$, independently. Note in particular that \eqref{eqn:Uniform_vertex_selection} also implies that $\probability{ V(t) \in \mathcal{V}(G^{\mathrm{RIG}}) \mid E(1), \ldots, E(t) } = \probability{ V(t) \in \mathcal{V}(G^{\mathrm{RIG}}) \mid E(t) }$---a fact that we will use later---but that this is not sufficient for proving the current claim.

To prove the current claim, note that $\d{X_0^{\mathrm{RIG}}(t)} \in \{ 0, 1 \}$. Therefore, using \eqref{eqn:Intermediate_expressions_for_dX_dU_dW_etc} and (i) the law of total probability, we find that
\begin{align}
&
\probability{ \d{X_0^{\mathrm{RIG}}(t)} = 1 | E(1), \ldots, E(t-1) }
\nonumber \\ &
\eqcom{\ref{eqn:Intermediate_expressions_for_dX_dU_dW_etc}}=
\probability{ V(t) \in \mathcal{V}(G^{\mathrm{RIG}}), N_{V(t)} = 0 | E(1), \ldots, E(t-1) }
\nonumber \\ &
\eqcom{i}= \probability{ V(t) \in \mathcal{V}(G^{\mathrm{RIG}}), N_{V(t)} = 0 | E(1), \ldots, E(t-1), V(t) \in \mathcal{V}(G^{\mathrm{RIG}}) } \probability{ V(t) \in \mathcal{V}(G^{\mathrm{RIG}}) | E(1), \ldots, E(t-1) }
\nonumber \\ &
\phantom{=}+ \probability{ V(t) \in \mathcal{V}(G^{\mathrm{RIG}}), N_{V(t)} = 0 | E(1), \ldots, E(t-1), V(t) \not\in \mathcal{V}(G^{\mathrm{RIG}}) } \probability{ V(t) \not\in \mathcal{V}(G^{\mathrm{RIG}}) | E(1), \ldots, E(t-1) }
\nonumber \\ &
= \probability{ V(t) \in \mathcal{V}(G^{\mathrm{RIG}}) | E(1), \ldots, E(t-1) } \probability{ N_{V(t)} = 0 | E(1), \ldots, E(t-1), V(t) \in \mathcal{V}(G^{\mathrm{RIG}}) }.
\label{eqn:Decomposition_of_dXgis_PDF_using_the_law_of_total_probability}
\end{align}
Now by construction of the graph exploration algorithm, we have that
\begin{align}
&
\probability{ N_{V(t)} = 0 | E(1), \ldots, E(t-1), V(t) \in \mathcal{V}(G^{\mathrm{RIG}}) }
\nonumber \\ &
= \expectationBig{ (1-p)^{A_{V(t)} (n - X_0^{\mathrm{RIG}}(t-1) - X_{1+}^{\mathrm{RIG}}(t-1))} \Big| E(1), \ldots, E(t-1), V(t) \in \mathcal{V}(G^{\mathrm{RIG}}) } 
\nonumber \\ &
= \bigl( 1 - p + p (1 - p)^{n^{\mathrm{RIG}} - X_0^{\mathrm{RIG}}(t-1) - X_{1+}^{\mathrm{RIG}}(t-1)} \bigr)^{W^{\mathrm{RIG}}(t-1)}.
\label{eqn:Probability_of_zero_neighbors_in_a_RIG}
\end{align}
Substituting \eqref{eqn:Uniform_vertex_selection} and \eqref{eqn:Probability_of_zero_neighbors_in_a_RIG} into \eqref{eqn:Decomposition_of_dXgis_PDF_using_the_law_of_total_probability} implies that
\begin{equation}
\probability{ \d{X_0^{\mathrm{RIG}}(t)} = 1 | E(1), \ldots, E(t-1) }
= \probability{ \d{X_0^{\mathrm{RIG}}(t)} = 1 | E(t-1) },
\end{equation}
with the right-hand side here being equal to \eqref{eqn:Drift_of_dXgi} because $\d{X_0^{\mathrm{RIG}}(t)} \in \{ 0, 1 \}$. Hence, this substitution also proves \eqref{eqn:Drift_of_dXgi}.

\noindent
\emph{Proofs that $\d{X_0^{\mathrm{DRIG}}(t)} | E(1), \ldots, E(t-1) \eqcom{d}= \d{X_{0}^{\mathrm{DRIG}}(t)} | E(t-1)$ and $\d{U_0^{\mathrm{DRIG}}(t)} | E(1), \ldots, E(t-1) \eqcom{d}= \d{U_0^{\mathrm{DRIG}}(t)} | E(t-1)$:} The first claim follows immediately from \eqref{eqn:Uniform_vertex_selection}; see the discussion below it. Because $\d{X_0^{\mathrm{DRIG}}(t)} \in \{0,1\}$, \eqref{eqn:Uniform_vertex_selection} also implies \eqref{eqn:Drift_of_dXi}. Combined with linearity, the exact same argumentation also proves the second claim as well as \eqref{eqn:Drift_of_dUi}.

\noindent
\emph{Proof that $\d{X_{1+}^{\mathrm{RIG}}(t)} | E(1), \ldots, E(t-1) \eqcom{d}= \d{X_{1+}^{\mathrm{RIG}}(t)} | E(t-1)$:} Combine the two previous arguments; linearity proves the current claim. On account of $\d{X_{1+}^{\mathrm{RIG}}(t)} \in \{0,1\}$, this approach also proves \eqref{eqn:Drift_of_dXgc}.

\noindent
\emph{Proof that $\d{U^{\mathrm{RIG}}(t)} | E(1), \ldots, E(t-1) \eqcom{d}= \d{U^{\mathrm{RIG}}(t)} | E(t-1)$:} Let $k \in \naturalNumbersPlus$. By (i) the law of total probability,
\begin{align}
&
\probability{ \d{U^{\mathrm{RIG}}(t)} = - k | E(1), \ldots, E(t-1) }
\nonumber \\ &
\eqcom{\ref{eqn:Intermediate_expressions_for_dX_dU_dW_etc}}= \probability{ V(t) \in \mathcal{V}(G^{\mathrm{RIG}}), N_{V(t)} = k - 1 | E(1), \ldots, E(t-1) } 
\nonumber \\ &
\eqcom{i}= \probability{ V(t) \in \mathcal{V}(G^{\mathrm{RIG}}), N_{V(t)} = k - 1 | E(1), \ldots, E(t-1), V(t) \in \mathcal{V}(G^{\mathrm{RIG}}) } \probability{ V(t) \in \mathcal{V}(G^{\mathrm{RIG}}) | E(1), \ldots, E(t-1) } 
\nonumber \\ &
\phantom{=}+ \probability{ V(t) \in \mathcal{V}(G^{\mathrm{RIG}}), N_{V(t)} = k - 1 | E(1), \ldots, E(t-1), V(t) \not\in \mathcal{V}(G^{\mathrm{RIG}}) } \probability{  V(t) \not\in \mathcal{V}(G^{\mathrm{RIG}}) | E(1), \ldots, E(t-1) }
\nonumber \\ &
= \probability{ N_{V(t)} = k - 1 | E(1), \ldots, E(t-1), V(t) \in \mathcal{V}(G^{\mathrm{RIG}}) } \probability{ V(t) \in \mathcal{V}(G^{\mathrm{RIG}}) | E(1), \ldots, E(t-1) }.
\label{eqn:Prob_dU_k__Law_of_total_probability}
\end{align}
Note now that by construction, for $v \in \hat{U}^{\mathrm{RIG}}(t-1)$,
\begin{align}
&
\probability{ N_{V(t)} = k - 1 | E(1), \ldots, E(t-1), V(t) \in \mathcal{V}(G^{\mathrm{RIG}}), V(t) = v }
\nonumber \\ &
= \sum_{a=0}^{W^{\mathrm{RIG}}(t-1)} \probability{ N_{V(t)} = k - 1 | E(1), \ldots, E(t-1), V(t) = v, |A_v| = a } \probability{ |A_v| = a | E(1), \ldots, E(t-1), v \in \mathcal{V}(G^{\mathrm{RIG}}) }
\nonumber \\ &
= \sum_{a=0}^{W^{\mathrm{RIG}}(t-1)} \binom{U^{\mathrm{RIG}}(t-1)-1}{k-1} ( 1 - (1-p)^a )^{k-1} (1-p)^{a ( U^{\mathrm{RIG}}(t-1) - k ) } \binom{ W^{\mathrm{RIG}}(t-1) }{a} p^a (1-p)^{ W^{\mathrm{RIG}}(t-1) - a }.
\label{eqn:Prob_NV_km1__Case_I}
\end{align}
Substituting \eqref{eqn:Uniform_vertex_selection} and \eqref{eqn:Prob_NV_km1__Case_I} into \eqref{eqn:Prob_dU_k__Law_of_total_probability}, we find that for all $k \in \naturalNumbersPlus$,
\begin{equation}
\probability{ \d{U^{\mathrm{RIG}}(t)} = k | E(1), \ldots, E(t-1) }
= \probability{ \d{U^{\mathrm{RIG}}(t)} = k | E(t-1) }.
\end{equation}
Here, the right-hand side is implicitly given by the expression that results after the substitution.

We now proceed and calculate the drift. By (i) symmetry, and (ii) after substitution of \eqref{eqn:Prob_NV_km1__Case_I} and algebraic manipulation, we find that
\begin{align}
&
\expectation{ \d{U^{\mathrm{RIG}}(t)} | E(t-1)} 
\nonumber \\ &
= - \frac{U^{\mathrm{RIG}}(t-1)}{U^{\mathrm{RIG}}(t-1) + U_0^{\mathrm{DRIG}}(t-1)} 
\nonumber \\ &
\phantom{=}
\times \sum_{k=1}^{U^{\mathrm{RIG}}(t-1)} k  \sum_{v \in \hat{U}^{\mathrm{RIG}}(t-1)} \probability{ N_{V(t)} = k - 1 | E(1), \ldots, E(t-1), V(t) \in \mathcal{V}(G^{\mathrm{RIG}}), V(t) = v }
\nonumber \\ &
\eqcom{i,ii}= - \frac{U^{\mathrm{RIG}}(t-1)}{U^{\mathrm{RIG}}(t-1) + U_0^{\mathrm{DRIG}}(t-1)} \bigl( 1 - \bigl( U^{\mathrm{RIG}}(t-1) - 1 \bigr) \bigl( 1 - (1-p^2)^{W^{\mathrm{RIG}}(t-1)} \bigr) \bigr).
\end{align}
This proves \eqref{eqn:Drift_of_dUg}.

\noindent
\emph{Proof that $\d{W^{\mathrm{RIG}}(t)} | E(1), \ldots, E(t-1) \eqcom{d}= \d{W^{\mathrm{RIG}}(t)} | E(t-1)$:} Let $k \in \naturalNumbersZero$. By (i) the law of total probability, and (ii) symmetry of the vertices $v \in \hat{U}^{\mathrm{RIG}}(t-1)$,
\begin{align}
&
\probability{ \d{W^{\mathrm{RIG}}(t)} = - k | E(1), \ldots, E(t-1) }
\nonumber \\ &
\eqcom{\ref{eqn:Intermediate_expressions_for_dX_dU_dW_etc}}= \probability{ A_{V(t)} = k, V(t) \in \mathcal{V}(G^{\mathrm{RIG}}) | E(1), \ldots, E(t-1) }
\nonumber \\ &
\eqcom{i}= \sum_{ v \in \hat{U}^{\mathrm{RIG}}(t-1) \cup \hat{U}_0^{\mathrm{DRIG}}(t-1) } \probability{ A_{V(t)} = k, V(t) \in \mathcal{V}(G^{\mathrm{RIG}}) | E(1), \ldots, E(t-1), V(t) = v } \probability{ V(t) = v | E(1), \ldots, E(t-1) }
\nonumber \\ &
\eqcom{ii}= \frac{ U^{\mathrm{RIG}}(t-1) }{ U^{\mathrm{RIG}}(t-1) + U_0^{\mathrm{DRIG}}(t-1) } \binom{W^{\mathrm{RIG}}(t-1)}{k} p^k (1-p)^{W^{\mathrm{RIG}}(t-1)-k}.
\label{eqn:Prob_dW_k_and_Vt_in_G}
\end{align}
Consequently, for all $k \in \naturalNumbersZero$,
\begin{equation}
\probability{ \d{W^{\mathrm{RIG}}(t)} = -k, | E(1), \ldots, E(t-1) }
= \probability{ \d{W^{\mathrm{RIG}}(t)} = -k | E(t-1) }.
\end{equation}
This proves the current claim.

Inspecting \eqref{eqn:Prob_dW_k_and_Vt_in_G}, we can conclude that
\begin{align}
&
- \d{W^{\mathrm{RIG}}(t)} | E(t-1), V(t) \in \mathcal{V}(G^{\mathrm{RIG}}) 
\eqcom{d}= \mathrm{Binomial}( W^{\mathrm{RIG}}(t-1), p ),
\quad
\textnormal{and}
\nonumber \\ &
- \d{W^{\mathrm{RIG}}(t)} | E(t-1), V(t) \not\in \mathcal{V}(G^{\mathrm{RIG}}) 
\eqcom{d}= 0 \textnormal{ w.p.\ one}.
\end{align}
These two observations imply \eqref{eqn:Drift_of_dW}, which concludes the proof.
\end{proof}

\subsection{Martingale decompositions and integral representations of the elements of \texorpdfstring{$X_0^{\mathrm{RIG}}(t), X_{1+}^{\mathrm{RIG}}(t), \ldots, W^{\mathrm{RIG}}(t)$}{processes}}

For $s \in [0,1]$, define scaled variants of the set sizes by
\begin{equation}
x_{0,n}^{\mathrm{RIG}}(s) = \frac{X_0^{\mathrm{RIG}}(\floor{s n})}{n},
\enskip
x_{1+,n}^{\mathrm{RIG}}(s) = \frac{X_{1+}^{\mathrm{RIG}}(\floor{s n})}{n},
\enskip
x_{0,n}^{\mathrm{DRIG}}(s) = \frac{X_0^{\mathrm{DRIG}}(\floor{s n})}{n}, 
\enskip
\textnormal{\emph{et cetera}},
\enskip
w_n^{\mathrm{RIG}}(s) = \frac{W^{\mathrm{RIG}}(\floor{s n})}{n}.
\label{eqn:Definition_of_the_scaled_processes}
\end{equation}

\noindent
\emph{Integral representation of $X_0^{\mathrm{RIG}}(t)$.}
Using Doob--Meyer's decomposition theorem \cite{karatzas_2012} on e.g.\ $X_0^{\mathrm{RIG}}(t)$, we find the martingale decomposition
\begin{align}
X_0^{\mathrm{RIG}}(t) 
&
= \sum\limits_{i=1}^t \d{X_0^{\mathrm{RIG}}(t)}
\label{eqn:Xgi_martingale_decomposition}
\\ &
= M_n^{X_0^{\mathrm{RIG}}}(t) + \sum\limits_{i=1}^t \frac{ U^{\mathrm{RIG}}(i-1) \indicator{U^{\mathrm{RIG}}(i-1) + U_0^{\mathrm{DRIG}}(i-1) \geq 1} }{U^{\mathrm{RIG}}(i-1) + U_0^{\mathrm{DRIG}}(i-1)}
\nonumber \\ &
\phantom{= M_n^{X_0^{\mathrm{RIG}}}(t) + \sum\limits_{i=1}^t} \times \bigl( 1 - p + p (1 - p)^{n^{\mathrm{RIG}} - X_0^{\mathrm{RIG}}(i-1) - X_{1+}^{\mathrm{RIG}}(i-1)} \bigr)^{W^{\mathrm{RIG}}(i-1)},
\nonumber
\end{align}
with $M_n^{X_0^{\mathrm{RIG}}}(t)$ a square-integrable martingale. To see why, note that 
\begin{equation}
M_n^{X_0^{\mathrm{RIG}}}(t) 
= \sum_{i=1}^t \bigl( X_0^{\mathrm{RIG}}(i) - X_0^{\mathrm{RIG}}(i-1) - \expectation{ \d{X_0^{\mathrm{RIG}}(i)} | Z_{i-1} } \bigr) 
= \sum_{i=1}^t \bigl( \d{X_0^{\mathrm{RIG}}(i)} - \expectation{ \d{X_0^{\mathrm{RIG}}(i)} | Z_{i-1} } \bigr).
\end{equation}
by telescoping. Clearly $\expectation{ ( M_n^{X_0^{\mathrm{RIG}}}(t) )^2 } \leq 4 t^2 \leq 4 n^2 < \infty$ by construction of the exploration algorithm, and hypothesis.
Substituting \eqref{eqn:Xgi_martingale_decomposition} into \eqref{eqn:Definition_of_the_scaled_processes}, we find that the scaled process $x_{0,n}^{\mathrm{RIG}}(s)$ can be written to leading order as an integral:
\begin{align}
&
x_{0,n}^{\mathrm{RIG}}(s) 
\nonumber \\ &
= \frac{M_n^{X_0^{\mathrm{RIG}}}(\floor{s n})}{n} + \frac{1}{n} \sum\limits_{i=1}^{\floor{s n}} \frac{ \frac{U^{\mathrm{RIG}}(i-1)}{n} \cdot \indicator{U^{\mathrm{RIG}}(i-1) + U_0^{\mathrm{DRIG}}(i-1) \geq 1} }{  \frac{U^{\mathrm{RIG}}(i-1)}{n} + \frac{U_0^{\mathrm{DRIG}}(t-1)}{n} } 
\nonumber \\ &
\phantom{= \frac{M_n^{X_0^{\mathrm{RIG}}}(\floor{s n})}{n} + \frac{1}{n} \sum\limits_{i=1}^{\floor{s n}}} \times \bigl( 1 - p + p (1 - p)^{ n ( \frac{\sigma}{\sigma + e^{-c}} - \frac{X_0^{\mathrm{RIG}}(i-1)}{n} - \frac{X_{1+}^{\mathrm{RIG}}(i-1)}{n} ) } \bigr)^{W^{\mathrm{RIG}}(i-1)} 
\nonumber \\ &
= \frac{M_n^{X_0^{\mathrm{RIG}}}(\floor{s n})}{n} + \int\limits_0^s \frac{ u_n^{\mathrm{RIG}}(x) \indicator{u_n^{\mathrm{RIG}}(x) + u_{0,n}^{\mathrm{DRIG}}(x) \geq 1/n} }{u_n^{\mathrm{RIG}}(x) + u_{0,n}^{\mathrm{DRIG}}(x)} 
\nonumber \\ &
\phantom{= \frac{M_n^{X_0^{\mathrm{RIG}}}(\floor{s n})}{n} + \int\limits_0^s}
\times \exp{ \Bigl( -\tilde{\gamma} w_n^{\mathrm{RIG}}(x) \bigl( 1 - \e{ -\tilde{\gamma} ( \frac{\sigma}{\sigma + \e{-c}} - x_{0,n}^{\mathrm{RIG}}(t) - x_{1+,n}^{\mathrm{RIG}}(t) ) } \bigr) \Bigr) } \d{x} + o_\mathbb{P}(1),
\end{align}
where $\tilde{\gamma} = (\sigma + \e{-c}) \gamma / \sigma$.

\noindent
\emph{Integral representations of $X_{1+}^{\mathrm{RIG}}(t)$ and $X_0^{\mathrm{DRIG}}(t)$.}
Similarly, we can directly obtain integral representations of $X_{1+}^{\mathrm{RIG}}(t)$ and $X_0^{\mathrm{DRIG}}(t)$'s scaled processes up to leading order. That is
\begin{align}
x_{1+,n}^{\mathrm{RIG}}(s) 
&
= \frac{M_n^{X_{1+}^{\mathrm{RIG}}}(\floor{s n})}{n} + \int\limits_0^s \frac{ u_n^{\mathrm{RIG}}(x) \indicator{u_n^{\mathrm{RIG}}(x) + u_{0,n}^{\mathrm{DRIG}}(x) \geq 1/n} }{u_n^{\mathrm{RIG}}(x) + u_{0,n}^{\mathrm{DRIG}}(x)} 
\nonumber \\ &
\phantom{= \frac{M_n^{X_{1+}^{\mathrm{RIG}}}(\floor{s n})}{n} + \int\limits_0^s} \times \Bigl( 1 - \exp{ \bigl( -\tilde{\gamma} w_n^{\mathrm{RIG}}(x) ( 1 - \e{ -\tilde{\gamma} ( \frac{\sigma}{\sigma + \e{-c}} - x_{0,n}^{\mathrm{RIG}}(t) - x_{1+,n}^{\mathrm{RIG}}(t) ) } ) \bigr) } \Bigr) \d{x} + o_\mathbb{P}(1), 
\nonumber \\
x_{0,n}^{\mathrm{DRIG}}(s) 
&
= \frac{M_n^{X_0^{\mathrm{DRIG}}}(\floor{s n})}{n} + \int\limits_0^s \frac{ u_{0,n}^{\mathrm{DRIG}}(x) \indicator{u_n^{\mathrm{RIG}}(x) + u_{0,n}^{\mathrm{DRIG}}(x) \geq 1/n} }{u_n^{\mathrm{RIG}}(x) + u_{0,n}^{\mathrm{DRIG}}(x)} \d{x},
\end{align}
where $M_n^{X_{1+}^{\mathrm{RIG}}}(t)$ and $M_n^{X_0^{\mathrm{DRIG}}}(t)$ are again square-integrable martingales. 

\noindent
\emph{Integral representation of $U^{\mathrm{RIG}}(t)$.}
Executing the same methodology for $U^{\mathrm{RIG}}(t)$ yields
\begin{align}
U^{\mathrm{RIG}}(t) 
&
= n^{\mathrm{RIG}} + \sum\limits_{i=1}^t \d{U^{\mathrm{RIG}}(t)} 
\nonumber \\ &
= n^{\mathrm{RIG}} + M_n^{U^{\mathrm{RIG}}}(t) - \sum\limits_{i=1}^t  \frac{ \frac{U^{\mathrm{RIG}}(i-1)}{n} \cdot \indicator{U^{\mathrm{RIG}}(i-1) + U_0^{\mathrm{DRIG}}(i-1) \geq 1} }{ \frac{U^{\mathrm{RIG}}(i-1)}{n} + \frac{U_0^{\mathrm{DRIG}}(t-1)}{n} } 
\nonumber \\ &
\phantom{= n^{\mathrm{RIG}} + M_n^{U^{\mathrm{RIG}}}(t) - \sum\limits_{i=1}^t} \times \bigl( 1 + (U^{\mathrm{RIG}}(i-1) - 1)(1 - (1 - p^2 )^{W^{\mathrm{RIG}}(i-1)}) \bigr),
\label{eqn:Summand_for_Ugt}
\end{align}
where $M_n^{U^{\mathrm{RIG}}}(t)$ is once more a square-integrable martingale. Now, expand the multiplicative term in the summand, and note that $W^{\mathrm{RIG}}(t) \leq \lfloor \beta n \rfloor $ w.p.\ one, to conclude that
\begin{equation}
1 - (1 - p^2)^{W^{\mathrm{RIG}}(i-1)} = 1 - (1 - p^2 W^{\mathrm{RIG}}(i-1) + O_\mathbb{P}(p^4 W^{\mathrm{RIG}}(i-1)^2)) = \frac{\tilde{\gamma}^2 W^{\mathrm{RIG}}(i-1)}{n^2} + O_\mathbb{P}(n^{-2}).
\end{equation}
Consequently, $U^{\mathrm{RIG}}$'s scaled process $u_n^{\mathrm{RIG}}(s)$ has, to leading order, the integral representation
\begin{align}
u_n^{\mathrm{RIG}}(s) 
&
= \frac{\sigma}{\sigma + e^{-c}} + \frac{M_n^{U^{\mathrm{RIG}}}(\floor{s n})}{n} - \frac{1}{n} \sum\limits_{i=1}^{\floor{s n}} \frac{ \frac{U^{\mathrm{RIG}}(i-1)}{n} \cdot \indicator{U^{\mathrm{RIG}}(i-1) + U_0^{\mathrm{DRIG}}(i-1) \geq 1} }{  \frac{U^{\mathrm{RIG}}(i-1)}{n} + \frac{U_0^{\mathrm{DRIG}}(t-1)}{n} } 
\nonumber \\ &
\phantom{= \frac{\sigma}{\sigma + e^{-c}} + \frac{M_n^{U^{\mathrm{RIG}}}(\floor{s n})}{n} - \frac{1}{n} \sum\limits_{i=1}^{\floor{s n}}}
\times \Bigl( 1 + \frac{U_{i-1}}{n} \frac{\tilde{\gamma}^2 W^{\mathrm{RIG}}_{i-1}}{n} \Bigr) + O_\mathbb{P}(n^{-1}) 
\nonumber \\ &
= \frac{\sigma}{\sigma + e^{-c}} + \frac{M_n^{U^{\mathrm{RIG}}}(\floor{s n})}{n} - \int\limits_0^s \frac{ u_n^{\mathrm{RIG}}(x) \indicator{u_n^{\mathrm{RIG}}(x) + u_{0,n}^{\mathrm{DRIG}}(x) \geq 1/n} }{u_n^{\mathrm{RIG}}(x) + u_{0,n}^{\mathrm{DRIG}}(x)} (1 + \tilde{\gamma}^2 u_n(x) w_n^{\mathrm{RIG}}(x)) \d{x} + O_\mathbb{P}(n^{-1}).
\label{eqn:Prelimit_integral_representation_of_unRIG}
\end{align}

\noindent
\emph{Integral representation of $U_0^{\mathrm{DRIG}}(t)$.}
It will now be little surprise that the scaled process $u_{0,n}^{\mathrm{DRIG}}(s)$ is given by
\begin{equation}
u_{0,n}^{\mathrm{DRIG}}(s) 
= \frac{e^{-c}}{\sigma + e^{-c}} = + \frac{M_n^{U_0^{\mathrm{DRIG}}}(\floor{s n})}{n} - \int\limits_0^s \frac{ u_{0,n}^{\mathrm{DRIG}}(x) \indicator{u_n^{\mathrm{RIG}}(x) + u_{0,n}^{\mathrm{DRIG}}(x) \geq 1/n} }{u_n^{\mathrm{RIG}}(x) + u_{0,n}^{\mathrm{DRIG}}(x)} \d{x},
\label{eqn:Prelimit_integral_representation_of_u0nDRIG}
\end{equation}
where $M_n^{U^{\mathrm{RIG}}}(t)$ is another square-integrable martingale.

\noindent
\emph{Integral representation of $W^{\mathrm{RIG}}(t)$.}
Finally, the Doob--Meyer decomposition of $W^{\mathrm{RIG}}(t)$ yields the martingale decomposition
\begin{equation}
W^{\mathrm{RIG}}(t) 
= m + \sum\limits_{i=1}^t \d{W^{\mathrm{RIG}}(t)} 
= m + M_n^w(t) - \sum\limits_{i=1}^t \frac{ U^{\mathrm{RIG}}(i-1) \indicator{U^{\mathrm{RIG}}(i-1) + U_0^{\mathrm{DRIG}}(i-1) \geq 1} }{U^{\mathrm{RIG}}(i-1) + U_0^{\mathrm{DRIG}}(i-1)} p W^{\mathrm{RIG}}(i-1),
\end{equation}
where $M_n^w(t)$ is (one more time) a square-integrable martingale. Its scaled partner variable is therefore given by
\begin{align}
w_n^{\mathrm{RIG}}(s) 
&
= \frac{m}{n} + \frac{M_n^w(\floor{s n})}{n} - \frac{1}{n} \sum\limits_{i=1}^{\floor{s n}} p W^{\mathrm{RIG}}(t-1) 
\nonumber \\ &
= \frac{\beta \sigma}{\sigma + e^{-c}} + \frac{M_n^w(\floor{s n})}{n} - \int\limits_0^s \frac{ u_{0,n}^{\mathrm{DRIG}}(x) \indicator{u_n^{\mathrm{RIG}}(x) + u_{0,n}^{\mathrm{DRIG}}(x) \geq 1/n} }{u_n^{\mathrm{RIG}}(x) + u_{0,n}^{\mathrm{DRIG}}(x)} \tilde{\gamma} w_n^{\mathrm{RIG}}(x) \d{x}.
\label{eqn:Prelimit_integral_representation_of_wnRIG}
\end{align}

\subsection{Quadratic variation and convergence of the martingales}

To prove the almost sure convergence of the scaled processes $x_{0,n}^{\mathrm{RIG}}, x_{1+,n}^{\mathrm{RIG}}, \ldots$ to $x_0^{\mathrm{RIG}}, x_{1+}^{\mathrm{RIG}}, \ldots$ as claimed in \refTheorem{thm:Fluid_limit_of_a_RIG_plus_isolated_vertices_exploration}, we will investigate the \emph{quadratic variation} of each of the martingales $M_n^{x_0^{\mathrm{RIG}}}, M_n^{x_{1+}^{\mathrm{RIG}}}, \ldots$. These quadratic variations are given by
\begin{equation}
\langle M_n^{x_0^{\mathrm{RIG}}} \rangle (t) 
= \sum\limits_{i=1}^t \mathbb{V}_{i-1}[ \d{X_0^{\mathrm{RIG}}(i)} ], 
\enskip
\textnormal{\emph{et cetera}},
\enskip
\langle M_n^w \rangle (t) 
= \sum\limits_{i=1}^t \mathbb{V}_{i-1}[ \d{W^{\mathrm{RIG}}(i)} ]
\label{eqn:Definitions_of_the_quadratic_variations}
\end{equation}
where we have introduced the short-hand notations $\mathbb{V}_t[ \cdot ] \triangleq \mathbb{V}[ \cdot | E(t) ]$, $\mathbb{E}_t[ \cdot ] \triangleq \expectation{ \cdot | E(t) }$ for these conditional expectations.

Concretely, we prove the following in this section:

\begin{lemma}[Convergence of martingales]
\label{lem:Convergence_of_martingales}
For any $\alpha > 1/2$ it holds that
\begin{equation}
\frac{1}{n^\alpha} \sup_{s \in [0,1]} \lvert M_n^{x_0^{\mathrm{RIG}}}(\floor{s n}) \rvert 
\xrightarrow{\mathbb{P}} 0, 
\quad
\frac{1}{n^\alpha} \sup_{s \in [0,1]} \lvert M_n^{x_{1+}^{\mathrm{RIG}}}(\floor{s n}) \rvert \xrightarrow{\mathbb{P}} 0, 
\quad
\textnormal{\emph{et cetera}},
\quad
\frac{1}{n^\alpha} \sup_{s \in [0,1]} \lvert M_n^w(\floor{s n}) \rvert \xrightarrow{\mathbb{P}} 0
\end{equation}
as $n \to \infty$.
\end{lemma}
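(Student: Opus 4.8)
The plan is to prove each convergence by the standard two-step route for scaled martingales: first invoke Doob's $L^2$ maximal inequality to dominate the running supremum by the terminal quadratic variation, and then apply Markov's inequality to extract the decay rate $n^{1-2\alpha}\to 0$. Since every $M_n^{\bullet}$ was shown above to be a square-integrable martingale, Doob's inequality gives, for instance,
\[
\expectation{ \sup_{s \in [0,1]} \bigl( M_n^{x_0^{\mathrm{RIG}}}(\floor{s n}) \bigr)^2 } \leq 4 \, \expectation{ \bigl( M_n^{x_0^{\mathrm{RIG}}}(n) \bigr)^2 } = 4 \, \expectation{ \langle M_n^{x_0^{\mathrm{RIG}}} \rangle(n) },
\]
and likewise for each of the other martingales. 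Everything therefore reduces to controlling the terminal quadratic variations, which by \eqref{eqn:Definitions_of_the_quadratic_variations} are sums of conditional variances of the increments; the goal is to show each grows at most linearly, i.e.\ $\expectation{ \langle M_n^{\bullet} \rangle(n) } = O(n)$.

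For the martingales attached to uniformly bounded increments --- namely $\d{X_0^{\mathrm{RIG}}}, \d{X_{1+}^{\mathrm{RIG}}}, \d{X_0^{\mathrm{DRIG}}} \in \{0,1\}$ and $\d{U_0^{\mathrm{DRIG}}} \in \{-1,0\}$ --- this is immediate, since each conditional variance $\mathbb{V}_{i-1}[\cdot]$ is at most $1/4$, whence $\langle M_n^{\bullet} \rangle(n) \leq n/4 = O(n)$ deterministically. The only cases that require genuine work are $M_n^{U^{\mathrm{RIG}}}$ and $M_n^w$, whose increments $\d{U^{\mathrm{RIG}}(i)} = -\indicator{ V(i) \in \mathcal{V}(G^{\mathrm{RIG}}) }(1 + N_{V(i)})$ and $\d{W^{\mathrm{RIG}}(i)}$ are not bounded by a constant.

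For $M_n^w$ I would use the fact, recorded in the proof of the previous lemma, that conditionally on $E(i-1)$ and on $V(i) \in \mathcal{V}(G^{\mathrm{RIG}})$ the decrement $-\d{W^{\mathrm{RIG}}(i)}$ is $\mathrm{Binomial}(W^{\mathrm{RIG}}(i-1), p)$, and is zero otherwise. Because $p = \gamma / n^{\mathrm{RIG}}$ while $W^{\mathrm{RIG}}(i-1) \leq \lfloor \beta n^{\mathrm{RIG}} \rfloor$ w.p.\ one, this binomial has mean and variance both at most $\beta \gamma$, so its second moment --- and hence $\mathbb{V}_{i-1}[\d{W^{\mathrm{RIG}}(i)}]$ --- is $O(1)$ uniformly in $i$ and $n$; summing over $i \leq n$ gives $\langle M_n^w \rangle(n) = O(n)$. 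For $M_n^{U^{\mathrm{RIG}}}$ the same scaling is what saves us, and this is the main obstacle: one must verify, using the explicit conditional law of $N_{V(i)}$ in \eqref{eqn:Prob_NV_km1__Case_I}, that the conditional second moment $\expectation{ N_{V(i)}^2 \mid E(i-1), V(i) \in \mathcal{V}(G^{\mathrm{RIG}}) }$ stays bounded by a constant depending only on $\beta, \gamma$. The delicate point is that $N_{V(i)}$ has a quadratic-in-shared-attributes structure and can in principle be as large as $U^{\mathrm{RIG}}(i-1) = O(n)$; one checks that this does not inflate the second moment, since the attribute count $|A_v|$ is $\mathrm{Binomial}(W^{\mathrm{RIG}}, p)$ with $O(1)$ mean and each retained attribute contributes only $O(\gamma)$ co-vertices in expectation through the factor $1-(1-p)^{|A_v|} = O(|A_v| p)$. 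Granting this, $\mathbb{V}_{i-1}[\d{U^{\mathrm{RIG}}(i)}] = O(1)$ uniformly and $\langle M_n^{U^{\mathrm{RIG}}} \rangle(n) = O(n)$ as well.

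With $\expectation{ \langle M_n^{\bullet} \rangle(n) } \leq C n$ in hand for every martingale (the constant $C$ depending only on $\beta, \gamma, c$), Markov's inequality finishes the argument: for any $\eps > 0$,
\[
\probability{ \frac{1}{n^\alpha} \sup_{s \in [0,1]} \lvert M_n^{\bullet}(\floor{s n}) \rvert > \eps } \leq \frac{ \expectation{ \sup_{s \in [0,1]} \bigl( M_n^{\bullet}(\floor{s n}) \bigr)^2 } }{ \eps^2 n^{2\alpha} } \leq \frac{ 4 C n }{ \eps^2 n^{2\alpha} } = \frac{4C}{\eps^2} \, n^{1 - 2\alpha} \to 0
\]
as $n \to \infty$, precisely because $\alpha > 1/2$ forces the exponent $1 - 2\alpha < 0$. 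Applying this bound to each of $M_n^{x_0^{\mathrm{RIG}}}, M_n^{x_{1+}^{\mathrm{RIG}}}, \ldots, M_n^w$ in turn yields the stated convergence in probability.
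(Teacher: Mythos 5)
Your proposal is correct and follows essentially the same route as the paper: Doob's $L^2$ maximal inequality combined with Markov's inequality, reduced to showing each quadratic variation is $O(n)$, with the $\{0,1\}$-increment martingales handled trivially and the two unbounded cases ($M_n^{U^{\mathrm{RIG}}}$ and $M_n^w$) controlled through the binomial structure of the increments. The one step you leave as ``granting this''---the uniform $O(1)$ bound on $\mathbb{V}_{i-1}[ \d{U^{\mathrm{RIG}}(i)} ]$---is exactly what the paper carries out explicitly via the law of total variance conditioned on $A_{V(t)}$ (yielding the constant $\beta\gamma^2(1+\gamma)$), and your heuristic for why the $O(n)$-sized range of $N_{V(i)}$ does not inflate the second moment is the correct one and completes routinely.
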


\begin{proof}
We are going to prove that for any $s \in [0, 1]$,
\begin{gather}
\langle M_n^{x_0^{\mathrm{RIG}}} \rangle (\floor{s n}) \leq n, 
\enskip
\langle M_n^{x_{1+}^{\mathrm{RIG}}} \rangle (\floor{s n}) \leq n, 
\enskip
\langle M_n^{x_0^{\mathrm{DRIG}}} \rangle (\floor{s n}) \leq n, 
\nonumber \\
\langle M_n^{u^{\mathrm{RIG}}} \rangle (\floor{s n}) \leq \beta \gamma^2 (1 + \gamma) n, 
\enskip
\langle M_n^{u_0^{\mathrm{DRIG}}} \rangle (\floor{s n}) \leq n, 
\enskip
\langle M_n^w \rangle (\floor{s n}) \leq \jaron{ \beta \gamma ( 1 + \beta \gamma ) } n
\label{eqn:Absolute_bounds_on_the_quadratic_variations}
\end{gather}
with probability one. The result then follows after an application of Doob's inequality \cite{liptser_2012}, because e.g.\
\begin{equation}
\probabilityBig{ \frac{1}{n^\alpha} \sup_{s \in [0,1]} \lvert M_n^{x_0^{\mathrm{RIG}}}(\floor{s n}) \rvert \geq \varepsilon }
\leq \frac{1}{\varepsilon^2 n^{2\alpha}} \expectation{ ( M_n^{gi}(n) )^2 }
= \frac{1}{\varepsilon^2 n^{2\alpha}} \expectation{ \langle M_n^{gi}(n) \rangle }
\end{equation} 
and therefore, under \eqref{eqn:Absolute_bounds_on_the_quadratic_variations}, for all $\varepsilon > 0$
\begin{equation}
\lim_{n \to \infty} \probabilityBig{ \frac{1}{n^\alpha} \sup_{s \in [0,1]} \lvert M_n^{x_0^{\mathrm{RIG}}} \rvert \geq \varepsilon }
\leq \lim_{n \to \infty} \frac{1}{n^{2\alpha-1} \varepsilon^2} 
= 0.
\end{equation}
The remaining claims follow \emph{mutatis mutandis}.

\noindent
\emph{Proof that $\langle M_n^{x_0^{\mathrm{RIG}}}( \lfloor sn \rfloor ) \rangle$, $\langle M_n^{x_{1+}^{\mathrm{RIG}}}( \lfloor sn \rfloor ) \rangle$, $\langle M_n^{x_0^{\mathrm{DRIG}}}( \lfloor sn \rfloor ) \rangle$, $\langle M_n^{u_0^{\mathrm{DRIG}}}( \lfloor sn \rfloor ) \rangle \leq n$ w.p.\ one.}
Note that 
\begin{equation}
\mathbb{V}_{i-1}[ \d{X_0^{\mathrm{RIG}}(i)} ], 
\mathbb{V}_{i-1}[ \d{x_{1+}^{\mathrm{RIG}}(i)} ], 
\mathbb{V}_{i-1}[ \d{X_0^{\mathrm{DRIG}}(i)} ], 
\mathbb{V}_{i-1}[ \d{U_0^{\mathrm{DRIG}}(i)} ] 
\leq 1
\quad
\textnormal{w.p.\ one}
\end{equation}
because $\d{X_0^{\mathrm{RIG}}(t)}, \d{X_{1+}^{\mathrm{RIG}}(t)}, \d{X_0^{\mathrm{DRIG}}(t)}$, $\d{U_0^{\mathrm{DRIG}}(t)} \in \{0, 1\}$. Bound \eqref{eqn:Definitions_of_the_quadratic_variations} directly and the claim follows.

\noindent
\emph{Proof that $\langle M_n^{u^{\mathrm{RIG}}}( \lfloor sn \rfloor ) \rangle \leq \beta \gamma^2 (1 + \gamma) n$ w.p.\ one.}
Consider a summand in the quadratic variation of $M_n^{u^{\mathrm{RIG}}}(t)$ in \eqref{eqn:Definitions_of_the_quadratic_variations}. The law of total variance implies for it that
\begin{equation}
\mathbb{V}_{t-1}[ \d{U^{\mathrm{RIG}}(t)} ] 
= \ecnd{ \mathbb{V}_{t-1}[ \d{U^{\mathrm{RIG}}(t)} \mid A_{V(t)} ] } + \mathbb{V}_{t-1}[ \ecnd{ \d{U^{\mathrm{RIG}}(t)} \mid A_{V(t)} } ].
\label{eqn:dUg__Law_of_total_variance}
\end{equation}
Recall now that by construction of the exploration algorithm, $\d{U^{\mathrm{RIG}}(t)} | E(t-1), A_{V(t)}$ is binomially distributed with parameters $U^{\mathrm{RIG}}(t-1) - 1$ and $(1-p)^{A_{V(t)}}$ if $U^{\mathrm{RIG}}(t-1) + U_0^{\mathrm{DRIG}}(t-1) \geq 1$. Recollect in addition that for $n \in \naturalNumbersZero, p \in [0,1]$, $\expectation{ \mathrm{Binomial}(n,p)} = np$ and $\mathbb{V}[ \mathrm{Binomial}(n,p) ] = np(1-p)$. If $U^{\mathrm{RIG}}(t-1) + U_0^{\mathrm{DRIG}}(t-1) \geq 1$, then the first term in \eqref{eqn:dUg__Law_of_total_variance} expands therefore as
\begin{align}
\ecnd{ \mathbb{V}_{t-1}[ \d{U^{\mathrm{RIG}}(t)} \mid A_{V(t)} ]} 
&
= ( U^{\mathrm{RIG}}(t-1) - 1 ) \ecnd{ (1-p)^{A_{V(t)}} (1 - (1-p)^{A_{V(t)}}) } 
\nonumber \\ &
= ( U^{\mathrm{RIG}}(t-1) - 1 ) \bigl( \ecnd{ (1-p)^{A_{V(t)}} } - \ecnd{ (1-p)^{ 2 A_{V(t)} }} \bigr) 
\nonumber \\ &
= ( U^{\mathrm{RIG}}(t-1) - 1 ) \bigl( (1-p + p(1-p))^{ W^{\mathrm{RIG}}(t-1) } - ( 1-p + p(1-p)^2 )^{W^{\mathrm{RIG}}(t-1)} \bigr) 
\nonumber \\ &
= ( U^{\mathrm{RIG}}(t-1) - 1 ) \bigl( (1-p^2)^{ W^{\mathrm{RIG}}(t-1) } - (1-2p^2+p^3)^{ W^{\mathrm{RIG}}(t-1) } \bigr),
\label{eqn:dUg__Law_of_total_variance__Term_I}
\end{align}
and the second term in \eqref{eqn:dUg__Law_of_total_variance} as
\begin{align}
\mathbb{V}_{t-1}[ \ecnd{ \d{U^{\mathrm{RIG}}(t)} \mid A_{V(t)} } ] 
&
= ( U^{\mathrm{RIG}}(t-1) - 1 )^2 \mathbb{V}_{t-1}[ (1-p)^{ A_{V(t)} } ] 
\nonumber \\ &
= ( U^{\mathrm{RIG}}(t-1) - 1 )^2 \bigl( \ecnd{(1-p)^{2A_{V(t)}}} - \ecnd{(1-p)^{A_{V(t)}}}^2 \bigr) 
\nonumber \\ & 
= ( U^{\mathrm{RIG}}(t-1) - 1 )^2 \bigl( (1-2p^2+p^3)^{W^{\mathrm{RIG}}(t-1)} - (1-p^2)^{2W^{\mathrm{RIG}}(t-1)} \bigr) 
\nonumber \\ & 
= ( U^{\mathrm{RIG}}(t-1) - 1 )^2 \bigl( (1-2p^2+p^3)^{W^{\mathrm{RIG}}(t-1)} - (1-2p^2+p^4)^{W^{\mathrm{RIG}}(t-1)} \bigr).
\label{eqn:dUg__Law_of_total_variance__Term_II}
\end{align}
Note now the algebraic (in)equalities that (i) for all $a, b \in \realNumbers$ and $m \in \naturalNumbersPlus$, $a^m - b^m = (a-b) \sum_{i=1}^m a^{m-i} b^{i-1}$, $U^{\mathrm{RIG}}(t-1) \leq n$ w.p.\ one, (ii) $0 \leq 1-2p^2+p^3 \leq 1 - p^2 \leq 1$, and finally (iii) $W^{\mathrm{RIG}}(t-1) \leq \lfloor \beta n^{\mathrm{RIG}} \rfloor$ w.p.\ one. We therefore have
\begin{align}
\ecnd{ \mathbb{V}_{t-1}( \d{U^{\mathrm{RIG}}(t)} \mid A_{V(t)}) } 
&
\eqcom{i}\leq n^{\mathrm{RIG}} \indicator{W^{\mathrm{RIG}}(t)>0} p^2 (1-p) \sum_{i=1}^{W^{\mathrm{RIG}}(t)} (1-p^2)^{W^{\mathrm{RIG}}(t)-i} (1-2p^2+p^3)^{i-1}
\nonumber \\ &
\eqcom{ii}\leq n^{\mathrm{RIG}} p^2(1-p) W^{\mathrm{RIG}}(t) 
\eqcom{iii}\leq \frac{\gamma^2 \lfloor \beta n^{\mathrm{RIG}} \rfloor }{n^{\mathrm{RIG}}}
\leq \gamma^2 \beta
\quad
\textnormal{w.p.\ one}.
\end{align}
Similarly
$
\mathbb{V}_{t-1}[ \ecnd{ \d{U^{\mathrm{RIG}}(t)} \mid A_{V(t)} } ]
\leq \gamma^3 \beta
$
with probability one. This proves the claim.

\noindent
\emph{Proof that $\langle M_n^{w}( \lfloor sn \rfloor ) \rangle \leq \beta \gamma (1 + \beta \gamma) n$ w.p.\ one.} Finally, if $U^{\mathrm{RIG}}(t-1) + U_0^{\mathrm{DRIG}}(t-1) \geq 1$, the summand in the quadratic variation of $M_n^w(t)$ in \eqref{eqn:Definitions_of_the_quadratic_variations} satisfies
\begin{align}
&
\mathbb{V}_{t-1}[ \d{W^{\mathrm{RIG}}(t)} ] 
\nonumber \\ &
\eqcom{\ref{eqn:Intermediate_expressions_for_dX_dU_dW_etc}}= \ecnd{ \mathrm{Var}_{t-1}[ \d{W^{\mathrm{RIG}}(t)} | V(t) \in \mathcal{V}(G^{\mathrm{RIG}}) ] } 
+ \mathrm{Var}_{t-1}[ \ecnd{ \d{W^{\mathrm{RIG}}(t)} | V(t) \in \mathcal{V}(G^{\mathrm{RIG}}) } ]
\nonumber \\ &
= \ecnd{ \mathrm{Var}_{t-1}[ - A_{V(t)} \indicator{ V(t) \in \mathcal{V}(G^{\mathrm{RIG}}) } | V(t) \in \mathcal{V}(G^{\mathrm{RIG}}) ] } 
\nonumber \\ &
\phantom{=}+ \mathrm{Var}_{t-1}[ \ecnd{ - A_{V(t)} \indicator{ V(t) \in \mathcal{V}(G^{\mathrm{RIG}}) } | V(t) \in \mathcal{V}(G^{\mathrm{RIG}}) } ]
\nonumber \\ &
= \ecnd{ \indicator{ V(t) \in \mathcal{V}(G^{\mathrm{RIG}}) } W^{\mathrm{RIG}}(t-1) p(1-p) } + \mathrm{Var}_{t-1}[ - \indicator{ V(t) \in \mathcal{V}(G^{\mathrm{RIG}}) } W^{\mathrm{RIG}}(t-1) p ]
\nonumber \\ &
= \frac{U^{\mathrm{RIG}}(t-1)}{U_0^{\mathrm{DRIG}}(t-1) + U^{\mathrm{RIG}}(t-1)} W^{\mathrm{RIG}}(t-1) p(1-p) + ( W^{\mathrm{RIG}}(t-1) )^2 p^2 \mathrm{Var}_{t-1}[ \d{X_0^{\mathrm{DRIG}}}(t) - 1 ]
\nonumber \\ &
\leq \frac{\gamma}{n^{\mathrm{RIG}}} \Bigl( 1 - \frac{\gamma}{n^{\mathrm{RIG}}} \Bigr) \lfloor \beta n^{\mathrm{RIG}} \rfloor + \Bigl( \frac{\gamma}{n^{\mathrm{RIG}}} \Bigr)^2 \lfloor \beta n^{\mathrm{RIG}} \rfloor^2 
\leq \frac{ \gamma \lfloor \beta n^{\mathrm{RIG}} \rfloor }{n^{\mathrm{RIG}}} + \frac{ \gamma^2 \lfloor \beta n^{\mathrm{RIG}} \rfloor^2 }{(n^{\mathrm{RIG}})^2}
\leq \beta \gamma ( 1 + \beta \gamma )
\quad
\textnormal{w.p.\ one}.
\end{align}
This completes the proof.
\end{proof}

\subsection{Uniform convergence in probability of the scaled processes}
\label{sec:Appendix__Uniform_convergence_in_probability}

Recall our definitions for $x_0^{\mathrm{RIG}}, x_{1+}^{\mathrm{RIG}}, x_0^{\mathrm{DRIG}}, u^{\mathrm{RIG}}, u_0^{\mathrm{DRIG}}, w$ in \eqref{eq:limits_processes}. We are going to prove the following:

\begin{lemma}[Uniform convergence in probability]
\label{lem:Uniform_convergence_in_probability_of_the_scaled_processes}
For any $0 \leq S < \phi$, $\eps > 0$:
\begin{equation}
\probability{ \sup_{s \in [0,S]} | x_{0,n}^{\mathrm{RIG}}(s) - x_0^{\mathrm{RIG}}(s) | \geq \eps }
\to 0,
\enskip
\textnormal{\emph{et cetera}},
\enskip
\probability{ \sup_{s \in [0,S]} | w_n^{\mathrm{RIG}}(s) - w(s) | \geq \eps }
\to 0
\end{equation}
as $n \to \infty$.
\end{lemma}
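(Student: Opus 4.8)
The plan is to feed the integral representations \eqref{eqn:Prelimit_integral_representation_of_unRIG}--\eqref{eqn:Prelimit_integral_representation_of_wnRIG} into a Gr\"{o}nwall argument, using \refLemma{lem:Convergence_of_martingales} to dispose of the martingale terms. The obstruction, flagged in the introduction, is that every integrand and its limit carries the factor $1/(u^{\mathrm{RIG}}(y) + u_0^{\mathrm{DRIG}}(y))$, whose denominator vanishes as $y \uparrow \phi$; hence no \emph{global} Lipschitz constant exists and the textbook Gr\"{o}nwall estimate breaks down. I would circumvent this by localizing to a region where the denominator is bounded away from zero.

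Concretely, fix $0 \le S < \phi$ and put $\delta = \inf_{s \in [0,S]} (u^{\mathrm{RIG}} + u_0^{\mathrm{DRIG}})(s)$. Since $u = u^{\mathrm{RIG}} + u_0^{\mathrm{DRIG}}$ is continuous and, by the construction of $\phi$, strictly positive on $[0,S]$ and first vanishes only at $\phi > S$, we have $\delta > 0$. Introduce the stopping time $\tau_n = \inf\{ s \ge 0 : u_n^{\mathrm{RIG}}(s) + u_{0,n}^{\mathrm{DRIG}}(s) < \delta/2 \}$. On $[0, S \wedge \tau_n)$ the prelimit denominator is at least $\delta/2 > 1/n$ for large $n$, so the indicators $\indicator{u_n^{\mathrm{RIG}}(y) + u_{0,n}^{\mathrm{DRIG}}(y) \ge 1/n}$ equal one and the prelimit drifts coincide in form with the limiting integrands in \eqref{eq:limits_processes}. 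Moreover, on this region each of the six integrands, regarded as a function of the vector $V = (x_0^{\mathrm{RIG}}, x_{1+}^{\mathrm{RIG}}, x_0^{\mathrm{DRIG}}, u^{\mathrm{RIG}}, u_0^{\mathrm{DRIG}}, w)$, is Lipschitz with a constant $L = L(\delta, \beta, \gamma, c)$: the map $(a,b) \mapsto a/(a+b)$ has gradient bounded by $2/\delta$ once $a + b \ge \delta/2$, while the remaining exponential and polynomial factors are smooth in coordinates confined to the bounded ranges $x_\bullet \in [0,1]$, $w \in [0,\beta]$.

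With local Lipschitz continuity secured, set $\Delta_n(s) = \sup_{y \in [0,s]} \| V_n(y) - V(y) \|$, where $V_n$ stacks the scaled processes of \eqref{eqn:Definition_of_the_scaled_processes} and $V$ their putative limits. Subtracting \eqref{eq:limits_processes} from \eqref{eqn:Prelimit_integral_representation_of_unRIG}--\eqref{eqn:Prelimit_integral_representation_of_wnRIG} and bounding componentwise yields, for $s \le S \wedge \tau_n$,
\begin{equation}
\Delta_n(s) \le \mathcal{M}_n + L \int_0^s \Delta_n(y) \, \d{y},
\end{equation}
where $\mathcal{M}_n$ collects the six martingale suprema---each $o_\mathbb{P}(1)$ by \refLemma{lem:Convergence_of_martingales} with any $\alpha \in (1/2, 1)$---together with the $o_\mathbb{P}(1)$ discretization errors. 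Gr\"{o}nwall's inequality then gives $\Delta_n(S \wedge \tau_n) \le \mathcal{M}_n e^{LS} \xrightarrow{\mathbb{P}} 0$.

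It remains to remove the localization by a bootstrap. Let $H_n$ be the high-probability event on which the maximal \gls{RIG}-degree is $o(n)$, so that the downward jumps $(1 + N_{V(t)})/n$ of the scaled denominator are uniformly $o(1)$; call this bound $\eps_n' \to 0$. Suppose $\tau_n \le S$ on $H_n$. Then $u_n^{\mathrm{RIG}}(\tau_n) + u_{0,n}^{\mathrm{DRIG}}(\tau_n) < \delta/2$, whereas continuity gives $u(\tau_n) \ge \delta$; adding back the single jump, $\Delta_n(\tau_n^-) \ge u(\tau_n) - (u_n^{\mathrm{RIG}} + u_{0,n}^{\mathrm{DRIG}})(\tau_n^-) > \delta/2 - \eps_n'$. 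Combined with the Gr\"{o}nwall bound this forces $\mathcal{M}_n e^{LS} > \delta/2 - \eps_n'$, an event of probability tending to zero. Hence $\tau_n > S$ with high probability, so $\Delta_n(S) = \Delta_n(S \wedge \tau_n) \le \mathcal{M}_n e^{LS} \to 0$ in probability, which is exactly the claimed uniform convergence. The main obstacle is the non-Lipschitz blow-up at $\phi$, and the heart of the argument is the stopping-time localization together with the jump-controlled bootstrap showing $\tau_n$ exceeds $S$ with high probability.
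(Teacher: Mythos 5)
Your proof is correct, and its first half coincides with the paper's: both arguments localize to a region where $u^{\mathrm{RIG}}+u_0^{\mathrm{DRIG}}$ is bounded below by a positive constant (so that the integrands, which are only locally Lipschitz because of the factor $1/(u^{\mathrm{RIG}}+u_0^{\mathrm{DRIG}})$, become Lipschitz there), then apply Gr\"{o}nwall and dispose of the martingale terms via \refLemma{lem:Convergence_of_martingales}. The genuine difference is the de-localization step. The paper conditions on the event $\mathcal{E}_n(\eps)=\{\inf_{s\in[0,S]}(u_n^{\mathrm{RIG}}+u_{0,n}^{\mathrm{DRIG}})(s)\geq\eps\}$ and proves $\probability{\mathcal{E}_n(\eps)}\to 1$ by invoking \refTheorem{thm:Jamming_limit_converges_in_probability}: since $U$ decreases by at least one per iteration, $T_n^*>S+\eps$ forces $U(\lfloor Sn\rfloor)\geq n\eps$, and the hitting-time convergence (established independently of this lemma, via a globally Lipschitz fluid limit for the RIG-only exploration) finishes the job. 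You instead run a stopping-time bootstrap entirely inside the lemma: if the scaled denominator dipped below $\delta/2$ before $S$, the discrepancy just before that instant would already exceed $\delta/2-o(1)$, contradicting the Gr\"{o}nwall bound. Your route buys logical self-containedness --- Lemma~\ref{lem:Uniform_convergence_in_probability_of_the_scaled_processes} no longer depends on Theorem~\ref{thm:Jamming_limit_converges_in_probability} --- at the price of one extra ingredient: the uniform $o(1)$ control of downward jumps, which you justify by asserting, without proof, a high-probability event on which the maximal RIG degree is $o(n)$. That assertion is true (degrees are sums over Poisson-many attributes of Poisson-size attribute classes, so the maximum degree is polylogarithmic w.h.p.), but it is a concentration estimate the paper never needs, and as written it is the one gap in your argument. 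You can close it without any new estimate: writing $u_n^{\mathrm{RIG}}+u_{0,n}^{\mathrm{DRIG}}$ as initial value plus drift sums plus martingales, each one-step drift increment is deterministically $O(1/n)$ (the conditional expected decrement per step is $1+O(n p^2 W^{\mathrm{RIG}})=O(1)$), while any single-step martingale increment is at most $2\sup_t \lvert M_n(t)\rvert /n$, which is $o_\mathbb{P}(1)$ by \refLemma{lem:Convergence_of_martingales}; this is exactly the jump bound your bootstrap requires, obtained from results you are already using.
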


\noindent
\emph{Proof that for $0 \leq S < \phi$, $u_n^{\mathrm{RIG}} \to u^{\mathrm{RIG}}$, $u_{0,n}^{\mathrm{DRIG}} \to u_0^{\mathrm{DRIG}}$ and $w_n^{\mathrm{RIG}} \to w$ uniformly on $[0,S]$ in probability.}
Start by noting that the vector-valued function $f(x_1,x_2,x_3) = \bigl( x_1 (1 + \tilde{\gamma}^2 x_1 x_3), x_2, \tilde{\gamma} x_1 x_3 \bigr) / (x_1 + x_2)$ is Lipschitz continuous on $\mathcal{D}_\eps = \{ (x_1,x_2) \in [0,1] | x_1 + x_2 \geq \eps \} \times [ 0, {\beta \sigma} / {\sigma + e^{-c}} ]$ for all $\eps > 0$. \emph{Viz.}, 
\begin{equation}
\forall \eps > 0 \, \exists L_\eps > 0 
: 
\pnorm{ f(x_1,x_2,x_3) - f(y_1,y_2,y_3) }{1}
\leq L_\eps \pnorm{ (x_1,x_2,x_3) - (y_1,y_2,y_3) }{1}
\, \forall x, y \in \mathcal{D}_\eps.
\end{equation}
Also note that consequently, for the vector-valued function $h_n(x_1,x_2,x_3) = f(x_1,x_2,x_3) \indicator{ x_1 \geq 1/n }$, it holds that
\begin{align}
\forall \eps > 0 \exists L_\eps > 0 
:
\pnorm{ h_n(x) - f(y) }{1}
&
\leq \pnorm{ h_n(x) - f(x) }{1} + \pnorm{ f(x) - f(y) }{1} 
\nonumber \\ &
\leq C_{\beta,\tilde{\gamma}} \indicator{ x_1 + x_2 \in [\eps,1/n) } + L_\eps \pnorm{x-y}
\, \forall x, y \in \mathcal{D}_\eps.
\label{eqn:Lipschitz_continuity_and_one_over_n_difference}
\end{align}
for some absolute constant $C_{\beta,\tilde{\gamma}}$ (i.e., independent of $n$).

Now let $0 \leq S < \phi$, and choose $0 < \delta < (u^{\mathrm{RIG}}+u_0^{\mathrm{DRIG}})(S)$, and $0 < \eps_\delta \leq (u^{\mathrm{RIG}}+u_0^{\mathrm{DRIG}})(S) - \delta$. 
Consider the events
\begin{equation}
\Delta_n(\delta)
= \bigl\{ \sup_{s \in [0,S]} \pnorm{(u_n^{\mathrm{RIG}},u_{0,n}^{\mathrm{DRIG}},w_n^{\mathrm{RIG}})(s) - (u^{\mathrm{RIG}},u_0^{\mathrm{DRIG}},w)(s) }{1} < \delta \bigr\}, 
\quad
\mathcal{E}_n(\eps)
= \bigl\{ \inf_{s \in [0,S]} (u_n^{\mathrm{RIG}}+u_{0,n}^{\mathrm{DRIG}})(s) \geq \eps \bigr\}.
\end{equation}
Critically, note that $\forall \omega \in \Delta_n(\delta) : \forall s \in [0,S], (u_n^{\mathrm{RIG}}+u_{0,n}^{\mathrm{DRIG}})(s) = u_n^{\mathrm{RIG}}(s)-u^{\mathrm{RIG}}(s) + u_{0,n}^{\mathrm{DRIG}}(s)-u_0^{\mathrm{DRIG}}(s) + u^{\mathrm{RIG}}(s)+u_0^{\mathrm{DRIG}}(s) \geq - |u_n^{\mathrm{RIG}}(s)-u^{\mathrm{RIG}}(s)| - |u_{0,n}^{\mathrm{DRIG}}(s)-u_0^{\mathrm{DRIG}}(s)| + u^{\mathrm{RIG}}(s)+u_0^{\mathrm{DRIG}}(s) \geq u^{\mathrm{RIG}}(s)+u_0^{\mathrm{DRIG}}(s) - \delta \geq (u^{\mathrm{RIG}}+u_0^{\mathrm{DRIG}})(S) - \delta \geq \eps_\delta > 0$ since $u^{\mathrm{RIG}}+u_0^{\mathrm{DRIG}}$ is nonincreasing, and by hypothesis. Hence $\Delta_n(\delta) \subseteq \mathcal{E}_n(\eps_\delta)$. 

Because of \eqref{eqn:Prelimit_integral_representation_of_unRIG}, \eqref{eqn:Prelimit_integral_representation_of_u0nDRIG}, \eqref{eqn:Prelimit_integral_representation_of_wnRIG}, and \eqref{eqn:Lipschitz_continuity_and_one_over_n_difference}, we have that for arbitrary $\eps > 0$, it holds on the event $\mathcal{E}_n(\eps)$ that 
\begin{align}
&
\sup_{s \in [0,S]} \pnorm{ (u_n^{\mathrm{RIG}},u_{0,n}^{\mathrm{DRIG}},w_n^{\mathrm{RIG}})(s) - (u^{\mathrm{RIG}},u_0^{\mathrm{DRIG}},w)(s) }{1}
\nonumber \\ &
\eqcom{\ref{eqn:Prelimit_integral_representation_of_unRIG},\ref{eqn:Prelimit_integral_representation_of_u0nDRIG},\ref{eqn:Prelimit_integral_representation_of_wnRIG}}\leq 
\sup_{s \in [0,S]} \frac{\lvert M_n^{u^{\mathrm{RIG}}}(\floor{n s}) \rvert}{n} 
+ \sup_{s \in [0,S]} \frac{\lvert M_n^{u_0^{\mathrm{DRIG}}}(\floor{n s}) \rvert}{n} 
+ \sup_{s \in [0,S]} \frac{\lvert M_n^{w}(\floor{n s}) \rvert}{n} 
\nonumber \\ &
\phantom{\leq}+ \sup_{s \in [0,S]} \int\limits_0^s 
\Bigl\lVert
\Bigl( 
\frac{u_n^{\mathrm{RIG}}(s) ( 1 + \tilde{\gamma}^2 u_n^{\mathrm{RIG}}(s) w_n^{\mathrm{RIG}}(s) ) \indicator{u_n^{\mathrm{RIG}}(s)+u_{0,n}^{\mathrm{DRIG}}(s) \geq 1/n}}{u_n^{\mathrm{RIG}}(s) + u_{0,n}^{\mathrm{DRIG}}(s)} - \frac{ u^{\mathrm{RIG}}(s) (1 + \tilde{\gamma}^2 u^{\mathrm{RIG}}(s) ) }{u^{\mathrm{RIG}}(s) + u_0^{\mathrm{DRIG}}(s)}
,
\nonumber \\ &
\phantom{\leq + \sup_{s \in [0,S]} \int\limits_0^s \Bigl\lVert \Bigl(}
\frac{u_n^{\mathrm{RIG}}(s) \indicator{u_n^{\mathrm{RIG}}(s)+u_{0,n}^{\mathrm{DRIG}}(s) \geq 1/n}}{u_n^{\mathrm{RIG}}(s) + u_{0,n}^{\mathrm{DRIG}}(s)} - \frac{u^{\mathrm{RIG}}(s)}{u^{\mathrm{RIG}}(s) + u_0^{\mathrm{DRIG}}(s)}
,
\nonumber \\ &
\phantom{\leq + \sup_{s \in [0,S]} \int\limits_0^s \Bigl\lVert \Bigl(}
\frac{u_n^{\mathrm{RIG}}(s) \indicator{u_n^{\mathrm{RIG}}(s)+u_{0,n}^{\mathrm{DRIG}}(s) \geq 1/n}}{u_n^{\mathrm{RIG}}(s) + u_{0,n}^{\mathrm{DRIG}}(s)} \tilde{\gamma} w_n^{\mathrm{RIG}}(s) - \frac{u^{\mathrm{RIG}}(s)}{u^{\mathrm{RIG}}(s) + u_0^{\mathrm{DRIG}}(s)} \tilde{\gamma} w(s)
\Bigr) 
\Bigr\rVert_1 \d{x} 
\nonumber \\ &
\leq \textnormal{martingale terms} + \int_0^S \sup_{s \in [0,x]} \pnorm{ h_n( u_n^{\mathrm{RIG}}(s),u_{0,n}^{\mathrm{DRIG}}(s), w_n^{\mathrm{RIG}}(s) ) - f(u^{\mathrm{RIG}}(s),u_0^{\mathrm{DRIG}}(s),w^{\mathrm{RIG}}(s)) }{1} \d{x}
\nonumber \\ &
\eqcom{\ref{eqn:Lipschitz_continuity_and_one_over_n_difference}}\leq \textnormal{martingale terms} + C_{\beta, \tilde{\gamma}} ) \indicator{ (u_n^{\mathrm{RIG}}+u_{0,n}^{\mathrm{DRIG}})(S) \in [\eps,1/n) } 
\nonumber \\ &
\phantom{\eqcom{\ref{eqn:Lipschitz_continuity_and_one_over_n_difference}}\leq}+ L_\eps \int_0^S \sup_{s \in [0,x]} \pnorm{ (u_n^{\mathrm{RIG}},u_{0,n}^{\mathrm{DRIG}},w_n^{\mathrm{RIG}})(s) - (u^{\mathrm{RIG}},u_0^{\mathrm{DRIG}},w^{\mathrm{RIG}})(s) }{1} \d{x},
\label{eqn:ug_ui__Bound_prior_to_Gronwall}
\end{align}
where the last inequality also relies on the fact that $u_n^{\mathrm{RIG}}+u_{0,n}^{\mathrm{DRIG}}$ is nonincreasing and $S \leq 1$. Applying Gr\"{on}wall's inequality to \eqref{eqn:ug_ui__Bound_prior_to_Gronwall} reveals that on the event $\mathcal{E}_n(\eps)$,
\begin{align}
&
\sup_{s \in [0,S]} \pnorm{ (u_n^{\mathrm{RIG}},u_{0,n}^{\mathrm{DRIG}},w_n^{\mathrm{RIG}})(s) - (u^{\mathrm{RIG}},u_0^{\mathrm{DRIG}},w^{\mathrm{RIG}})(s) }{1}
\nonumber \\ &
\leq \bigl( \textnormal{martingale terms} + C_{\beta,\tilde{\gamma}} \indicator{ (u_n^{\mathrm{RIG}}+u_{0,n}^{\mathrm{DRIG}})(S) \in [\eps,1/n) } \bigr) \e{L_\eps}.
\label{eqn:ug_ui__Gronwall_on_event_En}
\end{align}

We now tie the arguments up to \eqref{eqn:ug_ui__Gronwall_on_event_En} together: for any $0 \leq S < \phi$, $0 < \delta < (u^{\mathrm{RIG}}+u_0^{\mathrm{DRIG}})(S)$ and $0 < \eps_\delta \leq \delta$, all independent of $n$,
\begin{align}
&
\probability{ \sup_{s \in [0,S]} \pnorm{ (u_n^{\mathrm{RIG}},u_{0,n}^{\mathrm{DRIG}},w_n^{\mathrm{RIG}})(s) - (u^{\mathrm{RIG}},u_0^{\mathrm{DRIG}},w^{\mathrm{RIG}})(s) }{1} < \delta }
\nonumber \\ &
\eqcom{\ref{eqn:ug_ui__Gronwall_on_event_En}}\geq \probabilityBig{ \textnormal{martingale terms} + C_{\beta,\tilde{\gamma}} \indicator{ (u_n^{\mathrm{RIG}}+u_{0,n}^{\mathrm{DRIG}})(S) \in [\eps_\delta,1/n) } < \delta \e{-L_{\eps_\delta}} }
\quad \because \Delta_n(\delta) \subseteq \mathcal{E}_n(\eps_\delta)
\nonumber \\ &
\to 1 
\end{align}
as $n \to \infty$ by \refLemma{lem:Convergence_of_martingales} \emph{if} $\probability{ \mathcal{E}_n(\eps_\delta) } \to 1$. 

One final step therefore remains, that is, to prove that $\probability{ \mathcal{E}_n(\eps_\delta) } \to 1$. Assume that $n \eps > 1$. Note that
$
T^* > \lfloor S n \rfloor + n \eps
$
implies 
$
U^{\mathrm{RIG}}( \lfloor S n \rfloor ) + U_0^{\mathrm{DRIG}}( \lfloor S n \rfloor ) \geq n \eps.
$
Therefore
\begin{equation}
T_n^* > \frac{ \lfloor S n \rfloor }{n} + \eps 
\Rightarrow \mathcal{E}_n(\eps).
\end{equation}
Hence as long as $S + \eps < \phi$, then
\begin{equation}
\probability{ \mathcal{E}_n(\eps) }
\geq \probabilityBig{ T_n^* > \frac{ \lfloor S n \rfloor }{n} + \eps }
\geq \probability{ T_n^* > S + \eps } \to 1
\end{equation}
by \refTheorem{thm:Jamming_limit_converges_in_probability}. This proves the claim.

\noindent
\emph{Proof that for $0 \leq S < \phi$, $x_{0,n}^{\mathrm{DRIG}} \to x_0^{\mathrm{DRIG}}$ uniformly on $[0,S]$ in probability.}
Because $x_{0,n}^{\mathrm{DRIG}} = 1 - u_{0,n}^{\mathrm{DRIG}}$ and $x_0^{\mathrm{DRIG}} = 1 - u_0^{\mathrm{DRIG}}$, the claim follows from the arguments above.

\noindent
\emph{Proof that for $0 \leq S < \phi$, $x_{0,n}^{\mathrm{RIG}} \to x_0^{\mathrm{RIG}}$ and $x_{1+,n}^{\mathrm{RIG}} \to x_{1+}^{\mathrm{RIG}}$ uniformly on $[0,S]$ in probability.} The proof approach is almost identical to what was done to prove that $u_n^{\mathrm{RIG}} \to u^{\mathrm{RIG}}$ and $u_{0,n}^{\mathrm{DRIG}} \to u_0^{\mathrm{DRIG}}$ uniformly on $[0,S]$ in probability. The only necessary change is to observe first that the vector-valued function 
\begin{equation}
g(x_1,x_2,x_3,x_4,x_5) 
= \Bigl( 
f(x_1,x_2,x_3), 
\frac{x_1}{x_1+x_2} \e{ - \gamma x_3 (1 - \e{-\gamma(1-x_4-x_5)}) },
\frac{x_1}{x_1+x_2} \bigl( 1 - \e{ - \gamma x_3 (1 - \e{-\gamma(1-x_4-x_5)}) 
} \bigr)
\Bigr)
\end{equation}
is Lipschitz continuous on $\mathcal{D}_\eps^g = \{ x_1, x_2 \in [0,1] | x_1 + x_2 \geq \eps \} \times \left[0,\frac{\beta \sigma}{\sigma + e^{-c}}\right] \times [0,1]^2$, say with constant $K_\eps^g$, and continue the arguments from there (\emph{mutatis mutandis}). This concludes the proof. \qed

\section{Proof of \refTheorem{thm:Fluid_limit_of_Dt}}
\label{sec:Proof_of_time_rescaling_for_DRIG}

\refAppendixSection{sec:Proof_of_time_rescaling_for_DRIG} converts the fluid limit results of \refAppendixSection{sec:Proof_of_the_RIGplusOne_fluid_limit} to a fluid limit result for \gls{RSA} on a \gls{DRIG} by applying a nonlinear time transformation. Recall that for $t \in \{ 0, 1, \ldots, n \}$, 
\begin{equation}
X^{\mathrm{DRIG}}(t) = X_{1+}^{\mathrm{RIG}}(t) + X_0^{\mathrm{DRIG}}(t), 
\quad
U^{\mathrm{DRIG}}(t) 
= U_0^{\mathrm{DRIG}}(t) + U^{\mathrm{RIG}}(t) + X_0^{\mathrm{RIG}}(t) - X_0^{\mathrm{RIG}}(n), 
\end{equation}
and 
\begin{equation}
D^{\mathrm{DRIG}}(t) 
= U^{\mathrm{DRIG}}((X^{\mathrm{DRIG}})^\gets(t)).
\end{equation}

First, \refLemma{lem:Uniform_convergence_in_probability_of_the_scaled_processes} implies that for $0 \leq S < \phi$, $X^{\mathrm{DRIG}}(\lfloor s n \rfloor) / n$ converges to $x(s) = (x_{1+}^{\mathrm{RIG}}+x_0^{\mathrm{DRIG}})(s)$ uniformly on $[0,S]$ in probability as $n \to \infty$. Note furthermore that because $x$ is continuous and nondecreasing, its inverse exists; denote it by $x^\gets$. Recall that 
\begin{equation}
(X^{\mathrm{DRIG}})^\gets(u) 
= \inf \{ t \in \{ 0, 1, \ldots, n \} | X^{\mathrm{DRIG}}(t) = u \}
\quad
\textnormal{for}
\quad
u \in \{ 0, 1, \ldots, n^{\mathrm{DRIG}} \}.
\end{equation}
Let $r \in [0,1]$, and observe that
\begin{align}
(X^{\mathrm{DRIG}})^\gets( \lfloor r n^{\mathrm{DRIG}} \rfloor ) 
&
= \inf \{ t \in \{ 0, 1, \ldots, n \} | X^{\mathrm{DRIG}}(t) = \lfloor r n^{\mathrm{DRIG}} \rfloor \}
\nonumber \\ &
= \inf \{ \lfloor f n \rfloor | X^{\mathrm{DRIG}}(\lfloor f n \rfloor) = \lfloor r n^{\mathrm{DRIG}} \rfloor, f \in [0,1] \}
\nonumber \\ &
= \Bigl\lfloor n \inf \Bigl\{ f \in [0,1] \Big| \frac{ X^{\mathrm{DRIG}}(\lfloor f n \rfloor) }{n} = \frac{\lfloor r (n^{\mathrm{DRIG}}/n) n \rfloor}{n} \Bigr\} \Bigr\rfloor.
\end{align}
This implies that for any $0 \leq R \leq 1$, $\eps > 0$,
\begin{equation}
\probabilityBig{ \sup_{r \in [0,R]} \left\lvert \frac{(X^{\mathrm{DRIG}})^\gets( \lfloor rn^{\mathrm{DRIG}} \rfloor )}{ n } - x^\gets( rn^{\mathrm{DRIG}}/n ) \right\rvert \geq \eps }
\to 0
\label{eqn:Uniform_convergence_of_time_scaling}
\end{equation}
as $n^{\mathrm{DRIG}} \to \infty$. Second, note that \refLemma{lem:Uniform_convergence_in_probability_of_the_scaled_processes} implies that for $0 \leq S < \phi$, $U^{\mathrm{DRIG}}(\lfloor sn \rfloor) / n$ converges to $u^{\mathrm{DRIG}}(s) = (u_0^{\mathrm{DRIG}} + u^{\mathrm{RIG}} + x_0^{\mathrm{RIG}})(s) - \xi_0$ uniformly on $[0,S]$ in probability as $n \to \infty$. 

Let $0 \leq R < \eta$ and $r \in [0,R]$. Using the triangle inequality,
\begin{align}
\Bigl| \frac{D( \lfloor r n^{\mathrm{DRIG}} \rfloor )}{ n^{\mathrm{DRIG}} } - d(r) \Bigl|
&
= \Bigl| \frac{ U^{\mathrm{DRIG}}( (X^{\mathrm{DRIG}})^\gets( \lfloor rn^{\mathrm{DRIG}} \rfloor ) ) }{ n^{\mathrm{DRIG}} } - \frac{ u^{\mathrm{DRIG}}( x^\gets( r n^{\mathrm{DRIG}}/n ) }{ n^{\mathrm{DRIG}}/n } \Bigr|
\nonumber \\ &
\leq 
\Bigl| \frac{ U^{\mathrm{DRIG}}( (X^{\mathrm{DRIG}})^\gets( \lfloor rn^{\mathrm{DRIG}} \rfloor ) ) }{ n^{\mathrm{DRIG}} } - \frac{ U^{\mathrm{DRIG}}( n x^\gets(r n^{\mathrm{DRIG}}/n) }{ n^{\mathrm{DRIG}} } ) \Bigr| 
\nonumber \\ &
\phantom{\leq}
+ \Bigl| \frac{ U^{\mathrm{DRIG}}( n x^\gets(r n^{\mathrm{DRIG}}/n) ) }{ n (n^{\mathrm{DRIG}}/n) } - \frac{ u^{\mathrm{DRIG}}( x^\gets(rn^{\mathrm{DRIG}}/n) ) }{ n^{\mathrm{DRIG}}/n } \Bigr|.
\end{align}
By \eqref{eqn:Uniform_convergence_of_time_scaling} and continuity of $U^{\mathrm{DRIG}}(\cdot)$, the first term converges to zero in probability as $n^{\mathrm{DRIG}} \to \infty$; recall that $n^{\mathrm{DRIG}} \leq n$. The second term converges to zero by uniform convergence of $U^{\mathrm{DRIG}}(\lfloor s n \rfloor)/n$ to $u^{\mathrm{DRIG}}(s)$ as $n^{\mathrm{DRIG}} \to \infty$ since then also $n \to \infty$. This last step is allowed, because $x^\gets(r n^{\mathrm{DRIG}}/n)
< \phi$ as long as $r < \eta$.

\section{Simulation of the quantum mechanical system}\label{app:simulation}

Here, we briefly summarize our numerical implementation of the \gls{QMMRG} model. The simulation was created implemented in Matlab in order to efficiently solve the system of differential equations \eqref{eqn:Schrodingers_equation}. 

First of all, we uniformly random distribute the $N$ atoms over a two-dimensional torus. The positions of the atoms are saved in a matrix $r \in \mathbbm{R}^{N \times 2}$. The next step is to compute the state space. A state is represented by a logical vector $s \in \{0, 1\}^N$. The state space is then truncated to a set of reasonably reachable states to tackle the exponential increase of memory consumption in the number of atoms. Our numerical experiments have shown that only a fraction of the states in the state space actually contribute in the solution of the Schrödinger equation as Figure \ref{fig:probstate}(a) exemplifies. Consider for example the simulation with 8 atoms which has $2^8 = 256$ states in total. The state with the highest contribution has a probability of 10\% of occurring in the jamming limit. Moreover, with probability at least 99\% the system is in one of the first 44 states, which is only 17\% of the total number of states. To capture the majority of the behavior of the quantum system it therefore seems enough to only use a fraction of the state space.

\begin{figure}
\centering 
\subfloat[]{
  \includegraphics[width=0.46\textwidth]{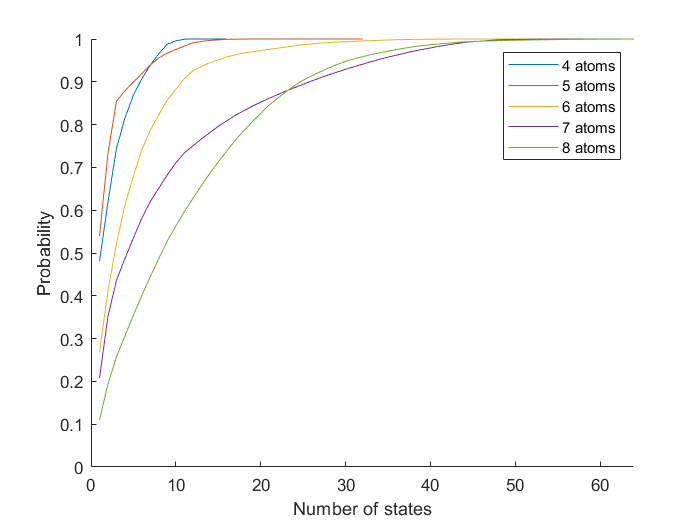}
}\qquad
\subfloat[]{
  \includegraphics[width=0.46\textwidth]{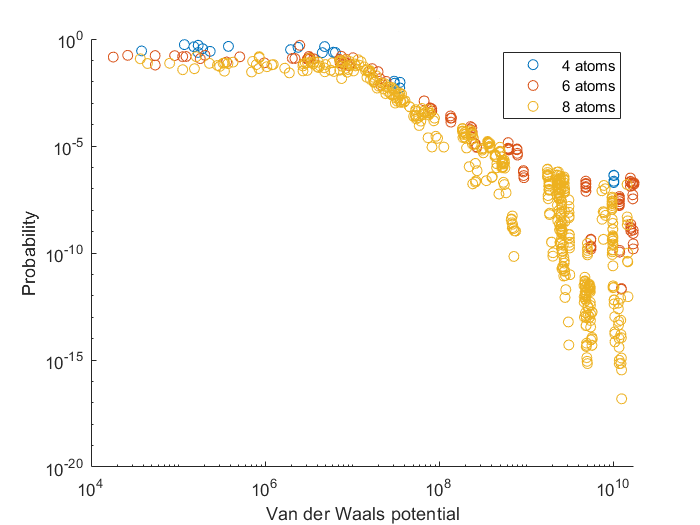}
}
\caption{Figure (a) shows the cumulative probability of states in the solution of the Schrödinger equation in the jamming limit without using state truncation for $\Omega = \SI{5.8}{\mega\hertz} $, $C_6 = \SI{50}{\giga\hertz \micro\meter^6}$, and $\rho = \SI{0.031}{\micro\meter^{-2}}$. The states are in order of decreasing probability. Figure (b) shows the probability of a state in the solution of the Schrödinger equation in the jamming limit without using state truncation over its Van der Waals potential.}
\label{fig:probstate}
\end{figure}

We also observe that the probability a state occurs is negatively correlated with its contribution to the Hamiltonian due to the Van der Waals interaction. Figure \ref{fig:probstate}(b) shows the probability a state occurs as a function of its Van der Waals potential. Clearly the probability diminishes for larger values of the Van der Waals potential. For our simulation we have truncated all states with a Van der Waals potential higher than $10^8$ resulting in a truncation of states with a probability of less than $10^{-4}$ of occurring. To verify we computed the maximum absolute difference between the mean number of excitations over time for a small truncated system and the statistics obtained without truncation and observed only fractional differences in all cases, confirming that the truncation does not influence our observations in any significant way.

After truncating the state space, each state is ordered lexicographically in the truncated state space such that it can equivalently be represented by an index $\pi(s) \in \mathbbm{N}$. We then compute the Hamiltonian matrix $H$. An element in the matrix $H_{i, j}$ is computed by determining the interaction energy of the states $s, s' \in S$ corresponding to the indices $i, j$. To compute the interaction energy the input parameters and the position matrix $r$ are used as stated in the main text. The Hamiltonian matrix results in the Schrödinger equation which we solve numerically as an ordinary system of differential equations.

\end{document}